\newcommand{\ntikzmark}[2]{#2\thinspace\tikz[overlay,remember picture,baseline=(#1.base)]{\node[inner sep=0pt] (#1) {};}}
\newcommand{\makebrace}[3]{\begin{tikzpicture}[overlay, remember picture]
        \draw [decoration={brace,amplitude=0.5em},decorate]
        let \p1=(#1), \p2=(#2) in
        ({max(\x1,\x2)}, {\y1+0.8em}) -- node[right=0.6em] {#3} ({max(\x1,\x2)}, {\y2});
    \end{tikzpicture}
}
\bf\color{blue}]{<*}{*>},
\newtheorem*{rep@theorem}{\rep@title}
\newcommand{\newreptheorem}[2]{\newenvironment{rep#1}[1]{\def\rep@title{#2 ##1}\begin{rep@theorem}\def\@currentlabel{##1}}{\end{rep@theorem}}}
\newtheorem*{theorem*}{Theorem}
\newtheorem*{corollary*}{Corollary}
\newcommand\computationtype[1]{\underline{#1}}
\newcommand\@TyAlph[1]{\ifcase #1\or \tau\or \sigma\or \rho\else \@ctrerr \fi }
\newcommand\ty[1][1]{{\@TyAlph{#1}}}
\newcommand\@CTyAlph[1]{\computationtype{\ifcase #1\or \tau\or \sigma\or \rho\else \@ctrerr \fi}}
\newcommand\cty[1][1]{{\@CTyAlph{#1}}}
\newcommand\tvar[1][1]{{\@TyVarAlph{#1}}}
\newcommand\@TyVarAlph[1]{\ifcase #1\or \alpha\or \beta\or \gamma\else \@ctrerr \fi }
\newcommand\var[1][1]{{\@VarAlph{#1}}}
\newcommand\@VarAlph[1]{\ifcase #1\or x\or y\or z\or u\or v\or w\else \@ctrerr \fi }
\newcommand\trm[1][1]{{\@TermAlph{#1}}}
\newcommand\@TermAlph[1]{\ifcase #1\or t\or s\or r\else \@ctrerr \fi }
\newcommand\val[1][1]{\ifcase #1\or v\or w\or u\else \@ctrerr \fi }
\newcommand\op[1][1]{\ifcase #1\or \mathsf{op}\or \mathsf{op}'\or \mathsf{op}''\else \@ctrerr \fi }
\newcommand\lop{\mathsf{lop}}
\newcommand\Op{\mathsf{Op}}
\newcommand\LOp{\mathsf{LOp}}
\newcommand\cnst{\underline{c}}
\newcommand\lcnst{\underline{lc}}
\newcommand\zero{\underline{0}}
\newcommand\sigmoid{\varsigma}
\newcommand\tSum{\mathrm{sum}}
\newcommand\lPair[2]{\mathbf{lpair}(#1,#2)}
\newcommand\lFst{\mathbf{lfst}\,}
\newcommand\lSnd{\mathbf{lsnd}\,}
\newcommand\lcomp{;_{\ell}}
\newcommand\bp[1]{\boldsymbol{(}#1\boldsymbol{)}}
\newcommand\tUnit{\tTuple{}}
\newcommand\tPair[2]{\langle #1, #2\rangle}
\newcommand\tTriple[3]{\langle #1, #2, #3\rangle}
\newcommand\tTuple[1]{\langle #1\rangle}
\newcommand\fun[1]{\lambda #1.\,}
\newcommand\lfun[1]{\underline{\lambda} #1.\,}
\newcommand\lapp[2]{#1{\bullet} #2}
\newcommand\letin[3]{\mathbf{let}\,#1=#2\,\mathbf{in}\,#3}
\newcommand\pletin[4]{\letin{\tPair{#1}{#2}}{#3}{#4}}
\newcommand\lLambda{\underline{\Lambda}}
\newcommand\projection{\pi}
\newcommand\injection{\iota}
\newcommand\CMon{\mathbf{CMon}}
\newcommand\tensMatch[5][\,]{\mathbf{case}\,#2\,\mathbf{of}#1\copower{#3}{#4}\To#5}
\newcommand\tZipWith{\mathbf{zipWith}}
\newcommand\tFst{\mathbf{fst}\,}
\newcommand\tSnd{\mathbf{snd}\,}
\newcommand\tProj[1]{\mathbf{proj}_{#1}\,}
\newcommand\tCoProj[1]{\mathbf{coproj}_{#1}\,}
\newcommand\idx[2]{\mathbf{idx}(#1; #2)\,}
\newcommand\lvar{\mathsf{v}}
\newcommand\Cat{\mathbf{Cat}}
\newcommand\ctx{\Gamma}
\newcommand\tinf{\vdash}
\newcommand\Ginf[3][]{\ctx #1\tinf #2 : #3}
\newcommand\subst[2]{#1{}[#2]}
\newcommand\sfor[2]{^{#2}\!/\!_{#1}}
\newcommand\copower[3][]{{!}#2\otimes_{#1}#3}
\newcommand\creals{\underline{\mathbf{real}}}
\newcommand\reals{\mathbf{real}}
\newcommand\Unit{\mathbf{1}}
\newcommand\t*{\boldsymbol{\mathop{*}}}
\newcommand\ListSym{\mathbf{List}}
\newcommand\MapSym{\mathbf{Copower}}
\newcommand\List[1]{\ListSym(#1)}
\newcommand\Map[2]{\MapSym(#1,#2)}
\newcommand\LinFunSym{\mathbf{LFun}}
\newcommand\LinFun[2]{\LinFunSym(#1,#2)}
\newcommand\To{\to}
\newcommand\bProd[2]{\bp{#1 \t* #2}}
\newcommand\tProd[3]{\bp{#1 \t* #2 \t* #3}}
\newcommand\EmptyList{\mathbf{[\,]}}
\newcommand\ListCons[2]{#1:: #2}
\newcommand\PlusList[2]{#1+\!\!+\, #2}
\newcommand\ListFold[5]{\mathbf{fold}\,#1\,\mathbf{over}\,#2\,\mathbf{in}\,#3\,\mathbf{from}\,#4=#5}
\newcommand\Dsynsymbol[1][]{\scalebox{0.8}{$\overrightarrow{\mathcal{D}}$}_{#1}}\newcommand\Dsyn[2][]{\Dsynsymbol[#1](#2)}
\newcommand\Dsynrevsymbol[1][]{\scalebox{0.8}{$\overleftarrow{\mathcal{D}}$}_{#1}}\newcommand\Dsynrev[2][]{\Dsynrevsymbol[#1](#2)}
\newcommand\CSyn{{\mathbf{CSyn}}}
\newcommand\LSyn{{\mathbf{LSyn}}}
\newcommand\Syn{\mathbf{Syn}}
\newcommand\tFromMaybe[1]{\mathrm{fromMaybe}}
\newcommand\tMap{\mathbf{map}}
\newcommand\freeeq[1]{\stackrel{\# #1}{=}}
\newcommand\beeq{\stackrel{\beta\eta}{=}}
\newcommand\bepeq{\!\stackrel{\beta\eta+}{=}\!}
\newcommand{\pushright}[1]{\ifmeasuring@#1\else\omit$\displaystyle#1$\ignorespaces\fi}
\newcommand\explainr[1]{&\pushright{\color{gray}\scriptsize\{\;\textnormal{#1}\;\}}}
\newcommand\citeappx[1]{\ifx\fossacsversion\undefined Appx.~#1\else\cite[Appx.~#1]{vakar2020reverse}\fi}
\definecolor{shade}{RGB}{223,223,223}
\definecolor{unshade}{RGB}{255,255,255}
\newtcbox{\shadebox}{on line,arc=1pt, outer arc=2pt,colback=shade,colframe=shade,boxsep=0pt,left=1pt,right=1pt,top=2pt,bottom=2pt,boxrule=0pt,bottomrule=1pt,toprule=1pt}
\newtcbox{\unshadebox}{on line,arc=1pt, outer arc=2pt,colback=unshade,colframe=shade,boxsep=0pt,left=1pt,right=1pt,top=2pt,bottom=2pt,boxrule=0pt,bottomrule=1pt,toprule=1pt}
\newcommand\syncat[1]{\mspace{-25mu}\synname{#1}}
\newcommand\synname[1]{\qquad\text{#1}}
\newenvironment{syntax}[1][]{\(
\begin{array}[t]{#1l@{\quad\!\!}*3{l@{}}@{\,}l}
}{
\end{array}
\)}
\newcommand\gdefinedby{::=}
\newcommand\gor{\mathrel{\lvert}}
\newcommand{\wCpo}{\mathbf{\boldsymbol\omega CPO}}
\def\MTrightharpoonupfill{\arrowfill@\relbar\relbar\rightharpoonup}
\def\MTleftharpoondownfill{\arrowfill@\leftharpoondown\relbar\relbar}
\def\MTleftharpoonupfill{\arrowfill@\leftharpoonup\relbar\relbar}
\def\MTrightharpoondownfill{\arrowfill@\relbar\relbar\rightharpoondown}
\newcommand*\xhookrightleftharpoons[2][]{\mathrel{\raise.22ex\hbox{$\lhook\joinrel\ext@arrow 0359\MTrightharpoonupfill{\phantom{#1}}{#2}$}\setbox0=\hbox{$\ext@arrow 3095\MTleftharpoondownfill{#1}{\phantom{\lhook\joinrel#2}}$}\kern-\wd0 \lower.22ex\box0}}
\newcommand*\xleftrighthookharpoons[2][]{\mathrel{\raise.22ex\hbox{$\ext@arrow 3095\MTleftharpoonupfill{\phantom{#1\mspace{15mu}}}{#2}$}\setbox0=\hbox{$\mathrel{\raise-.4837ex\hbox{$\lhook$}}\joinrel\ext@arrow 0359\MTrightharpoondownfill{#1}{\phantom{#2}}$}\kern-\wd0 \lower.22ex\box0}}
\newcommand\pair[2]{\parent{#1, #2}}
\newcommand\parent[1]{\left(#1\right)}
\newcommand\pair-[2]{(#1, #2)}
\newcommand{\lUnit}{\underline{\Unit}}
\newcommand{\CartSp}{\mathbf{CartSp}}
\newcommand{\Set}{\mathbf{Set}}
\newcommand\inv[1]{#1^{-1}}
\newcommand\inv+[1]{\parent{#1}^{-1}}
\newcommand\initial{\mathbb{0}}
\newcommand\terminal{\mathbb{1}}
\newcommand\isomorphic\cong
\newcommand{\sem}[1]{\llbracket #1\rrbracket}
\newcommand{\semgl}[1]{\llparenthesis #1\rrparenthesis^f}
\newcommand{\semglrev}[1]{\llparenthesis #1\rrparenthesis^r}
\newcommand{\RR}{\mathbb{R}}
\newcommand{\NN}{\mathbb{N}}
\newcommand\cat[1]{\mathcal{#1}}
\newcommand\catC{\cat{C}}
\newcommand\catS{\cat{S}}
\newcommand\catL{\cat{L}}
\newcommand\Dsemsymbol[1][]{\mathcal{T}_{#1}}\newcommand\Dsem[2][]{\Dsemsymbol[#1]#2}
\newcommand\Dsemrevsymbol[1][]{\mathcal{T}^*_{#1}}
\newcommand\Dsemrev[2][]{\Dsemrevsymbol[#1]#2}
\newcommand\ev[1][]{\mathbf{ev}^{#1}}
\newcommand\evRsymbol[1][]{\mathrm{evR}^{#1}}
\newcommandtwoopt\evR[3][][]{\evRsymbol[#2]_{#1}(#3)}
\newcommand\lamRsymbol[1][]{\mathrm{lamR}^{#1}}
\newcommandtwoopt\lamR[3][][]{\lamRsymbol[#2]_{#1}(#3)}
\newcommand{\Gl}[1][]{\scalebox{0.8}{$\overrightarrow{\mathbf{SScone}}$}_{#1}}
\newcommand{\GlRev}[1][]{\scalebox{0.8}{$\overleftarrow{\mathbf{SScone}}$}_{#1}}
\newcommand{\sPair}[2]{( #1, #2 )}
\newcommand\transpose[1]{{#1}^{t}}
\newcommand\leval[1]{{\mathbf {leval}}_{#1}}
\newcommand\lsing[1]{\{(#1,-)\}}
\newcommand\lcopowfold[1][{}]{{\mathbf {lcopowfold}}\,{#1}}
\newcommand\lswap[1][{}]{{\mathbf {lswap}}_{#1}}
\newcommand\linearid[1][{}]{{\mathbf {lid}}_{#1}}
\newcommand\id[1][{}]{{\rm id}_{#1}}
\newcommand\xto\xrightarrow
\newcommand*\bigcdot{\mathpalette\bigcdot@{.6}}
\newcommand*\bigcdot@[2]{\mathbin{\vcenter{\hbox{\scalebox{#2}{$\m@th#1\bullet$}}}}}
\newcommand\innerprod[2]{#1 \odot #2}
\newcommand\y{\mathbf{y}}
\newcommand\seq[2][]{\left(#2\right)_{#1}}
\newcommand\coseq[2][]{\left[#2\right]_{#1}}
\newcommand\set[1]{\left\{#1\right\}}
\newcommand\Domain[1]{\mathop{\rm Dom}\parent{#1}}
\renewcommand\lim{\mathrm{lim}}
\newcommand\ob[1]{\mathrm{ob}\,#1}
\newcommand{\defeq}{\stackrel {\mathrm{def}}=}
\newcommand\mvremark[1]{\pdfmargincomment[color=yellow]{MV: #1}}
\newcommand\tsremark[1]{\pdfmargincomment[color=yellow]{TS: #1}}
\newcommand\mvremark[1]{}
\newcommand\tsremark[1]{}
\newcommand\OKTheoremAddReferences[2]{
  \expandafter\newcommand\csname#1ref\endcsname[1]{#2~\ref{#1:##1}}
  \expandafter\newcommand\csname#1label\endcsname[1]{\label{#1:##1}}
  \WithSuffix\expandafter\newcommand\csname#1ref\endcsname*[1]{\ref{#1:##1}}
  \WithSuffix\expandafter\newcommand\csname#1label\endcsname+[1]{\hypertarget{#1+:##1}{}\zref@labelbyprops{#1:##1}{oktheoremfreetext}}
  \WithSuffix\expandafter\newcommand\csname#1ref\endcsname+[1]{\hyperlink{#1+:##1}{{{\let\ref\@refstar#2~\zref@extract{#1:##1}{oktheoremfreetext}}}}}
  \WithSuffix\expandafter\newcommand\csname#1ref\endcsname-[1]{\hyperlink{#1+:##1}{{\let\ref\@refstar{\zref@extract{#1:##1}{oktheoremfreetext}}}}}
}
\theoremstyle{definition}
\newtheorem{insight}{Insight}
\newtheorem*{exampleEnv*}{Example}
\newtheorem*{exampleEnv+}{Example \oktheorem@parameter}
\newenvironment{example*}{\begin{exampleEnv*}}{\qed\end{exampleEnv*}}
\newenvironment{example+}[1]{\def\oktheorem@parameter{#1}\begin{exampleEnv+}}{\qed\end{exampleEnv+}}
\newtheorem{thmx}{Theorem}
  \renewcommand\Domain[1]{\mathbf{Dom}(#1)}
  \newcommand\LDomain[1]{\mathbf{LDom}(#1)}
  \newcommand\CDomain[1]{\mathbf{CDom}(#1)}
\newcommand\ct[1]{\underline{#1}}
\newcommand\cRR{\ct{\RR}}
\newcommand\smooth{{differentiable}}
\newcommand\vGamma{\overline{\Gamma}} 
\begin{document}

\title{CHAD: Combinatory Homomorphic Automatic Differentiation}
\titlenote{This paper provides an extended version of \cite{vakar2020reverse},
augmenting it with
\begin{itemize}
\item examples and an interpretation of AD at higher-order types (notably in \S\ref{sec:key-ideas});
\item an explicit definition of the AD algorithm, directly phrased on a $\lambda$-calculus rather than categorical combinators (\S\ref{sec:ad-transformation});
\item a simplified semantics and correctness proof of the algorithm, based on sets rather than diffeological spaces, to make the paper more accessible (\S\ref{sec:semantics},\ref{sec:glueing-correctness});
\item proofs (\S\ref{sec:glueing-correctness});
\item an extended discussion of how to implement the proposed algorithm (\S\ref{sec:implementation});
\item a reference implementation of CHAD in Haskell (\S\ref{sec:implementation}), available at \href{https://github.com/VMatthijs/CHAD}{https://github.com/VMatthijs/CHAD} (under continuous improvement by Tom Smeding, Matthijs V\'ak\'ar, and others);
\item the concept of CHAD as a more broadly applicable technique for AD (and even other program analyses) on expressive functional languages (\S\ref{sec:scope});
\item various major rewrites throughout.
\end{itemize}
}

\author{Matthijs V\'ak\'ar}
\email{m.i.l.vakar@uu.nl}

\author{Tom Smeding}
\email{t.j.smeding@uu.nl}
\affiliation{\institution{Utrecht University}
\city{Utrecht}
\country{Netherlands}
}

\renewcommand{\shortauthors}{Matthijs V\'ak\'ar and Tom Smeding}

\begin{abstract}
We introduce Combinatory Homomorphic Automatic Differentiation (CHAD),
a principled, pure, provably correct define-then-run method for performing forward
and reverse mode automatic differentiation (AD) on programming languages with 
expressive features.
It implements AD as a compositional, type-respecting source-code transformation that generates purely functional code.
This code transformation is principled in the sense that it is the unique homomorphic (structure-preserving) extension to expressive languages of Elliott's well-known and unambiguous definitions of AD for a first-order functional language.
Correctness of the method follows by a (compositional) logical relations argument that shows that the semantics of the syntactic derivative is the usual calculus derivative of the semantics of the original program.

In their most elegant formulation, the transformations generate code with linear types.
However, the code transformations can be implemented in a standard functional language lacking linear types: while the correctness proof requires tracking linearity, the actual transformations do not.
In fact, even in a standard functional language, we can obtain all the type safety that linear types provide: we can implement all linear types used to type the transformations as abstract types, using a basic module system.

In this paper, we detail the method when applied to a simple higher-order language for manipulating statically sized arrays.
However, we explain how the methodology applies, more generally, to functional languages with other expressive features.
Finally, we discuss how the scope of CHAD extends beyond applications in AD to other dynamic program analyses that accumulate data in a commutative monoid.
\end{abstract}

\begin{CCSXML}
    <ccs2012>
       <concept>
           <concept_id>10003752.10010124.10010131.10010137</concept_id>
           <concept_desc>Theory of computation~Categorical semantics</concept_desc>
           <concept_significance>500</concept_significance>
           </concept>
       <concept>
           <concept_id>10002950.10003714.10003715.10003748</concept_id>
           <concept_desc>Mathematics of computing~Automatic differentiation</concept_desc>
           <concept_significance>500</concept_significance>
           </concept>
       <concept>
           <concept_id>10011007.10011006.10011008.10011009.10011012</concept_id>
           <concept_desc>Software and its engineering~Functional languages</concept_desc>
           <concept_significance>500</concept_significance>
           </concept>
     </ccs2012>
\end{CCSXML}

\ccsdesc[500]{Theory of computation~Categorical semantics}
\ccsdesc[500]{Mathematics of computing~Automatic differentiation}
\ccsdesc[500]{Software and its engineering~Functional languages}

\keywords{automatic differentiation, software correctness, denotational semantics, functional programming}

\maketitle

\section{Introduction}\label{sec:introduction}
Automatic differentiation (AD) is a technique for
transforming code that implements a function $f$ into code 
that computes $f$'s derivative, essentially by 
using the chain rule for derivatives.
Due to its efficiency and numerical stability, 
AD is the technique of choice whenever we need to compute derivatives of functions
that are implemented as programs, particularly in 
high-dimensional settings.
Optimization and Monte-Carlo integration 
algorithms, such as gradient descent and Hamiltonian Monte-Carlo methods,
rely crucially on the calculation of derivatives.
These algorithms are used in virtually every machine learning 
and computational statistics application, and the calculation 
of derivatives is usually the computational bottleneck.
These applications explain the recent surge of interest in AD,
which has resulted in the proliferation of popular AD systems such as TensorFlow \cite{abadi2016tensorflow},
PyTorch \cite{paszke2017automatic}, and Stan Math \cite{carpenter2015stan}.

AD, roughly speaking, comes in two modes: forward mode and reverse mode.
When differentiating a function $\RR^n\to \RR^m$, forward mode tends to 
be more efficient if $m\gg n$, while reverse mode is generally more efficient
if $n\gg m$.
As most applications reduce to optimization or Monte-Carlo integration of an 
objective function $\RR^n\to\RR$ with $n$ very large (at the time of this paper, on the order of $10^4-10^7$),
reverse mode AD is in many ways the more interesting algorithm \cite{baydin2014automatic}.

However, reverse AD is also more complicated to understand and implement than forward AD.
Forward AD can be implemented as a structure-preserving 
program transformation, even on languages with complex features \cite{shaikhha2019efficient}.
Thus, it admits an elegant proof of correctness \cite{hsv-fossacs2020}.
By contrast, reverse AD is only well-understood as a compile-time source-code transformation that does not use a run-time interpreter
(also called \emph{define-then-run} style AD) 
on limited programming languages, such as first-order functional languages.
Typically, its implementations on more expressive languages that have features 
such as higher-order functions use interpreted \emph{define-by-run}
approaches. These approaches first build a computation graph at run time,
effectively evaluating the program until a straight-line first-order program is left,
and then they perform automatic differentiation on this new program \cite{paszke2017automatic,carpenter2015stan}.
First, such approaches have the severe downside that they can suffer from interpretation overhead.
Second, the differentiated code cannot benefit as much from existing optimizing compiler architectures.
As a result, these AD libraries need to be implemented using carefully hand-optimized code that, for example, does not contain any common subexpressions.
This implementation process is precarious and labour-intensive.
Furthermore, some whole-program optimizations that a compiler would detect
go entirely unused in such systems.

Similarly, correctness proofs of reverse AD have taken a define-by-run approach 
or have relied on non-standard
operational semantics, using forms of symbolic execution \cite{abadi-plotkin2020,mak-ong2020,brunel2019backpropagation}.
Most work that treats reverse AD as a source-code transformation 
does so using complex transformations that introduce
mutable state and/or 
non-local control flow \cite{pearlmutter2008reverse,wang2018demystifying}.
As a result, it is unclear whether and why such techniques are correct.
Furthermore, AD applications (e.g. in machine learning) tend to be run on parallel hardware,
which can be easier to target with purely functional code.
Another approach has been to compile high-level languages to a low-level 
imperative representation first and then perform AD at that level \cite{innes2018don}, using mutation and jumps.
This approach has the downside that we might lose important opportunities for compiler 
optimizations, such as map-fusion and embarrassingly parallel maps,
which we can exploit if we perform define-then-run AD on a high-level functional representation.

A notable exception to these define-by-run and non-functional approaches to 
AD is Elliott's work \cite{elliott2018simple}, which presents an elegant, purely functional, 
define-then-run version of reverse AD.
Unfortunately, their techniques are limited to first-order programs over tuples of real numbers.
The workshop paper \cite{vytiniotis2019differentiable} by Vytiniotis, Belov, Wei, Plotkin, and Abadi proposes two possible extensions of 
 Elliott's functional AD to accommodate higher-order functions. 
However, it does not address whether or why these extensions would be correct or establish
 a more general methodology for applying AD to languages 
with expressive features.

This paper introduces Combinatory Homomorphic Automatic Differentiation (CHAD) and its proof of correctness.
CHAD is based on the observation that Elliott's work in \cite{elliott2018simple} has a unique structure-preserving
extension that lets us perform AD on various expressive programming language features.
We see purely functional higher-order (parallel) array processing languages such as Accelerate \cite{chakravarty2011accelerating} and Futhark \cite{henriksen2017futhark} as particularly relevant platforms for the machine learning 
applications for which AD tends to be used.
With that in mind, we detail CHAD when applied to higher-order functional programs over (primitive) arrays of reals.
This paper makes the following contributions:
\begin{itemize}
\item We introduce CHAD, a categorical perspective on AD, which lets us see AD as a uniquely determined homomorphic (structure-preserving) functor from the syntax of its source programming language (\S
\ref{sec:language}) to the syntax of its target language (\S
\ref{sec:minimal-linear-language}).
\item We explain, from this categorical setting, precisely in what sense reverse AD is the ``mirror image'' of forward AD (\S\ref{sec:self-dualization}).
\item We detail how this technique lets us define purely functional define-then-run reverse mode AD on a higher-order language (\S\ref{sec:ad-transformation}).
\item We present an elegant proof of the semantic correctness of the resulting AD transformations, based on a 
semantic logical relations argument, 
demonstrating that the transformations calculate the derivatives of the program in the usual mathematical sense (\S
\ref{sec:semantics} and \S\ref{sec:glueing-correctness}).
\item We show that the AD definitions and correctness proof are extensible to higher-order primitives such as a
$\tMap$-operation over our primitive arrays (\S\ref{sec:higher-order-primitives}).
\item We show how our techniques are readily implementable in standard functional languages 
to give purely functional, principled, semantically correct, compositional, define-then-run reverse mode~AD (\S\ref{sec:implementation}).
\item Finally, we place CHAD in a broader context and explain how it applies, more generally, to dynamic 
program analyses that accumulate information in a commutative monoid (\S\ref{sec:scope}).
\end{itemize}
We start by giving a high-level overview of the main insights and theorems of this paper in \S\ref{sec:key-ideas}.

For a review of the basics of AD, we refer the reader to \cite{baydin2014automatic,margossian2019review}.
We discuss recent related work studying AD from a programming languages perspective in \S\ref{sec:related-work}.
\section{Key ideas}\label{sec:key-ideas}
We start by providing a high-level overview of the paper, highlighting the main insights and theorems underlying our contributions. 

\subsection{Aims of Automatic Differentiation}
The basic challenge that automatic differentiation aims to solve 
is the following.
We are given a program $\var:\reals^n\vdash \trm:\reals^m$
that takes an $n$-dimensional array of (floating-point) real numbers 
as input and produces an $m$-dimensional array of reals as output.
That is, $\trm$ computes some mathematical function $\sem{\trm}:\RR^n\to \RR^m$.
We want to transform the code of $\trm$ into:\footnote{The program transformations are called $\Dsyn{-}_2$ and $\Dsynrev{-}_2$ here. In \S\S\ref{ssec:pairing-sharing} we discuss that it is better to define our actual program transformations to have a slightly different type. The second half of those transformations (defined in \S\S\ref{ssec:first-order-chad}) corresponds to these $\Dsyn{-}_2$ and $\Dsynrev{-}_2$.}
\begin{itemize}
    \item  a program $\Dsyn{\trm}_2$
that computes the derivative $D\sem{\trm}:\RR^n\to \cRR^n\multimap \cRR^m$, in the case of forward AD;
    \item a program $\Dsynrev{\trm}_2$ that computes the transposed 
    derivative $\transpose{D\sem{\trm}}:\RR^n\to\cRR^m\multimap \cRR^n$, in the case of reverse AD.
\end{itemize}
Here, we write $\cRR^n$ for the space of (co)tangent vectors to $\RR^n$; we regard $\cRR^n$ as a commutative monoid under elementwise addition. We write $\multimap$ for a linear function type to emphasize that derivatives are linear in the sense of being monoid homomorphisms.

Furthermore, we have the following desiderata for these code transformations:
\begin{enumerate}
\item we want these code transformations to be defined compositionally, so we can easily extend the source programming language we apply the transformations to with new primitives;
\item we want these transformations to apply to a wide range of programming techniques, so we are not limited in our programming style even if we want our code to be differentiated;
\item we want the transformations to generate purely functional code 
so we can easily prove its correctness and deploy it on parallel hardware;
\item we want the code size of $\Dsyn{\trm}_2$ and $\Dsynrev{\trm}_2$ to grow linearly in the size of $\trm$, so we can apply the technique to large codebases;
\item we want the time complexity of $\Dsyn{\trm}_2$ and $\Dsynrev{\trm}_2$ to be proportional to that of $\trm$ and, generally, as low as possible;
this means that we can use forward AD to efficiently compute a column of the Jacobian matrix of partial derivatives, while reverse AD efficiently computes a row of the Jacobian.
\end{enumerate}
In this paper, we demonstrate how the CHAD technique of automatic differentiation satisfies desiderata (1)-(4) -- we leave (5) to future work.
It achieves this by taking seriously the mathematical structure of programming languages as freely generated categories and by observing that differentiation is compositional according to the chain rule.

\subsection{The Chain Rule -- Pairing and Sharing of Primals and Derivatives}\label{ssec:pairing-sharing}
To achieve desideratum (1) of compositionality, it is tempting to examine the 
chain rule, the key compositionality property of derivatives.
Given $f:\RR^n\to\RR^m$, we write
\begin{align*}
\Dsem{f}:\;&\RR^n\to \RR^m\times (\cRR^n\multimap \cRR^m)\\
&x\;\; \mapsto\sPair{f(x)}{\; \;v\;\,\mapsto Df(x)(v)}
\end{align*}
for the function that pairs up the primal function value $f(x)$ with the derivative $Df(x)$ of $f$ at $x$ that acts on tangent vectors $v$.
The chain rule then gives the following formula for the derivative of the composition $f;g$ of $f$ and $g$:
$$
\Dsem{(f;g)}(x) = \sPair{\Dsem[1]{g}(\Dsem[1]{f}(x))}{\Dsem[2]{f}(x);\Dsem[2]{g}(\Dsem[1]{f}(x))},
$$
where we write $\Dsem[1]{f}\defeq  \Dsem{f};\pi_1$ and $\Dsem[2]{f}\defeq  \Dsem{f};\pi_2$ for the first and second components of $\Dsem{f}$, respectively.
We make two observations:
\begin{enumerate}
\item the derivative of the composition $f;g$ depends not only on the derivatives of $g$ and $f$ but also on the primal value of $f$;
\item the primal value of $f$ is used twice: once in the primal value of $f;g$ and once in its derivative; we want to share these repeated subcomputations, to address desiderata (4) and (5).
\end{enumerate}
\begin{insight}\textit{
It is wise to \emph{pair up} computations of primal function values and derivatives and to \emph{share} computation between them if we want to calculate derivatives of functions compositionally and efficiently.}
\end{insight}

Similarly, we can pair up $f$'s transposed (adjoint) derivative $\transpose{Df}$, which propagates cotangent rather than tangent vectors:
\begin{align*}
    \Dsemrev{f} : \;&\RR^n \to \RR^m\times (\cRR^m\multimap \cRR^n)\\
    &x\;\;\mapsto \sPair{f(x)}{\;\;v\;\,\,\mapsto \transpose{Df}(x)(v)}.
    \end{align*}
It then satisfies the following chain rule, which follows from the usual chain rule above together with the fact that $\transpose{(A;B)}=\transpose{B};\transpose{A}$ for linear maps $A$ and $B$ (transposition is contravariant -- note the resulting reversed order of $\Dsemrev[2]{f}$ and $\Dsemrev[2]{g}$ for reverse AD):
$$
\Dsemrev{(f;g)}(x) = \sPair{\Dsemrev[1]{g}(\Dsemrev[1]{f}(x))}{\Dsemrev[2]{g}(\Dsemrev[1]{f}(x));\Dsemrev[2]{f}(x)}.
$$
Again, pairing and sharing the primal and (transposed) derivative computations 
is beneficial.

CHAD directly implements the operations $\Dsemsymbol$ and $\Dsemrevsymbol$ as source-code transformations $\Dsynsymbol$ and $\Dsynrevsymbol$ on a functional language to implement forward\footnote{For forward AD, we can also choose to implement instead
\begin{align*}
\Dsemsymbol'f:\; &(\RR^n\times \RR^n)\to (\RR^m\times \RR^m)\\
&(x, v) \mapsto (f(x), Df(x)(v))
\end{align*}
together with its chain rule as code transformations.
This leads to a different style of forward AD based on a \emph{dual numbers representation}.
\cite{hsv-fossacs2020} gives an analysis of this style of forward AD, similar 
to the treatment of reverse AD and (non-dual number) forward AD in this paper.
Although forward AD with dual numbers is more memory-efficient and preferable in 
practical implementations, it does not have an obvious reverse-mode variant.
See \S\S\ref{ssec:dual-numbers-forward-ad} for more discussion.
} and reverse mode AD, respectively.
These code transformations are defined compositionally through structural induction on the syntax, by exploiting the chain rules above combined with the 
categorical structure of programming languages.

\subsection{CHAD on a First-Order Functional Language}\label{ssec:first-order-chad}
Here, we outline how CHAD looks when applied to programs written in a first-order functional language.
We treat this material as known because it is essentially the algorithm of \cite{elliott2018simple}.
However, we present it in terms of a $\lambda$-calculus rather than categorical combinators, by applying the well-known mechanical translations between the two formalisms \cite{curien1986categorical}. We hope that this presentation makes the algorithm easier to apply in practice.

We consider a source programming language (see \S\ref{sec:language}) where we write $\ty,\ty[2],\ty[3]$ for types that are either statically sized arrays of $n$ real numbers $\reals^n$ or tuples $\ty\t*\ty[2]$ of types $\ty,\ty[2]$.
These types will be called \emph{first-order types} in this section.\footnote{In the rest of the paper, we also consider the unit type $\Unit$ a first-order type. These types are also called \emph{ground types}.}
We consider programs $\trm$ of type $\ty[2]$ in a typing context $\Gamma=\var_1:\ty_1,\ldots,\var_n:\ty_n$, where $\var_i$ are identifiers.
We write such typings of programs in a context as $\Gamma\vdash\trm:\ty[2]$.
As long as our language has certain primitive operations (which we represent schematically)
$$
\inferrule{\Ginf{\trm_1}{\reals^{n_1}}\quad\cdots\quad \Ginf{\trm_k}{\reals^{n_k}}}{\Ginf{\op(\trm_1,\ldots,\trm_k)}{\reals^m}}
$$
such as constants (as nullary operations), (elementwise) addition and multiplication of arrays,
inner products and certain non-linear functions such as sigmoid functions, we can write complex programs by sequencing together such operations.
Fig. \ref{fig:first-order-ad-example} (a) and (b) give examples of programs we can write, where we write $\reals$ for $\reals^1$ and indicate shared subcomputations with $\mathbf{let}$-bindings.

CHAD transforms the types and programs of this source language into types and programs of a suitably chosen target language (see \S\ref{sec:minimal-linear-language}) that is a superset of the source language.
CHAD associates the following types to each source-language type $\ty$:
\begin{itemize}
    \item forward mode primal values $\Dsyn{\ty}_1$;\\
    we define $\Dsyn{\reals^n}_1\defeq\reals^n$ and $\Dsyn{\ty\t*\ty[2]}_1\defeq\Dsyn{\ty}_1\t*\Dsyn{\ty[2]}_1$; that is, for now $\Dsyn{\ty}_1=\ty$; 
    \item reverse mode primal values $\Dsynrev{\ty}_1$;\\ we define $\Dsynrev{\reals^n}_1\defeq\reals^n$ and $\Dsynrev{\ty\t*\ty[2]}_1\defeq\Dsynrev{\ty}_1\t*\Dsynrev{\ty[2]}_1$; that is, for now $\Dsynrev{\ty}_1=\ty$; 
    \item forward mode tangent values $\Dsyn{\ty}_2$;\\
    we define $\Dsyn{\reals^n}_2\defeq\creals^n$ and $\Dsyn{\ty\t*\ty[2]}_2\defeq\Dsyn{\ty}_2\t*\Dsyn{\ty[2]}_2$;
    \item reverse mode cotangent values $\Dsynrev{\ty}_2$;\\
    we define $\Dsynrev{\reals^n}_2\defeq\creals^n$ and $\Dsynrev{\ty\t*\ty[2]}_2\defeq\Dsynrev{\ty}_2\t*\Dsynrev{\ty[2]}_2$.
\end{itemize}
The types $\Dsyn{\ty}_1$ and $\Dsynrev{\ty}_1$ of primals are Cartesian types, which we can think of as denoting sets,
while the types $\Dsyn{\ty}_2$ and $\Dsynrev{\ty}_2$ are linear types that denote commutative monoids.
That is, such linear types in our language need to have a commutative monoid structure $(\zero,+)$.
For example, $\creals^n$ is the commutative monoid over $\reals^n$ where $\zero$ is the zero vector and $(+)$ is elementwise addition of vectors.
Derivatives and transposed derivatives are then linear functions, that is, homomorphisms of this $(\zero,+)$-monoid structure.
As we will see, we use the monoid structure to initialize and accumulate (co)tangents in the definition of CHAD.

We extend these operations $\Dsynsymbol$ and $\Dsynrevsymbol$ to act not only on types but also on typing contexts $\Gamma=\var_1:\ty_1,\ldots,\var_n:\ty_n$ to produce primal contexts and (co)tangent types:
\begin{align*}
\Dsyn{\var_1:\ty_1,\ldots,\var_n:\ty_n}_1&=\var_1:\Dsyn{\ty_1}_1,\ldots, \var_n:\Dsyn{\ty_n}_1 \qquad\qquad \Dsyn{\var_1:\ty_1,\ldots,\var_n:\ty_n}_2=\Dsyn{\ty_1}_2\t*\cdots\t*\Dsyn{\ty_n}_2\\ 
\Dsynrev{\var_1:\ty_1,\ldots,\var_n:\ty_n}_1&=\var_1:\Dsynrev{\ty_1}_1,\ldots, \var_n:\Dsynrev{\ty_n}_1 \qquad\qquad 
\Dsynrev{\var_1:\ty_1,\ldots,\var_n:\ty_n}_2=\Dsynrev{\ty_1}_2\t*\cdots\t*\Dsynrev{\ty_n}_2.
\end{align*}
To each program $\Gamma\vdash\trm:\ty[2]$, CHAD then associates programs that calculate the forward-mode and reverse-mode derivatives $\Dsyn[\vGamma]{\trm}$ and $\Dsynrev[\vGamma]{\trm}$, whose definitions use the list $\vGamma$ of identifiers that occur in $\Gamma$:
\begin{align*}
&\Dsyn{\Gamma}_1\vdash \Dsyn[\vGamma]{\trm} : \Dsyn{\ty[2]}_1\t* (\Dsyn{\Gamma}_2\multimap \Dsyn{\ty[2]}_2)\\
&\Dsynrev{\Gamma}_1\vdash \Dsynrev[\vGamma]{\trm} : \Dsynrev{\ty[2]}_1\t* ( \Dsynrev{\ty[2]}_2\multimap \Dsynrev{\Gamma}_2).
\end{align*}
Since each program $\trm$ computes a \smooth{} function $\sem{\trm}$ between Euclidean spaces when all primitive operations $\op$ are \smooth{}, the key property that we prove for these code transformations is that they actually calculate derivatives:
\begin{thmx}[Correctness of CHAD, Thm. \ref{thm:AD-correctness}]\label{thm:chad-correctness}
For any well-typed program (where $\ty_i$ and $\ty[2]$ are first-order types, i.e.\ $\reals^n$ and tuples of such types) $$\var_1:\ty_1,\ldots,\var_n:\ty_n\vdash {\trm}:\ty[2]$$
we have
$\sem{\Dsyn{\trm}}=\Dsem{\sem{\trm}}\;\text{ and }\;\sem{\Dsynrev{\trm}}=\Dsemrev{\sem{\trm}}.$
\end{thmx}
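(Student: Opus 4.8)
The plan is to prove both equations by a logical-relations argument realised as a categorical glueing (subscone) construction, rather than by a direct structural induction. The naive approach would observe that both $\sem{\Dsyn{-}}$ and $\Dsem{\sem{-}}$ are assignments out of the source syntax $\Syn$, regarded as a freely generated category, and then try to check that they agree on the generating primitive operations. This fails: although the input and output types $\ty_i,\ty[2]$ are first-order, the program $\trm$ may use higher-order features internally, and the semantic derivative operator $\Dsemsymbol$ is not a cartesian-closed functor — it does not commute with the formation of function types. Consequently the equation cannot be established generator-by-generator; we need an invariant that is preserved through higher-order intermediate computations and that collapses to the desired equation at first-order types.

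First I would introduce, for each source type $\ty$, a \emph{logical relation} (a glueing predicate) relating a smooth map $\gamma\colon\RR^k\to\sem{\ty}$ into the primal semantics with an element of the semantics of its CHAD transform. At a base type $\reals^n$ the predicate records exactly that this element equals $\Dsem{\gamma}$, i.e. the map $x\mapsto\sPair{\gamma(x)}{v\mapsto D\gamma(x)(v)}$; for the reverse transform one records $\Dsemrev{\gamma}$, built from the transposed derivative as in the definition of $\Dsemrevsymbol$. I would then extend the relation to all types by the standard glueing clauses — pointwise on products, and the usual exponential clause on function types $\ty\To\ty[2]$ — assembling a glued category $\cat{G}$ (resp. $\cat{G}^{\mathrm{rev}}$) that projects onto the primal semantics and whose morphisms are precisely the semantic maps preserving the derivative-correctness predicate. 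These categories are designed to carry all the bicartesian closed (and monoidal/linear) structure of the target, so that the source language admits an interpretation in them.

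The core of the proof is then the \emph{fundamental lemma}: the interpretation of $\Syn$ lifts to $\cat{G}$ along the projection. By the universal property of $\Syn$ as the freely generated category with the appropriate structure, it suffices to verify that each generator lifts, i.e. preserves the glueing predicate. The base case is the primitive operations $\op$, where correctness holds by construction of $\Dsyn{\op}$ together with the chain rule applied to the known derivative of $\op$. The structural cases for tupling, projections and, crucially, for \textbf{let}-bindings reduce to the chain-rule identities for $\Dsemsymbol$ and $\Dsemrevsymbol$ displayed in \S\S\ref{ssec:pairing-sharing}, with the sharing of primal subcomputations matching the pairing in those identities. Once the fundamental lemma holds, I would instantiate it at $\trm$ with the canonical related pair supplied by the identity/variable context; projecting to the target semantics and using that the relation at the first-order types $\ty_i,\ty[2]$ is tight (it pins the transformed value down to be exactly $\Dsem{\gamma}$) yields $\sem{\Dsyn{\trm}}=\Dsem{\sem{\trm}}$, and symmetrically $\sem{\Dsynrev{\trm}}=\Dsemrev{\sem{\trm}}$.

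I expect the main obstacle to be the function-type (exponential) clause of the glueing: showing that the derivative-correctness predicate is stable under currying and evaluation, so that $\cat{G}$ and $\cat{G}^{\mathrm{rev}}$ really are cartesian closed and the interpretation factors through them. This is exactly where the genuine content of ``why CHAD differentiates correctly at higher order'' lives, and where the linear types and the commutative-monoid structure of the (co)tangent types must be tracked with care. The bookkeeping is heaviest in the reverse case, where contravariance of transposition ($\transpose{(A;B)}=\transpose{B};\transpose{A}$) reverses the order of the derivative components, as already visible in the reverse chain rule and formalised through the self-dualization of \S\ref{sec:self-dualization}.
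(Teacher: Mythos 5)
Your proposal follows essentially the same route as the paper: a categorical logical-relations (subsconing) argument over the concrete set/commutative-monoid semantics, with the predicate at $\reals^n$ pinning the transformed value to be exactly the (transposed) derivative, the fundamental lemma obtained from the universal property of $\Syn$ plus Cartesian closure of the glued category (which, as you note, is where the real content lies), and tightness of the relation at first-order types yielding the theorem. The only cosmetic difference is that the paper glues along smooth curves $\RR\to\sem{\ty}$, so its final step represents (co)tangent vectors by curves (basis vectors in the reverse case) and invokes Boman's theorem for smoothness of $\sem{\trm}$, whereas you glue along maps out of $\RR^k$ and extract the result by instantiating at the identity of the context.
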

Once we fix a semantics for the source and target languages, we can show that this theorem holds if we define $\Dsynsymbol$ and $\Dsynrevsymbol$ on programs using the chain rule.
The proof proceeds by straightforward induction on the syntax.

For example, we can correctly define reverse mode CHAD on a first-order language as follows (see \S\ref{sec:ad-transformation}):
\begin{flalign*}
    &\Dsynrev[\vGamma]{\op(\trm_1,\ldots,\trm_k)} &&\defeq && \pletin{\var_1}{\var_1'}{\Dsynrev[\vGamma]{\trm_1}}{\cdots
     \pletin{\var_k}{\var_k'}{\Dsynrev[\vGamma]{\trm_k}}{\\
    &&&&&\tPair{\op(\var_1,\ldots,\var_k)}{\lfun\lvar \letin{\lvar}{\transpose{D\op}(\var_1,\ldots,\var_k;\lvar)}{\lapp{\var_1'}{(\tProj{1}{\lvar})}+\cdots+\lapp{\var_k'}{(\tProj{k}{\lvar})}}}}}
\\
&\Dsynrev[\vGamma]{\var} && \defeq &&  \tPair{\var}{\lfun{\lvar} \tCoProj{\idx{\var}{\vGamma}}(\lvar)}
\\
    &
\Dsynrev[\vGamma]{\letin{\var}{\trm}{\trm[2]}}  &&
\defeq &&
\pletin{\var}{\var'}{\Dsynrev[\vGamma]{\trm}}{
    \pletin{\var[2]}{\var[2]'}{\Dsynrev[\vGamma,\var]{\trm[2]}}{
        \tPair{\var[2]}{\lfun\lvar 
        \letin{\lvar}{\lapp{\var[2]'}{\lvar}}{
            \tFst\lvar+\lapp{\var'}{(\tSnd \lvar)}
        }}
    }}
\\&
\Dsynrev[\vGamma]{\tPair{\trm}{\trm[2]}} && \defeq && 
\pletin{\var}{\var'}{\Dsynrev[\vGamma]{\trm}}{ 
\pletin{\var[2]}{\var[2]'}{\Dsynrev[\vGamma]{\trm[2]}}{
\tPair{\tPair{\var}{\var[2]}}{\lfun\lvar \lapp{\var'}{(\tFst\lvar)} + \lapp{\var[2]'}{(\tSnd \lvar)}}}}
\\&
\Dsynrev[\vGamma]{\tFst\trm} &&\defeq && 
\pletin{\var}{\var'}{\Dsynrev[\vGamma]{\trm}}
{\tPair{\tFst\var}{\lfun\lvar \lapp{\var'}{\tPair{\lvar}{\zero}}}}
\\&
\Dsynrev[\vGamma]{\tSnd\trm} && \defeq && 
\pletin{\var}{\var'}{\Dsynrev[\vGamma]{\trm}}
{\tPair{\tSnd\var}{\lfun\lvar \lapp{\var'}{\tPair{\zero}{\lvar}}}}.
\end{flalign*}
Here, we write $\lfun\lvar\trm$ for a linear function abstraction (merely a notational convention -- it can simply be thought of as a plain function abstraction) and $\lapp{\trm}{\trm[2]}$ for a linear function application of $\trm : \ty \multimap \ty[2]$ to the argument $\trm[2] : \ty$ (which again can be thought of as a plain function application).
Furthermore,
given a program
$\trm$ of tuple type ${\cty[2]_1}\t*{\cdots}\t*{\cty[2]_n}$, we write $\tProj{i}{\trm}$ for its $i$-th projection of type $\cty[2]_i$.
Similarly,
given a program $ \trm$ of linear type $\cty[2]_i$, we write $\tCoProj{i}(\trm)$ for the $i$-th coprojection $\tTuple{\zero,\ldots,\zero,\trm,\zero,\ldots,\zero}
$ of type ${\cty[2]_1}\t*{\cdots}\t*{\cty[2]_n}$ and we write $\idx{\var_i}{\var_1,\ldots,\var_n}=i$ for the index of an identifier in a list of identifiers.  
Finally, $\transpose{D\op}$ here is a linear operation that implements the transposed derivative of the primitive operation $\op$.
We note that we crucially need the commutative monoid structure on linear types to correctly define the reverse mode derivatives of programs that involve tuples (or $n$-ary operations for $n\neq 1$).
Intuitively, matrix transposition (of derivatives) flips the copying-deleting comonoid structure provided by tuples into the addition-zero monoid structure.
\begin{insight}\textit{
    In functional define-then-run reverse AD,
    we need to have a commutative monoid structure on types of cotangents
    to mirror the comonoid structure coming from tuples:
    copying fan-out in the original program gets translated into
    fan-in in the transposed derivative, for accumulating incoming cotangents.
    This leads to linear types of cotangents.
    }
\end{insight}

Furthermore, observe that CHAD pairs up primal and (co)tangent values and shares common subcomputations, as desired. 
We see that what CHAD achieves is a compositional and efficient reverse mode AD algorithm that computes the (transposed) derivatives of a composite program in terms of the (transposed) derivatives $\transpose{D\op}$ of the basic building blocks $\op$.
Finally, it does so in a way that satisfies desiderata (1)-(4).

\begin{figure}[!h]
    \begin{subfigure}[b]{0.45\linewidth}
\begin{lstlisting}[mathescape=true]
let y = 2 * x     
    z = x * y   
    w = cos z   
    v = $\langle$y,z,w$\rangle$ in
    v
\end{lstlisting}
\subcaption{Original program $x:\reals\vdash\trm:\reals\t*\reals\t*\reals$ computing a function $\sem{\trm}:\RR\to \RR^3; x\mapsto v$.}
\end{subfigure}\hspace{0.08\linewidth}
\begin{subfigure}[b]{0.45\linewidth}
\begin{lstlisting}[mathescape=true]
let y = x1 * x4 + 2 * x2
    z = y * x3
    w = z + x4
    v = sin w in
    v     
\end{lstlisting}
\subcaption{Original program $x1:\reals,x2:\reals,x3:\reals,x4:\reals\vdash\trm[2]:\reals$ computing a function $\sem{\trm[2]}:\RR^4\to \RR;\\ (x1,x2,x3,x4)\mapsto v$.}
    \end{subfigure}
\\
\quad\\
\begin{subfigure}[b]{0.45\linewidth}

\begin{tabular}{ll}
\begin{lstlisting}[mathescape=true]
let y  = 2 * x
    z  = x * y   
    w  = cos z    
    v  = $\langle$y,z,w$\rangle$ in
$\langle$v, $\lambda$x'.
    let y' = 2 * x'
        z' = x' * y + x * y'
        w' = -sin z * z'
        v' = $\langle$y',z',w'$\rangle$ in 
      v'$\rangle$
    


\end{lstlisting}&\hspace{-18pt}
\begin{tabular}{l}
    \ntikzmark{S}{ }\\
    \\
    \\
    \ntikzmark{E}{ }\makebrace{S}{E}{primals}\\\\
    \ntikzmark{S}{ }\\\\
    \\
    \\
    \ntikzmark{E}{ }\makebrace{S}{E}{tangents}\\\\
    \\
    \\
\end{tabular}\hspace{-30pt}
\end{tabular}
    \subcaption{Forward AD transformed program $x:\reals\vdash \Dsyn{\trm}:(\reals\t*\reals\t*\reals)\t*(\creals\multimap \creals \t*\creals\t*\creals)$ computing the derivative
    $\sem{\Dsyn{\trm}}=\Dsem{\sem{\trm}}:\RR\to \RR^3\times (\cRR\multimap \cRR^3); x\mapsto (v,x' \mapsto v')$.\\\\\\\\}
    \end{subfigure}\hspace{0.08\linewidth}
\begin{subfigure}[b]{0.45\linewidth}
    \begin{tabular}{ll}
\begin{lstlisting}[mathescape=true]
let y   = x1 * x4
            + 2 * x2
    z   = y * x3
    w   = z + x4
    v   = sin w in
$\langle$v, $\lambda$v'.
    let w'  = cos w * v'
        z'  = w'
        y'  = x3 * z'
        x1' = y' * x4
        x2' = 2 * y'
        x3' = y * z'
        x4' = x1 * y'
               + w' in 
        $\langle$x1',x2',x3',x4'$\rangle$$\rangle$
\end{lstlisting} &\hspace{-18pt}   \begin{tabular}{l}
    \ntikzmark{S}{ }\\
    \\
    \\
    \\
    \ntikzmark{E}{ }\makebrace{S}{E}{primals}\\\\
    \ntikzmark{S}{ }\\\\
    \\
    \\
    \\
    \\
    \\
    \\
    \ntikzmark{E}{ }\makebrace{S}{E}{cotangents}\\
\end{tabular}\hspace{-30pt}
\end{tabular}
\subcaption{Reverse AD transformed program $x1:\reals,x2:\reals,x3:\reals,x4:\reals\vdash \Dsynrev{\trm[2]}:\reals\t*(\creals\multimap\creals\t*\creals\t*\creals\t*\creals)$ computing the transposed derivative
$\sem{\Dsynrev{\trm[2]}}=\Dsemrev{\sem{\trm[2]}}:\RR^4\to \RR\times (\cRR\multimap \cRR^4); (x1,x2,x3,x4)\mapsto (v, v' \mapsto (x1',x2',x3',x4'))$.
We see that the copying of $x4$ is translated into the addition of cotangents in $x4'$.}
\end{subfigure}
\caption{\label{fig:first-order-ad-example}Forward and reverse AD illustrated on simple first-order functional programs.
}
\end{figure}

For example, Fig. \ref{fig:first-order-ad-example} (c) and (d) display the code 
that forward and reverse mode CHAD, respectively, generate for the source programs in (a) and (b).
This is the code that is actually generated by the CHAD code transformations 
in our Haskell implementation followed by some very basic simplifications that do not affect time complexity and whose only purpose here is to aid legibility.
For more information about how exactly this code relates to the output one gets when applying the forward and reverse AD macros in this paper to the source programs, see Appendix \ref{appx:impl-simpl}.

\subsection{Intermezzo: the Categorical Structure of CHAD}
While this definition of CHAD on a first-order language straightforwardly follows 
from the mathematics of derivatives,
it is not immediately clear how it should be extended to source languages with more expressive features such as higher-order functions.
Indeed, we do not typically consider derivatives of higher-order functions in 
calculus. 
In fact, it is not even clear what a tangent or cotangent to a function type should be, or, for that matter, what a primal associated with a value of function type is.
To solve this mystery, we employ some category theory.

Observe that the first-order source language we consider can be viewed as a category $\Syn$ with products (see \S\ref{sec:language}): its objects are types $\ty,\ty[2],\ty[3]$
and morphisms $\trm\in\Syn(\ty,\ty[2])$ are programs $\var:\ty\vdash \trm:\ty[2]$ modulo standard $\beta\eta$-program equivalence (identities are given by variables and composition is done through $\mathbf{let}$-bindings).
This category is freely generated by the objects $\reals^n$ and morphisms $\op$
in the sense that any consistent assignment of objects $F(\reals^n)$ and morphisms $F(\op)$ in a category with products $\catC$ extends to a unique product-preserving functor $F:\Syn\to \catC$.

Suppose that we are given a categorical model $\catL:\catC^{op}\to \Cat$ of linear logic (a so-called locally indexed category -- see, for example, \cite[\S\S\S 9.3.4]{levy2012call}),
where we think of the objects and morphisms of $\catC$ as the semantics of Cartesian types and their programs and of the objects and morphisms of $\catL$ as the semantics of linear types and their programs.
We observe that we can define categories $\Sigma_{\catC}\catL$ and $\Sigma_{\catC}\catL^{op}$ (their so-called Grothendieck constructions, or $\Sigma$-types, see \S\ref{sec:self-dualization}) with 
objects that are pairs $(A_1, A_2)$ with $A_1$ an object of $\catC$ and $A_2$ an object of $\catL$ and homsets
\begin{align*}
    \Sigma_{\catC}\catL((A_1,A_2),(B_1,B_2))&\defeq \catC(A_1,B_1)\times \catL(A_1)(A_2, B_2)\cong \catC(A_1,B_1\times (A_2\multimap B_2))\\
    \Sigma_{\catC}\catL^{op}((A_1,A_2),(B_1,B_2))&\defeq  \catC(A_1,B_1)\times \catL(A_1)(B_2, A_2)\cong \catC(A_1,B_1\times (B_2\multimap A_2)).
    \end{align*}
We prove that these categories have finite products, provided that some conditions are satisfied: namely, that $\catC$ has finite products and $\catL$ has indexed finite biproducts (or equivalently: has indexed finite products and is enriched over commutative monoids).
Indeed, then $\prod_{i\in I}(A_{1i},A_{2i})=(\prod_{i\in I}A_{1i}, \prod_{i\in I}A_{2i})$.
In other words, it is sufficient if our model of linear logic is biadditive.
In particular, the categorical model of linear logic that we can build from the syntax of our target language for CHAD, $\LSyn:\CSyn^{op}\to\Cat$, satisfies our conditions (in fact, it is the initial model that does so), so $\Sigma_\CSyn\LSyn$ and $\Sigma_\CSyn\LSyn^{op}$ have finite products.
By the universal property of the source language $\Syn$, we obtain a canonical definition of CHAD.
\begin{thmx}[CHAD from a universal property, Cor. \ref{cor:chad-definition}] \label{thm:chad-universal-property}
Forward and reverse mode CHAD are the unique structure-preserving functors
\begin{align*}
    &\Dsyn{-}:\Syn\to \Sigma_{\CSyn}\LSyn &&\Dsynrev{-}:\Syn\to \Sigma_{\CSyn}\LSyn^{op}
\end{align*}
from the syntactic category $\Syn$ of the source language to the (opposite) Grothendieck construction of the target language $\LSyn:\CSyn^{op}\to \Cat$
that send primitive operations $\op$ to their derivative $D\op$ and transposed 
derivative $\transpose{D\op}$, respectively.
\end{thmx}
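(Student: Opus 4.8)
The plan is to obtain both functors in one stroke from the universal property of $\Syn$ as the free category with finite products on the generating objects $\reals^n$ and generating morphisms $\op$. That universal property (recalled above) states that for any category $\catC$ with finite products, every consistent assignment of an object to each $\reals^n$ and a morphism to each $\op$ extends to a unique finite-product-preserving functor $\Syn\to\catC$. Hence, granting the structural result established above that the two target categories $\Sigma_{\CSyn}\LSyn$ and $\Sigma_{\CSyn}\LSyn^{op}$ have finite products (which rests on $\LSyn$ having indexed finite biproducts), it suffices to say where the generators go and to check that these choices are consistent, i.e.\ well-typed.

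Concretely, I would define the action on generators by
\begin{align*}
\text{forward:}\quad & \reals^n\mapsto(\reals^n,\creals^n), & \op&\mapsto D\op,\\
\text{reverse:}\quad & \reals^n\mapsto(\reals^n,\creals^n), & \op&\mapsto\transpose{D\op}.
\end{align*}
To verify consistency, recall that a primitive has the shape $\op:\reals^{n_1}\times\cdots\times\reals^{n_k}\to\reals^m$ and that products in the Grothendieck constructions are computed componentwise, so the assigned domain is $(\reals^{n_1}\times\cdots\times\reals^{n_k},\;\creals^{n_1}\times\cdots\times\creals^{n_k})$ and the assigned codomain is $(\reals^m,\creals^m)$. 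Unwinding the homset
$$\Sigma_{\CSyn}\LSyn((A_1,A_2),(B_1,B_2))=\CSyn(A_1,B_1)\times\LSyn(A_1)(A_2,B_2),$$
a morphism of the required type is exactly a pair consisting of the primal $\op$ itself together with a linear map $\creals^{n_1}\times\cdots\times\creals^{n_k}\multimap\creals^m$ indexed over $\reals^{n_1}\times\cdots\times\reals^{n_k}$; this is precisely the derivative $D\op$. For reverse mode the homset of $\Sigma_{\CSyn}\LSyn^{op}$ reverses the direction of the linear component, so the well-typed choice is the transposed derivative $\transpose{D\op}:\creals^m\multimap\creals^{n_1}\times\cdots\times\creals^{n_k}$, which is exactly the datum assumed given.

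Having checked consistency, the universal property delivers at once the existence and the uniqueness of finite-product-preserving functors $\Dsyn{-}:\Syn\to\Sigma_{\CSyn}\LSyn$ and $\Dsynrev{-}:\Syn\to\Sigma_{\CSyn}\LSyn^{op}$ extending these assignments, which is the statement. Uniqueness is immediate because any structure-preserving functor agreeing with these choices on generators must coincide with the extension the universal property produces.

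I expect the genuine work to lie not in this corollary but in its prerequisite: proving that the Grothendieck constructions are Cartesian is where the biadditivity of $\LSyn$ is genuinely used, and where one must show that the componentwise products really are products for the twisted homsets displayed above. Granting that, the only residual subtlety is bookkeeping --- one may wish to confirm that unfolding identities, composition and pairing in $\Sigma_{\CSyn}\LSyn$ (resp.\ $\Sigma_{\CSyn}\LSyn^{op}$) reproduces the chain-rule clauses of \S\ref{ssec:first-order-chad} --- but this is automatic, since composition in these Grothendieck constructions is defined precisely so as to encode the (transposed) chain rule.
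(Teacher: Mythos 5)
Your overall strategy --- freeness of $\Syn$ plus structure of the Grothendieck constructions plus a well-typedness check on generators --- is the paper's strategy, but you have invoked the wrong universal property, and this leaves a genuine gap. The source language of \S\ref{sec:language} is higher-order: $\Syn$ contains function types $\ty\To\ty[2]$, $\lambda$-abstraction and application, and Prop.~\ref{prop:universal-property-syn} presents it as the \emph{free Cartesian closed category} on the generators $\reals^n$ and $\op$, not as the free category with finite products. Consequently, knowing only that $\Sigma_{\CSyn}\LSyn$ and $\Sigma_{\CSyn}\LSyn^{op}$ have finite products does not let you extend the generator assignments to all of $\Syn$: the extension must also act on exponential objects and on $\lambda$-abstractions and applications, and a free CCC admits no canonical (let alone unique) product-preserving functor into a mere finite-product category determined by its values on generators. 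As written, your argument only defines and characterizes the functors on the first-order fragment, whereas the statement (Cor.~\ref{cor:chad-definition}) concerns the whole higher-order language, and ``structure preserving'' there means Cartesian closed.

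The missing ingredient is precisely the Cartesian \emph{closure} of the two Grothendieck constructions, i.e.\ Theorems~\ref{thm:grothendieck-ccc-covariant} and~\ref{thm:grothendieck-ccc-contravariant}, with exponentials $(A_1,A_2)\Rightarrow(B_1,B_2)=(A_1\Rightarrow (B_1\times(A_2\multimap B_2)),\,A_1\Rightarrow B_2)$ in $\Sigma_{\CSyn}\LSyn$ and $(A_1\Rightarrow (B_1\times(B_2\multimap A_2)),\,\copower{A_1}{B_2})$ in $\Sigma_{\CSyn}\LSyn^{op}$. Establishing these needs strictly more of $\LSyn$ than the biadditivity you cite for products: one also needs linear $\Rightarrow$-types (powers), $\copower{(-)}{(-)}$-types (copowers) and Cartesian $\multimap$-types. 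With those theorems in hand, the rest of your argument goes through essentially verbatim: the assignments $\reals^n\mapsto(\reals^n,\creals^n)$ and $\op\mapsto(\op,D\op)$ (resp.\ $(\op,\transpose{D\op})$) are well-typed morphisms in the displayed homsets, and the CCC universal property of Prop.~\ref{prop:universal-property-syn} delivers existence and uniqueness of $\Dsyn{-}$ and $\Dsynrev{-}$ at once. So your concluding remark that the genuine work lies in the prerequisite is right, but the prerequisite is Cartesian closure, not mere Cartesianness, of the Grothendieck constructions; as proposed, your proof cannot produce the functors on function types at all.
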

\noindent The definitions that follow from this universal property reproduce the definitions of CHAD that we have given so far.
Intuitively, the linear types represent commutative monoids, implementing the idea that (transposed) derivatives are linear functions in the sense that $Df(x)(0)=0$
and $Df(x)(v+v')=Df(x)(v)+Df(x)(v')$.
We have seen that this commutative monoid structure is important when writing down the definitions of AD as a source-code transformation.

Since a higher-order language can be viewed as a freely generated \emph{Cartesian closed} category $\Syn$, it is tempting to find a suitable target language such that 
$\Sigma_\CSyn\LSyn$ and $\Sigma_\CSyn\LSyn^{op}$ are Cartesian closed.
Then, we can \emph{define} CHAD on this higher-order language via Thm. \ref{thm:chad-universal-property}.
\begin{insight}
    \textit{
            To understand how to perform CHAD on a source 
            language with a language feature $X$ (e.g., higher-order functions),
            we need to understand the categorical semantics of language feature $X$ (e.g., categorical exponentials) in 
            categories of the form $\Sigma_\catC\catL$ and $\Sigma_\catC\catL^{op}$.
            Giving sufficient conditions on a model of linear logic $\catL:\catC^{op}\to\Cat$ for such a semantics to exist yields a 
            suitable target language for CHAD as the initial such model $\LSyn:\CSyn^{op}\to\Cat$, with the definition of the algorithm 
            following from the universal property of the source language.
    }
\end{insight}

\subsection{Cartesian Closure of $\Sigma_\catC\catL$ and $\Sigma_\catC\catL^{op}$
and CHAD of Higher-Order Functions}
With this insight, we identify conditions on a locally indexed category $\catL:\catC^{op}\to \Cat$ that are enough to guarantee that $\Sigma_\catC\catL$ and $\Sigma_\catC\catL^{op}$ are Cartesian closed (see \S\ref{sec:self-dualization}).
\begin{thmx}[Cartesian Closure of $\Sigma_\catC\catL$ and $\Sigma_\catC\catL^{op}$, Thm. \ref{thm:grothendieck-ccc-covariant}, \ref{thm:grothendieck-ccc-contravariant}]\label{thm:ccc-groth}
Suppose that a locally indexed category $\catL:\catC^{op}\to \Cat$
supports (we are intentionally a bit vague here for the sake of legibility)
\begin{itemize}
    \item linear $\copower{(-)}{(-)}$-types (copowers);
    \item linear $(-)\Rightarrow (-)$-types (powers);
    \item Cartesian $(-)\multimap(-)$-types (types of linear functions);
    \item linear biproduct types (or equivalently, linear (additive) product types and enrichment of $\catL$ over commutative monoids);
    \item Cartesian tuple and function types. 
\end{itemize}
Then, $\Sigma_\catC\catL$ and $\Sigma_\catC\catL^{op}$ are Cartesian closed with, respectively, exponentials:
\begin{align*}
(A_1,A_2)\Rightarrow_{\Sigma_\catC\catL} (B_1, B_2) = (A_1\Rightarrow B_1\times (A_2\multimap B_2), A_1\Rightarrow B_2) \\
(A_1,A_2)\Rightarrow_{\Sigma_\catC\catL^{op}} (B_1, B_2) = (A_1\Rightarrow B_1\times (B_2\multimap A_2), \copower{A_1}{ B_2}).
\end{align*}
\end{thmx}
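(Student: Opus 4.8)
The plan is to establish Cartesian closure by exhibiting, for each fixed object $(B_1,B_2)$, a right adjoint to the product functor $-\times(B_1,B_2)$ whose value at $(C_1,C_2)$ is the claimed exponential. Concretely, I would construct a family of bijections
\[
\Sigma_\catC\catL\big((A_1,A_2)\times(B_1,B_2),(C_1,C_2)\big)\;\cong\;\Sigma_\catC\catL\big((A_1,A_2),(B_1,B_2)\Rightarrow(C_1,C_2)\big),
\]
natural in $(A_1,A_2)$ and $(C_1,C_2)$, and symmetrically for $\Sigma_\catC\catL^{op}$. Since the finite products of these Grothendieck constructions were already computed using that $\catL$ is biadditive, I may take $(A_1,A_2)\times(B_1,B_2)=(A_1\times B_1,A_2\oplus B_2)$, with $\oplus$ the biproduct in $\catL$, in both categories.

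For the covariant case I would unfold the left-hand hom-set as $\catC(A_1\times B_1,C_1)\times\catL(A_1\times B_1)(A_2\oplus B_2,C_2)$ and rewrite it through a chain of the structural isomorphisms supplied by the hypotheses. Cartesian closure of $\catC$ turns the first factor into $\catC(A_1,B_1\Rightarrow C_1)$; the fact that $\oplus$ is a coproduct (biproduct) in $\catL$ splits the second factor as $\catL(A_1\times B_1)(A_2,C_2)\times\catL(A_1\times B_1)(B_2,C_2)$. The power adjunction $\catL(A_1)(A_2,B_1\Rightarrow C_2)\cong\catL(A_1\times B_1)(A_2,C_2)$ folds the first of these into $\catL(A_1)(A_2,B_1\Rightarrow C_2)$, matching the second exponential component, while the linear-function adjunction $\catC(A_1\times B_1,B_2\multimap C_2)\cong\catL(A_1\times B_1)(B_2,C_2)$ together with the product and exponential of $\catC$ packages the remaining factors into $\catC(A_1,B_1\Rightarrow(C_1\times(B_2\multimap C_2)))$, the first exponential component. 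This reproduces the stated $\Rightarrow_{\Sigma_\catC\catL}$-object.

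The contravariant case runs symmetrically, now exploiting that $\oplus$ is also a product and that the copower is the mirror image of the power. Here the left-hand hom-set is $\catC(A_1\times B_1,C_1)\times\catL(A_1\times B_1)(C_2,A_2\oplus B_2)$; maps into the biproduct split as $\catL(A_1\times B_1)(C_2,A_2)\times\catL(A_1\times B_1)(C_2,B_2)$, the copower adjunction $\catL(A_1)(\copower{B_1}{C_2},A_2)\cong\catL(A_1\times B_1)(C_2,A_2)$ folds the first into the second exponential component $\copower{B_1}{C_2}$, and the linear-function adjunction $\catC(A_1\times B_1,C_2\multimap B_2)\cong\catL(A_1\times B_1)(C_2,B_2)$ together with $\catC$'s closure assembles the rest into $\catC(A_1,B_1\Rightarrow(C_1\times(C_2\multimap B_2)))$.

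The routine part is this sequence of adjunction rewrites; the substantive part is verifying that the composite bijection is genuinely natural in both $(A_1,A_2)$ and $(C_1,C_2)$, i.e.\ compatible with pre- and post-composition by $\Sigma_\catC\catL$-morphisms, each of which carries both a $\catC$-component and a $\catL$-component. Because a $\catC$-morphism acts on the $\catL$-factors only through reindexing, the crux is that every connective invoked — the biproduct, the power, the copower, and the linear-function type — is stable under reindexing along $\catC$-morphisms, a Beck--Chevalley/naturality condition bundled into what it means for $\catL$ to \emph{support} these types. I expect verifying this reindexing-stability, and correctly tracking the reversal of direction in the $\catL$-component for the $\Sigma_\catC\catL^{op}$ case, to be the main obstacle; once the bijections are known to be natural, the standard argument identifies the exhibited objects as the exponentials and hence proves Cartesian closure.
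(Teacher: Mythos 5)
Your proposal matches the paper's own proof: Theorems \ref{thm:grothendieck-ccc-covariant} and \ref{thm:grothendieck-ccc-contravariant} establish exactly these hom-set bijections by the same chain of isomorphisms (splitting along the biproduct, the power/copower adjunctions for $\catL(\pi_1)$, representability of Cartesian $\multimap$-types, and products and closure in $\catC$), with naturality coming precisely from the reindexing-stability/Beck--Chevalley conditions you identify, which the paper bundles into its definition of a categorical model of the target language. The only cosmetic slip is your early claim that Cartesian closure turns the first factor into $\catC(A_1,B_1\Rightarrow C_1)$ --- as your own later sentence makes clear, $\catC(A_1\times B_1,C_1)$ must instead first be paired with $\catC(A_1\times B_1,B_2\multimap C_2)$ into $\catC(A_1\times B_1,C_1\times(B_2\multimap C_2))$ and only then curried.
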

In particular, if we extend our target language with (linear) powers, (linear) copowers and (Cartesian) function types, then $\LSyn:\CSyn^{op}\to \Cat$ satisfies the conditions of Thm. \ref{thm:ccc-groth}, so we can extend Thm.
\ref{thm:chad-universal-property} to our higher-order source language.
In particular, we find the following definitions of CHAD for primals and (co)tangents to function types:
\begin{align*}
&\Dsyn{\ty\To\ty[2]}_1\defeq \Dsyn{\ty}_1\To \Dsyn{\ty[2]}_1\t*(\Dsyn{\ty}_2\multimap \Dsyn{\ty[2]}_2)
&& \Dsyn{\ty\To\ty[2]}_2\defeq\Dsyn{\ty}_1\To\Dsyn{\ty[2]}_2\\
&\Dsynrev{\ty\To\ty[2]}_1\defeq \Dsynrev{\ty}_1\To \Dsynrev{\ty[2]}_1\t*(\Dsynrev{\ty[2]}_2\multimap \Dsynrev{\ty}_2)
&& \Dsynrev{\ty\To\ty[2]}_2\defeq\copower{\Dsynrev{\ty}_1}{\Dsynrev{\ty[2]}_2}.
\end{align*}
Interestingly, we see that for higher-order programs, the primal transformations are no longer the identity.
Indeed, the primals $\Dsyn{\ty\To\ty[2]}_1$ and $\Dsynrev{\ty\To\ty[2]}_1$ of the function type $\ty\To\ty[2]$ store not only the primal function itself, but also its derivative with respect to its argument.
The other half of a function's derivative, namely the derivative with respect to the context variables over which it closes, is stored in the tangent space $\Dsyn{\ty\To\ty[2]}_2$ and cotangent space $\Dsynrev{\ty\To\ty[2]}_2$ of the function type $\ty\To\ty[2]$.
\begin{insight}
\textit{A forward (respectively, reverse) mode primal to a function type $\ty\to\ty[2]$ keeps 
track of both the function and its derivative with respect to its argument (respectively, transposed derivative).
For reverse AD, a cotangent at function type $\ty\to\ty[2]$ (to be propagated back to the enclosing context of the function) 
keeps track of the incoming cotangents $v$ of type $\Dsynrev{\ty[2]}_2$
for each primal $x$ of type $\Dsynrev{\ty}_1$ on which we call the function.
We store these pairs $(x,v)$ in the type $\copower{\Dsynrev{\ty}_1}{\Dsynrev{\ty[2]}_2}$
(which we will see is essentially a quotient of a list of
pairs of type $\Dsynrev{\ty}_1\t*\Dsynrev{\ty[2]}_2$).
Less surprisingly, for forward AD, a tangent at function type $\ty\to\ty[2]$ (propagated forward from the enclosing context of the function)
consists of a function
sending each argument primal of type $\Dsyn{\ty}_1$ to the outgoing tangent of type $\Dsyn{\ty[2]}_2$.}
\end{insight}
On programs, we obtain the following extensions of our definitions for reverse AD:
\begin{flalign*}&   
\Dsynrev[\vGamma]{\fun\var\trm} &\hspace{-12pt}\phantom{.}&\defeq  \;
\letin{\var[2]}{\fun\var\Dsynrev[\vGamma,\var]{\trm}}{
\tPair{\fun\var
\pletin{\var[3]}{\var[3]'}{\var[2]\,\var}{
\tPair{\var[3]}{\lfun\lvar\tSnd(\lapp{\var[3]'}{\lvar})}}}
{\lfun\lvar  \tensMatch{\lvar}{\var}{\lvar}{
    \tFst(\lapp{(\tSnd(\var[2]\,\var))}{\lvar})} }
}\\
&\Dsynrev[\vGamma]{\trm\,\trm[2]}&\hspace{-12pt}\phantom{.}&\defeq  \;
\pletin{\var}{\var'_{\text{ctx}}}{\Dsynrev[\vGamma]{\trm}}{
\pletin{\var[2]}{\var[2]'}{\Dsynrev[\vGamma]{\trm[2]}}{
\pletin{\var[3]}{\var'_{\text{arg}}}{\var\,\var[2]}{
\tPair{\var[3]}{\lfun\lvar  \lapp{\var'_{\text{ctx}}}{(!\var[2]\otimes \lvar)} + \lapp{\var[2]'}{(\lapp{\var'_{\text{arg}}}{\lvar})}}}}}.
\end{flalign*}
Regarding $\Dsynrev[\vGamma]{\fun\var\trm}$: suppose that $(\fun\var\trm) : \ty\To\ty[2]$.
Note then that we have $\Gamma, \var:\ty \vdash \trm:\ty[2]$ and hence $\trm$'s derivative has type $\Dsynrev{\Gamma}_1, \var:\Dsynrev{\ty}_1 \vdash \Dsynrev[\vGamma,\var]{\trm} : \Dsynrev{\ty[2]}_1 \t* (\Dsynrev{\ty[2]}_2 \multimap \Dsynrev{\Gamma}_2 \t* \Dsynrev{\ty}_2)$.
Calling the transposed derivative function for $\trm$ ($\var[3]'$ in the primal, $\tSnd(\var[2]\,\var)$ in the dual) therefore gives us \emph{both} halves of the transposed derivative (the derivative with respect to the function argument and the context variables, that is) of the function; we then select the appropriate components using projections.
Similarly, in $\Dsynrev[\vGamma]{\trm\,\trm[2]}$ we extract the transposed derivative 
$\var'_{\text{ctx}}$ of $\trm$ with respect to the context variables from the cotangent of $\trm$ and obtain the transposed derivative $\var'_{\text{arg}}$ of $\trm$ with respect to its function argument from $\trm$'s primal.
We combine these two halves of the transposed derivative with $\trm[2]$'s transposed derivative (which we get from its cotangent) to get the correct transposed derivative for the function application $\trm\,\trm[2]$.

\subsection{Proving CHAD Correct}\label{ssec:key-ideas-chad-correct}
With these definitions in place, we turn to the correctness of the source-code transformations.
To phrase correctness, we first need to construct a suitable semantics with an uncontroversial 
notion of semantic differentiation (see \S\ref{sec:semantics}).
We choose to work with a semantics in terms of the category $\Set$ of sets and functions\footnote{In \cite{vakar2020reverse}, we worked with a semantics in terms of diffeological spaces 
and \smooth{} functions, instead, to ensure that any first-order function 
is \smooth{}.
This choice separated the proof that every first-order denotation is \smooth{} from the proof that AD computes the correct derivative.
To make the presentation of this paper more accessible, we have chosen simply to work with sets and 
functions, and to prove differentiability of every first-order denotation
simultaneously with the proof that AD computes the correct derivative.}, noting that any function 
$f:\RR^n\to \RR^m$ has a unique derivative as long as $f$ is \smooth{}.
We will only be interested in this semantic notion of derivative of first-order functions for the sake 
of correctness of AD, and we will not concern ourselves with semantic derivatives of higher-order functions.
We interpret the required linear types in the category $\CMon$ of commutative monoids and homomorphisms.

By the universal properties of the syntax, we obtain canonical, structure-preserving (homomorphic) functors $\sem{-}:\Syn\to\Set$, $\sem{-}:\CSyn\to\Set$ and
$\sem{-}:\LSyn\to\CMon$ once we fix interpretations $\RR^n$ of $\reals^n$ 
and well-typed (\smooth{}) interpretations $\sem{\op}$ for each operation $\op$.
These functors define a concrete denotational semantics for our source and target languages.

Having constructed the semantics, we can turn to the correctness proof (of \S\ref{sec:glueing-correctness}).
Because calculus does not provide an unambiguous notion of derivative 
at function spaces, we cannot prove that the AD transformations
correctly implement mathematical derivatives by straightforward induction on the syntax.
Instead, we use a logical relations argument over the semantics.
\begin{insight}
\emph{Once we show that the (transposed) derivatives of primitive operations $\op$
are correctly implemented, correctness of (transposed) derivatives of all other programs
follows from a standard logical relations construction over the semantics
that relates a curve to its primal and (co)tangent curve.
By the chain rule for (transposed) derivatives, all CHAD-transformed programs respect the logical relations.
By basic calculus results, CHAD therefore must compute the (transposed) derivative.
}
\end{insight}
In \S\ref{sec:glueing-correctness}, we present an elegant high-level formulation of this correctness argument, using categorical logical relations techniques (subsconing).
To make this argument accessible to a wider audience of readers, we present a low-level description of the logical relations argument here.
The reader may note that these arguments look significantly different from the usual definitions of logical relations.
That difference is caused by the non-standard Cartesian closed structure of $\Sigma_\catC \catL$ and $\Sigma_\catC \catL^{op}$ and the proof is entirely 
standard when viewed from the higher level of abstraction that subsconing gives us.

We first sketch the correctness argument for forward mode CHAD.
By induction on the structure of types, writing $(f, f')$ for the product pairing of $f$ and $f'$, we construct a logical relation $P_{\ty}$ on types $\ty$ as
\begin{align*}
P_{\ty}&\subseteq (\RR^d\Rightarrow\sem{\ty})\times (\RR^d\Rightarrow(\sem{\Dsyn{\ty}_1}\times (\cRR^d\multimap\sem{\Dsyn{\ty}_2})))\\
P_{\reals^n} &\defeq \set{(f, g)\mid f\textnormal{ is \smooth{} and } g=\Dsem{f}}
\\
P_{\Unit} &\defeq \set{(x\mapsto (),x\mapsto ((), r\mapsto ()))}\\
P_{\ty\t*\ty[2]}&\defeq \set{((\sPair{f}{f'},(\sPair{g}{g'},x\mapsto r\mapsto \sPair{h(x)(r)}{h'(x)(r)})))\mid (f,(g,h))\in 
P_{\ty}, (f',(g',h'))\in P_{\ty[2]}}\\
P_{\ty\To\ty[2]}&\defeq \big\{(f,(g,h))\mid \forall (f',(g',h'))\in P_{\ty}.
(x\mapsto f(x)(f'(x)), (
x\mapsto \pi_1 (g(x)(g'(x))),\\
&\phantom{big\{(f,(g,h))\mid \forall (f',(g',h'))\in P_{\ty}.\quad} {x}\mapsto {r}\mapsto (\pi_2 (g(x)(g'(x))))(h'(x)(r)) + h(x)(r)(g'(x))))\in P_{\ty[2]}
\big\}.
\end{align*}
We extend the logical relation to typing contexts $\Gamma=\var_1:\ty_1,\ldots,\var_n:\ty_n$ as $P_\Gamma\defeq P_{\ty_1\t*\cdots\t*\ty_n}$.
Then, we establish the following fundamental lemma, which says that all well-typed source language programs $\trm$ respect 
the logical relation.
\begin{lemma}\label{lem:fundamental-fwd}
For any source language program $\Gamma\vdash \trm:\ty[2]$ and any $f:\RR^d\to \sem{\Gamma}$, $g:\RR^d\to\sem{\Dsyn{\Gamma}_1}$,
    $h:\RR^d\to \cRR^d\multimap \sem{\Dsyn{\Gamma}_2}$ such that $(f,(g,h))\in P_{\Gamma}$,
    we have that 
    $(f;\sem{\trm},
    (g; \sem{\Dsyn[\vGamma]{\trm}} ; \pi_1, x\mapsto r\mapsto \pi_2(\sem{\Dsyn[\vGamma]{\trm}}(g(x)))(h(x)(r))
    ))\in P_{\ty[2]}$.
\end{lemma}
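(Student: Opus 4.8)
The plan is to prove the statement by induction on the typing derivation of $\Gamma\vdash\trm:\ty[2]$, following the syntax-directed clauses that define $\Dsyn[\vGamma]{-}$. In each case I would unfold the definition of $\Dsyn[\vGamma]{\trm}$, compute the two semantic curves displayed in the conclusion, and verify that they satisfy the defining clause of $P_{\ty[2]}$, feeding in the inductive hypotheses for the immediate subterms together with the hypothesis $(f,(g,h))\in P_\Gamma$. Since $P_\Gamma=P_{\ty_1\t*\cdots\t*\ty_n}$, this hypothesis unpacks componentwise into relatedness data for each context variable, which is what the variable case consumes.

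First I would dispatch the two base cases. For a variable $\var_i:\ty_i$, the transform $\Dsyn[\vGamma]{\var_i}$ is a projection paired with the matching tangent coprojection, so membership in $P_{\ty_i}$ is immediate from the componentwise form of $(f,(g,h))\in P_\Gamma$. The genuinely analytic base case is an operation $\op(\trm_1,\ldots,\trm_k)$: here I would apply the inductive hypotheses to each $\trm_j$ to obtain \smooth{} primal curves carrying their correct derivatives, and then combine the standing assumption that each primitive's implemented derivative is semantically correct, $\sem{D\op}=D\sem{\op}$, with the multivariate chain rule. Checking that the resulting composite curve is again \smooth{} and that its semantic derivative equals $\Dsem{\sem{\op(\trm_1,\ldots,\trm_k)}}$ is exactly the content of the clause $P_{\reals^m}$; this is the one step that uses calculus rather than mere bookkeeping.

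Next I would handle the first-order inductive constructs $\letin{\var}{\trm}{\trm[2]}$, $\tPair{\trm}{\trm[2]}$, $\tFst\trm$ and $\tSnd\trm$. Each reduces to the chain rule in the form already recorded for $\Dsemsymbol$. For a let-binding I would use the inductive hypothesis for $\trm$ to extend the related context from $\Gamma$ to $\Gamma,\var:\ty$, then apply the inductive hypothesis for $\trm[2]$; the shape of $\Dsyn[\vGamma]{\letin{\var}{\trm}{\trm[2]}}$ is engineered precisely so that the composed primal and tangent curves line up with the clause $P_{\ty[2]}$, and the pairing and projection cases are verified componentwise through the clause $P_{\ty\t*\ty[2]}$.

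The main obstacle, as flagged in \S\ref{ssec:key-ideas-chad-correct}, is the function-type cases $\fun\var\trm$ and $\trm\,\trm[2]$, because the clause $P_{\ty\To\ty[2]}$ is non-standard: it quantifies over all related argument curves and separates the derivative with respect to the argument (stored in the primal $\Dsyn{\ty\To\ty[2]}_1$) from the derivative with respect to the enclosing context (stored in the tangent $\Dsyn{\ty\To\ty[2]}_2$). For $\fun\var\trm$, given an arbitrary related argument $(f',(g',h'))\in P_\ty$, I would assemble the related context for $\Gamma,\var:\ty$ and invoke the inductive hypothesis on $\trm$, then check that the two halves produced by $\Dsyn[\vGamma]{\fun\var\trm}$ recombine into exactly the primal and tangent curves demanded by $P_{\ty[2]}$. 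For $\trm\,\trm[2]$, I would apply the inductive hypotheses to $\trm$ (landing in $P_{\ty\To\ty[2]}$) and to $\trm[2]$ (landing in $P_\ty$), instantiate the universal quantifier of $P_{\ty\To\ty[2]}$ at $\trm[2]$'s curves, and confirm that the sum of the context contribution and the argument contribution matches the tangent prescribed in the conclusion. The difficulty is entirely in aligning these intricate definitions — keeping straight which occurrence of a derivative is the argument part and which is the context part — and, as the paper observes, it is exactly this bookkeeping that the high-level subsconing argument of \S\ref{sec:glueing-correctness} renders automatic.
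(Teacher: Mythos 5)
Your plan is correct, and it fleshes out exactly what the paper itself asserts immediately after the lemma (``the proof goes via induction on the typing derivation of $\trm$''): every case you outline goes through, and your identification of the operation case as the only genuinely analytic step, and of $\fun\var\trm$ and $\trm\,\trm[2]$ as the hard bookkeeping, is accurate. However, this elementary induction is not the proof the paper actually writes out; its detailed argument (\S\ref{sec:glueing-correctness}) is organized categorically. There, one shows once and for all that the subscone $\Gl$ over $\Set\times\Sigma_{\Set}\CMon$ is Cartesian closed (using Thm.~\ref{thm:grothendieck-ccc-covariant} for the Cartesian closure of $\Sigma_{\Set}\CMon$), checks by hand only that the generators lift to $\Gl$ --- i.e.\ that $(\sem{\op},(\sem{\op},\sem{D\op}))$ respects the relations, which is the same chain-rule computation as your operation case --- and then concludes by the universal property of $\Syn$ as a free Cartesian closed category (Prop.~\ref{prop:universal-property-syn}): the two Cartesian closed functors around the commuting square agree on the generators $\reals^n$ and $\op$, hence coincide, so $(\sem{\trm},(\sem{\Dsyn[\vGamma]{\trm}_1},\sem{\Dsyn[\vGamma]{\trm}_2}))$ is a $\Gl$-morphism for every program $\trm$ (Cor.~\ref{cor:respecting-logical-relation}), which is precisely the content of this lemma. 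The trade-off is instructive: your route is self-contained and elementary, but the $\lambda$-abstraction and application cases must be discharged by hand, and there you need, beyond keeping the argument and context halves straight, the additivity of $\sem{\Dsyn[\vGamma,\var]{\trm}_2}$ in its linear argument in order to recombine the two halves of the tangent into the single derivative demanded by $P_{\ty[2]}$ --- a point your sketch leaves implicit, though it is supplied by the monoid-homomorphism semantics of linear terms; the paper's route makes all inductive cases automatic (they are absorbed into the Cartesian closure of the subscone and the functoriality of the macro), leaving only the operations to verify, and it scales unchanged to richer source languages (\S\ref{sec:scope}), which is why the authors remark that they verified the elementary argument but would not have discovered the proof without the categorical viewpoint.
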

\noindent  The proof proceeds by induction on the typing derivation of $\trm$.
The main remaining step in the argument is to note that any tangent vector at $\sem{\ty_1\t*\cdots\t*\ty_n}\cong\RR^N$,
for first-order $\ty_i$, can be 
represented by a curve $\RR\to \sem{\ty_1\t*\cdots\t*\ty_n}$.

Similarly, for reverse mode CHAD,
we define, by induction on the structure of types, a logical relation $P_{\ty}$ on types $\ty$ (and, as before, we also define $P_{\Gamma}=P_{\ty_1\t*\cdots\t*\ty_n}$ for typing contexts $\Gamma=\var_1:\ty_1,\ldots,\var_n:\ty_n$):
\begin{align*}
    P_{\ty}&\subseteq (\RR^d\To\sem{\ty})\times (\RR^d\To(\sem{\Dsynrev{\ty}_1}\times (\sem{\Dsynrev{\ty}_2}\multimap \cRR^d)))\\
P_{\reals^n} &\defeq \big\{(f, g)\mid f\textnormal{ is \smooth{} and }g=\Dsemrev{f}\big\}\\
P_{\Unit}&\defeq \set{(x\mapsto (),x\mapsto ((),v\mapsto 0))}\\
P_{\ty\t*\ty[2]}&\defeq \set{((\sPair{f}{f'},(\sPair{g}{g'},{x}\mapsto {v}\mapsto h(x)(\pi_1 v)+h'(x)(\pi_2 v))))\mid (f,(g,h))\in 
P_{\ty}, (f',(g',h'))\in P_{\ty[2]}}\\
P_{\ty\To\ty[2]}&\defeq \big\{(f,(g,h))\mid 
\forall (f',(g',h'))\in P_{\ty}.
(x\mapsto f(x)(f'(x)), (
x\mapsto \pi_1 (g(x)(g'(x))),\\ &\phantom{\defeq \big\{(f,(g,h))\mid 
\forall (f',(g',h'))\in P_{\ty}.\quad} {x}\mapsto {v}\mapsto h({x})(\copower{g'(x)}{v})
+h'(x)((\pi_2(g(x)(g'(x))))v))\in P_{\ty[2]}
\big\}
\end{align*}
Then, we establish the following fundamental lemma.
\begin{lemma}\label{lem:fundamental-rev}
For any source language program $\Gamma\vdash \trm :\ty[2]$ and any $f:\RR^d\to \sem{\Gamma}$, $g:\RR^d\to\sem{\Dsynrev{\Gamma}_1}$,
$h:\RR^d\to \sem{\Dsynrev{\Gamma}_2}\multimap\cRR^d$ such that $(f,(g,h))\in P_{\Gamma}$,
we have that
$(f;\sem{\trm},
(g; \sem{\Dsynrev[\vGamma]{\trm}} ; \pi_1,{x}\mapsto {v}\mapsto
h(x)( \pi_2(\sem{\Dsynrev[\vGamma]{\trm}}(g(x)))( v))))\in P_{\ty[2]}$.
\end{lemma}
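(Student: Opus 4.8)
The plan is to prove the statement by induction on the typing derivation of $\trm$, exactly mirroring the proof of the forward-mode Lemma~\ref{lem:fundamental-fwd}. Concretely, for each term-forming rule of the source language I would unfold the corresponding clause of the reverse-mode transformation $\Dsynrev{-}$, feed the induction hypotheses for the immediate subterms through it, and check that the resulting triple of primal curve, primal-value curve, and transposed-derivative curve lands in the logical relation $P_{\ty[2]}$ at the type of the conclusion. Since the second components of the relation are families of monoid homomorphisms $\sem{\Dsynrev{\Gamma}_2}\multimap\cRR$, I would simultaneously check that every map built is linear; this is automatic, as coprojections, addition, and copower formation all preserve linearity.

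For the base cases, the variable rule $\Dsynrev[\vGamma]{\var}=\tPair{\var}{\lfun{\lvar}\tCoProj{\idx{\var}{\vGamma}}(\lvar)}$ follows directly from the assumption $(f,(g,h))\in P_\Gamma$ together with the definition of $P$ on a product context: the coprojection at index $\idx{\var}{\vGamma}$ selects exactly the component of the context cotangent that $h$ then sends back into $\cRR$. The operation rule is where the real work is concentrated, and where I must invoke the standing assumption that each primitive $\transpose{D\op}$ is interpreted as the genuine transposed derivative $\transpose{D\sem{\op}}$ of the smooth map $\sem{\op}$. Applying the induction hypothesis to the arguments $\trm_1,\ldots,\trm_k$ (each of ground type $\reals^{n_i}$, hence related to its primal via $g=\Dsemrev{f}$ with $f$ smooth), smoothness of $\sem{\op}$ together with the contravariant chain rule yields that the composite is smooth with transposed derivative computed by $\transpose{D\op}$; the syntactic $+$ that accumulates the cotangents of the several arguments realizes the fact that a context variable feeding multiple arguments contributes additively to the transposed derivative.

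The structural cases for $\mathbf{let}$, pairing, and the two projections are routine: each applies the induction hypothesis to its subterms (for $\mathbf{let}$, once in $\Gamma$ and once in the extended context $\Gamma,\var$) and then matches the definition of $P_{\ty\t*\ty[2]}$ clause by clause, with the additions in the transformed code corresponding exactly to the $+$ in the cotangent component of $P_{\ty\t*\ty[2]}$. The genuinely interesting cases are function abstraction $\fun\var\trm$ and application $\trm\,\trm[2]$. For abstraction I must show the transformed term lies in $P_{\ty\To\ty[2]}$, i.e.\ that for every related argument $(f',(g',h'))\in P_\ty$ the induced triple lies in $P_{\ty[2]}$; this forces me to verify simultaneously that the primal correctly stores the argument-derivative (used via $\var[3]'$) and that the copower-valued cotangent correctly accumulates the context-derivative, the two halves of the transposed derivative being separated by the projections in the definition. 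For application I would unfold $P_{\ty\To\ty[2]}$ at the argument produced by the induction hypothesis for $\trm[2]$ and recombine the context half $\var'_{\text{ctx}}$ with the argument half $\var'_{\text{arg}}$, checking that the copower introduction $\copower{\var[2]}{\lvar}$ pairs each argument primal with its incoming cotangent precisely as the quantifier in $P_{\ty\To\ty[2]}$ demands.

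The main obstacle will be these two function cases, and specifically the copower bookkeeping forced by the non-standard Cartesian closed structure of $\Sigma_\CSyn\LSyn^{op}$. The delicate point is that a cotangent at a function type is a (quotient of a) list of argument-cotangent pairs, so the elimination of the copower in $\Dsynrev[\vGamma]{\fun\var\trm}$ and its introduction in $\Dsynrev[\vGamma]{\trm\,\trm[2]}$ must be shown to cancel against the universal quantification over related arguments in $P_{\ty\To\ty[2]}$, all while keeping the split between the derivative with respect to the argument (stored in the primal) and the derivative with respect to the context (propagated in the cotangent) correctly aligned. Everything else is, as the excerpt notes, entirely standard once viewed through the subsconing formulation of \S\ref{sec:glueing-correctness}; the low-level verification here is laborious rather than conceptually hard.
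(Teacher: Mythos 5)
Your plan is correct in outline: the induction you describe, with the chain rule plus correctness of the primitive transposed derivatives $\transpose{D\op}$ carrying the operation case, and the copower bookkeeping carrying abstraction and application, is exactly the elementary argument the paper alludes to when it says the proof "goes via induction on the typing derivation" in \S\S\ref{ssec:key-ideas-chad-correct}. However, the proof the paper actually carries out, in \S\ref{sec:glueing-correctness}, takes a genuinely different route that avoids your term-by-term induction altogether. There, the relation $P^r$ is packaged as a subscone $\GlRev$ over $\Set\times\Sigma_{\Set}\CMon^{op}$; Theorem \ref{thm:grothendieck-ccc-contravariant} together with the general subsconing construction shows that $\GlRev$ is Cartesian closed; the only case analysis performed is on the \emph{generators} of $\Syn$, namely the check (via the contravariant chain rule) that $(\sem{\op},(\sem{\op},\sem{\transpose{D\op}}))$ respects the relation, which yields the Cartesian closed functor $\semglrev{-}:\Syn\to\GlRev$. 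The lemma then falls out as Corollary \ref{cor:respecting-logical-relation}: by the universal property of $\Syn$ (Prop.~\ref{prop:universal-property-syn}), two Cartesian closed functors out of the free CCC agreeing on generators are equal, so $(\sem{\trm},(\sem{\Dsynrev[\vGamma]{\trm}_1},\sem{\Dsynrev[\vGamma]{\trm}_2}))$ is a $\GlRev$-morphism for \emph{every} term $\trm$ at once. The trade-off is instructive: your two "genuinely interesting" cases, abstraction and application, are precisely a hand-unrolling of the proof that the exponential formula underlying $P_{\ty\To\ty[2]}$ really is an exponential in $\GlRev$ (and in $\Sigma_{\catC}\catL^{op}$), so your route re-proves a structure theorem once per term former, while the paper proves it once and for all and thereby also gets extensibility to richer source languages for free; conversely, your route needs no category theory and is fully self-contained. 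The paper itself states that it verified the elementary argument you propose but regards it as technical and hard to discover without the abstract perspective, so consider your proposal a legitimate, if laborious, alternative to the paper's actual proof rather than a reconstruction of it.
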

\noindent The proof proceeds by induction on the typing derivation of $\trm$.
Correctness follows from the fundamental lemma by observing that any tangent vector at $\sem{\ty_1\t*\cdots\t*\ty_n}\cong\RR^N$,
for first-order $\ty_i$, can be 
represented by a curve $\RR\to \sem{\ty_1\t*\cdots\t*\ty_n}$.

We obtain our main theorem, Thm. \ref{thm:chad-correctness}, but now for our CHAD algorithms applied to a higher-order source language.

\subsection{A Practical Implementation in Haskell}
Next, we address the practicality of our method (in \S\ref{sec:implementation}).
The code transformations we employ are straightforward to implement and they are well-behaved in the sense that the derivative code they generate grows linearly in the size of the original source code.
However, the implementation of the required linear types presents a challenge.
Indeed, types such as $\copower{(-)}{(-)}$ and $(-)\multimap (-)$ are absent from
 languages such as Haskell and OCaml.
Fortunately, in this instance, we can implement them using abstract data types 
and a basic module system:
\begin{insight}\emph{
Under the hood, $\copower{\ty}{\ty[2]}$ can consist of a list of values of 
type $\ty\t*\ty[2]$.
Its API ensures that the list order and the difference between
$\PlusList{xs}{[(\trm, \trm[2]), (\trm, \trm[2]')]}$ and
$\PlusList{xs}{[(\trm, \trm[2]+\trm[2]')]}$ (or $xs$ and $\PlusList{xs}{[(\trm, 0)]}$)
cannot be observed; in this sense, 
it is a quotient type.
Meanwhile, $\ty\multimap \ty[2]$  
can be implemented as a standard function 
type $\ty\To\ty[2]$ with a limited API that enforces that we can 
only ever construct linear functions; in this sense, it is a subtype.}
\end{insight}
This idea leads to our reference implementation of CHAD in Haskell (available at \href{https://github.com/VMatthijs/CHAD}{https://github.com/VMatthijs/CHAD}),
which generates perfectly standard simply typed functional code that is given extra type safety by the linear types, implemented as abstract types.
To illustrate what our method does in practice, we consider two programs 
of our higher-order source language, shown in Fig. \ref{fig:higher-order-ad-example} (a) and (b), that we may want to differentiate.
The forward and reverse mode derivatives that our CHAD implementation generates for these programs are listed in Fig. \ref{fig:higher-order-ad-example} (c) and (d),
again modulo minor simplifications that aid legibility but have no significant run-time implications.\footnote{For information on the exact simplifications performed, see Appendix \ref{appx:impl-simpl}.}

\begin{figure}[!t]
    \begin{subfigure}[b]{0.45\linewidth}
\begin{lstlisting}[mathescape=true]
let f  = $\lambda$z. x * z + 1
    zs = replicate x
    ys = map f zs in
    ys     
\end{lstlisting}
\subcaption{Original program $\var:\reals\vdash \trm:\reals^n$ computing a function $\sem{\trm}:\RR\to \RR^n; x\mapsto ys$.
\texttt{replicate x} copies \texttt{x} $n$ times.}
\end{subfigure}\hspace{0.08\linewidth}
\begin{subfigure}[b]{0.45\linewidth}
\begin{lstlisting}[mathescape=true]
let f  = $\lambda$x2i. x1 * x2i
    ys = map f x2
    w  = sum ys in
    w
\end{lstlisting}
\subcaption{Original program $\var1:\reals,\var2:\reals^n\vdash \trm[2]:\reals$ computing a function $\sem{\trm[2]}:\RR\times \RR^n\to\RR; (x1,x2)\mapsto w$.}
    \end{subfigure}
\\
\quad\\
\begin{subfigure}[b]{\linewidth}

\begin{tabular}{ll}
\begin{lstlisting}[mathescape=true]
let [*f   = $\color{blue}\lambda$z.$\color{blue}\langle$x * z + 1,$\color{blue}\lambda$z'.x * z'$\color{blue}\rangle$*]
    zs  = replicate x
    [*ys  = map ($\color{blue}\lambda$z.*]<*fst*>[*$\color{blue}$(f z)) zs*] in
    $\langle$ys, $\lambda$x'.
    let [*f'  = $\color{blue}\lambda$z.x' * z*]
        zs' = replicate x'
        [*ys' = map f' zs + zipWith ($\color{blue}\lambda$z.*]<*snd*>[*(f z)) zs zs'*] in 
        ys'$\rangle$
\end{lstlisting}&\hspace{-12pt}
\begin{tabular}{l}
    \ntikzmark{S}{ }
    \\
    \\
    \ntikzmark{E}{ }\makebrace{S}{E}{primals}\\
    \\
    \ntikzmark{S}{ }
    \\
    \\
    \\
    \ntikzmark{E}{ }\makebrace{S}{E}{tangents}\\
\end{tabular}\hspace{-30pt}
\end{tabular}
    \subcaption{Forward AD transformed program $x:\reals\vdash \Dsyn{\trm}:\reals^n\t*(\creals\multimap \creals^n)$ computing the derivative
    $\sem{\Dsyn{\trm}}=\Dsem{\sem{\trm}}:\RR\to \RR^n\times (\cRR\multimap \cRR^n); x\mapsto (ys, x' \mapsto ys')$.}
    \end{subfigure}\\
    \quad\\
\begin{subfigure}[b]{\linewidth}
    \begin{tabular}{ll}
\begin{lstlisting}[mathescape=true]
let [*f   = $\color{blue}\lambda$x2i.$\color{blue}\langle$x1 * x2i,$\color{blue}\lambda$y'.x1 * y'$\color{blue}\rangle$*]
    [*ys  = map ($\color{blue}\lambda$x2i.*]<*fst*>[*(f x2i)) x2*]
    w   = sum ys in
    $\langle$w, $\lambda$w'.
    let ys' = replicate w'
        [*f'  = zip x2 ys'*]
        [*x1' = sum (map ($\color{blue}\lambda\langle$x2i,y'$\color{blue}\rangle$.y' * x2i) f')*]
        [*x2' = zipWith ($\color{blue}\lambda$x2i.*]<*snd*>[*(f x2i)) x2 ys'*] in
        $\langle$x1',x2'$\rangle\rangle$
\end{lstlisting} &\hspace{-12pt}   \begin{tabular}{l}
    \ntikzmark{S}{ }
    \\
    \\
    \ntikzmark{E}{ }\makebrace{S}{E}{primals}\\
    \\
    \ntikzmark{S}{ }
    \\
    \\
    \\ 
    \\
    \ntikzmark{E}{ }\makebrace{S}{E}{cotangents}\\
\end{tabular}\hspace{-30pt}
\end{tabular}
\subcaption{Reverse AD transformed program $x1:\reals,x2:\reals^n\vdash \Dsynrev{\trm[2]}:\reals\t* (\creals\multimap \creals\t*\creals^n)$ computing the transposed derivative
$\sem{\Dsynrev{\trm[2]}}=\Dsemrev{\sem{\trm[2]}}:\RR\times \RR^n\to \RR\times (\cRR\multimap \cRR\times \cRR^n); (x1,x2)\mapsto (w,w' \mapsto (x1',x2'))$.}
\end{subfigure}
\caption{\label{fig:higher-order-ad-example}Forward and reverse AD illustrated on higher-order functional array processing programs.
The parts of the programs that involve AD on higher-order functions are marked 
in blue.
Observe that in (c) (respectively, (d)), the primal value associated with the function 
$f$ from program (a) (respectively, (b)) computes both the original function $f$ and its 
derivative (respectively, transposed derivative) with respect to its argument $z$ (respectively, $x2i$).
In (c), the tangent $f'$ to $f$ is \emph{produced} by propagating \emph{forward} the tangent $x'$ to the context variable $x$
that $f$ captures, using $f$'s derivative with respect to $x$.
This lets us correctly propagate forward the contributions to $ys'$ from $f$'s dependence both on its argument $z$ and on its context variable $x$.
Dually, in (d), the cotangent to $f$, which we construct from the cotangent $ys'$, is \emph{consumed} by propagating it \emph{backward} to the cotangent $x1'$ of the context variable $x1$ that $f$ captures, using $f$'s transposed derivative with respect to $x1$. Meanwhile, the adjoint $x2'$ is constructed using the part of the primal of $f$ that captures $f$'s transposed derivative with respect to $x2i$.
}
\end{figure}

In \S\ref{sec:implementation}, we also phrase the correctness proof of the AD transformations in elementary terms,
so that it holds in the applied setting where we use abstract types to implement linear types.
We show that our correctness results are meaningful because they are stated using a denotational semantics 
that is adequate with respect to the standard operational semantics.
Furthermore, to stress the applicability of our method, we show in \S\ref{sec:higher-order-primitives} that it extends to 
higher-order (primitive) operations, such as $\tMap$.

Finally, in \S\ref{sec:scope}, we zoom out and reflect on how this method generalizes.
The crux of CHAD lies in the following~steps:
\begin{itemize}
\item view the source language as a freely generated category $\Syn$ with some appropriate structure $\catS$ (such as Cartesian closure, coproducts, (co)inductive types, iteration), generated from 
objects $\reals^n$ and morphisms $\op$;
\item find a suitable target language $\LSyn$ (with linear types arising from the effect 
of commutative monoids) for the translation such that
$\Sigma_\CSyn \LSyn$ and $\Sigma_\CSyn \LSyn^{op}$ are categories with the structure $\catS$;
in our experience, this is possible for most common choices of $\catS$ corresponding 
to programming language constructs;
\item then, by the universal property of $\Syn$, we obtain unique structure-preserving (homomorphic) functors
$\Dsynsymbol : \Syn\to\Sigma_\CSyn \LSyn$ and $\Dsynrevsymbol:\Syn\to\Sigma_\CSyn \LSyn^{op}$
defining forward and reverse mode AD transformations, as soon as we fix their action on $\op$ (and $\reals^n$) to implement the derivative of the operations;
\item the correctness of these AD methods follows by a standard 
categorical logical relations argument as the subscones $\Gl$ and $\GlRev$ 
also tend to be categories with the structure $\catS$ for most choices of $\catS$.
\end{itemize}
\begin{insight}\emph{
The definition and correctness proof of forward and reverse AD on expressive programming languages follow automatically, by viewing the algorithms as structure-preserving functors $\Dsynsymbol : \Syn\to\Sigma_\CSyn \LSyn$ and $\Dsynrevsymbol:\Syn\to\Sigma_\CSyn \LSyn^{op}$.}
\end{insight}
We conclude by observing that, in this sense, CHAD is not specific to 
automatic differentiation at all.
We can choose generators other than $\reals^n$ and $\op$ for $\Syn$ and 
 different mappings of these generators under $\Dsynsymbol$ and $\Dsynrevsymbol$. Doing so lets CHAD derive various other dynamic program analyses that accumulate data in a commutative monoid, together with their correctness proofs by logical relations (see \S\S\ref{ssec:other-analyses}).

 \section{$\lambda$-Calculus as a Source Language for Automatic Differentiation}\label{sec:language}
As a source language for our AD translations,
we can begin with a standard, simply typed $\lambda$-calculus that has 
ground types $\reals^n$ of statically sized\footnote{Here, we work with statically sized arrays to simplify the theoretical development. However, in our implementation, we show that CHAD applies equally well to types of varying dimension such as dynamically sized arrays.} arrays of $n$ real numbers, for all $n\in\NN$,
and sets $\Op_{n_1,...,n_k}^m$ of primitive operations $\op$
for all $k, m, n_1,\ldots, n_k\in\NN$. These operations will be interpreted
as \smooth{}\footnote{Observe that this restriction does not meaningfully exclude functions that are 
\smooth{} almost everywhere, such as ReLU, because such functions can be approximated by 
\smooth{} functions.
Given how coarse an approximation real numbers already are to floating-point arithmetic,
the distinction between everywhere \smooth{} and almost-everywhere \smooth{} is not meaningful in practice.
} functions $(\RR^{n_1}\times\ldots\times\RR^{n_k})\To\RR^m$.
Examples to keep in mind for $\op$ include
\begin{itemize}
  \item constants $\cnst\in \Op_{}^n$ for each $c\in \RR^n$, for which 
   we slightly abuse notation and write $\cnst()$ as $\cnst$;
    \item elementwise addition and product $(+),(*)\!\in\!\Op_{n,n}^n$
    and matrix-vector product $(\star)\!\in\!\Op_{n\cdot m, m}^n$;
    \item operations for summing all the elements in an array: $\tSum\in\Op_{n}^1$;
    \item some non-linear functions such as the sigmoid function $\sigmoid\in \Op_{1}^1$.
\end{itemize}
We intentionally present operations in a schematic way, as primitive operations tend to form a collection that grows as needed as an AD library develops.
The precise operations needed will depend on the applications.
In statistics and machine learning applications, $\Op$
tends to include
a mix of multi-dimensional linear algebra operations and mostly one-dimensional non-linear 
functions.
A typical library for use in machine learning would work with multi-dimensional arrays 
(sometimes called ``tensors'').
We focus here on one-dimensional arrays because the details of how to represent arrays
are orthogonal to the concerns of our development.

The types $\ty,\ty[2],\ty[3]$ and terms $\trm,\trm[2],\trm[3]$ of our AD source language are as follows:

\noindent\begin{syntax}
    \ty, \ty[2], \ty[3] & \gdefinedby & & \syncat{types}                          \\
    &\gor& \reals^n                      & \synname{real arrays}\\
    &\gor\quad\, & \Unit & \synname{nullary product}\\
    &&&\\
    \trm, \trm[2], \trm[3] & \gdefinedby & & \syncat{terms}    \\
    &    & \var                          & \synname{identifier} \\
    & \gor & \letin{\var}{\trm}{\trm[2]} & \synname{$\mathbf{let}$-bindings}\\
    &\gor& \op(\trm_1,\ldots,\trm_k)
& \synname{operations}                      \\
\end{syntax}\qquad\qquad\qquad
  \begin{syntax}
    &\gor\quad\,& \ty_1\t* \ty_2 & \synname{binary product} \\
  &\gor& \ty \To \ty[2]              & \synname{function}      \\
  &&&\\
  &&&\\
  &\gor& \tUnit\ \gor \tPair{\trm}{\trm[2]} & \synname{product tuples}\\& \gor & \tFst{\trm}\ \gor\tSnd{\trm}& \synname{product projections}\\
&\gor& \fun \var    \trm &\synname{function abstraction}\\
&\gor &\trm\,\trm[2] & \synname{function application}
\end{syntax} \\ 

The typing rules are in Fig.~\ref{fig:types1}.
We use the usual conventions for free and bound variables and write 
the capture-avoiding substitution of $\var$ with $\trm[2]$ in $\trm$ as $\subst{\trm}{\sfor{\var}{\trm[2]}}$.
We employ the usual syntactic sugar
$\fun{\tPair{\var}{\var[2]}}\trm\defeq \fun{\var[3]}{\subst{\trm}{\sfor{\var}{\tFst\var[3]},\sfor{\var[2]}{\tSnd\var[3]}}}$,
and we write $\reals$ for $\reals^1$.
\begin{figure}[!t]
\framebox{\begin{minipage}{0.98\linewidth}\noindent\hspace{-24pt}\[
  \begin{array}{c}
    \inferrule{
      ((\var : \ty) \in \ctx)
    }{
      \Ginf \var\ty
    }
    \qquad 
    \inferrule{\Ginf{\trm}{\ty}\quad \Ginf[,\var:\ty]{\trm[2]}{\ty[2]}}
    {\Ginf {\letin{\var}{\trm}{\trm[2]}}{\ty[2]}}\qquad

  \inferrule{\set{
    \Ginf {\trm_i} {\reals^{n_i}}\mid i=1,\ldots,k}\quad
     (\op\in\Op^m_{n_1,\ldots,n_k})
  }{
    \Ginf {\op(\trm_1,\ldots,\trm_k)}{\reals^m
  }}\qquad
\inferrule{
  ~
}{
  \Ginf{\tUnit}{\Unit}
}\\\\
  \inferrule{
    \Ginf {\trm}{\ty}\quad
    \Ginf {\trm[2]}{\ty[2]}
  }{
    \Ginf{\tPair{\trm}{\trm[2]}}{\ty\t* \ty[2] }
  }
  \qquad
    \inferrule{
    \Ginf{\trm}{\ty\t*\ty[2] }
  }{
    \Ginf{\tFst\trm}{\ty}
  }
  \qquad  \inferrule{
    \Ginf{\trm}{\ty\t*\ty[2] }
  }{
    \Ginf{\tSnd\trm}{\ty[2]}
  }
 \qquad

  \inferrule{
    \Ginf[, \var : \ty]{\trm}{\ty[2]}
  }{
    \Ginf{\fun{\var}\trm}{\ty\To\ty[2]}
  }
\qquad
  \inferrule{
    \Ginf{\trm}{\ty[2]\To\ty}
    \\
    \Ginf{\trm[2]}{\ty[2]}
  }{
    \Ginf{\trm\, \trm[2]}{\ty}
  }
\end{array}
\]
 \end{minipage}}
    \caption{Typing rules for the AD source language.\label{fig:types1}}\;
  \end{figure}
As Fig. \ref{fig:beta-eta} displays, we consider the standard $\beta\eta$-equational theory for our language, 
where equations hold on pairs of terms of the same type in the same context.
\begin{figure}[!t]
  \framebox{\begin{minipage}{0.98\linewidth}\hspace{-24pt} \[\begin{array}{l}
    \letin{\var}{\trm}{\trm[2]} = \subst{\trm[2]}{\sfor{\var}{\trm}}\quad\;\,
    \trm =\tUnit\quad\;\,
    \tFst \tPair{\trm}{\trm[2]} = \trm \quad\;\,
    \tSnd \tPair{\trm}{\trm[2]} = \trm[2]\quad\;\,
  \trm = \tPair{\tFst\trm}{\tSnd\trm}\quad\;\,
  (\fun{\var}{\trm})\,\trm[2] = \subst{\trm}{\sfor{\var}{\trm[2]}} \quad\;\,
        \trm \freeeq{\var} \fun{\var}{\trm\,\var}
\end{array}\]

  \end{minipage}}
 \caption{Standard $\beta\eta$-laws for products and functions. We write $\freeeq{\var_1,\ldots,\var_n}$ to indicate that the variables $\var_1,\ldots,\var_n$ need to be fresh in the left-hand side.
 As usual, we only distinguish terms up to $\alpha$-renaming of bound variables.\label{fig:beta-eta}\;
}
 \end{figure}
We could consider further equations for our operations, but we do not, as we will not need~them.

This standard $\lambda$-calculus is widely known to be equivalent to the free 
Cartesian closed category $\Syn$ generated by the objects $\reals^n$ and the morphisms $\op$
(see \cite{lambek1988introduction}).
\begin{itemize}
  \item  $\Syn$ has types $\ty,\ty[2],\ty[3]$ as objects;
  \item  $\Syn$ has morphisms $\trm\in\Syn(\ty,\ty[2])$ that are in one-to-one correspondence with terms $\var:\ty\vdash \trm:\ty[2]$ up to 
  $\beta\eta$-equivalence (which includes $\alpha$-equivalence);
  \item identities are represented by $\var:\ty\vdash \var:\ty$;
  \item composition of $\var:\ty\vdash \trm:\ty[2]$ and $\var[2]:\ty[2]\vdash\trm[2]:\ty[3]$ is represented by $\var:\ty\vdash \letin{\var[2]}{\trm}{\trm[2]}:\ty[3]$;
  \item $\Unit$ and $\ty\t*\ty[2]$ represent nullary and binary products, while $\ty\To\ty[2]$ is the categorical exponential.
\end{itemize}
$\Syn$ has the following well-known universal property.
\begin{proposition}[Universal property of $\Syn$]\label{prop:universal-property-syn}
For any Cartesian closed category $(\catC,\terminal,\times,\Rightarrow)$,
we obtain a unique Cartesian closed functor $F:\Syn\to\catC$, once we choose objects $F(\reals^n)$ of $\catC$ and, for each $\op\in\Op^m_{n_1,\ldots,n_k}$, make
well-typed choices of $\catC$-morphisms\;\;
$
F(\op):(F(\reals^{n_1})\times \ldots\times F(\reals^{n_k}))\To F(\reals^m).
$
\end{proposition}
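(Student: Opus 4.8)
The plan is to prove this as the standard freeness (initiality) result for the simply typed $\lambda$-calculus, by building $F$ compositionally out of the Cartesian closed structure of $\catC$ and checking that it respects the $\beta\eta$-theory. First I would extend the given assignment on generators to all objects of $\Syn$ by structural induction on types, forced by the requirement that $F$ preserve the Cartesian closed structure: set $F(\Unit)\defeq\terminal$, $F(\ty\t*\ty[2])\defeq F(\ty)\times F(\ty[2])$, and $F(\ty\To\ty[2])\defeq F(\ty)\Rightarrow F(\ty[2])$, with $F(\reals^n)$ given. With this definition, preservation of finite products and exponentials holds on objects by construction.

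Next I would define $F$ on morphisms. Since a morphism of $\Syn(\ty,\ty[2])$ is a $\beta\eta$-class of terms $\var:\ty\vdash\trm:\ty[2]$, it is cleanest to interpret every typing judgement $\Gamma\vdash\trm:\ty[2]$, with $\Gamma=\var_1:\ty_1,\ldots,\var_n:\ty_n$, as a $\catC$-morphism $F(\Gamma)\To F(\ty[2])$ where $F(\Gamma)\defeq F(\ty_1)\times\cdots\times F(\ty_n)$, by induction on the derivation in Fig.~\ref{fig:types1}. Identifiers become product projections; an operation $\op(\trm_1,\ldots,\trm_k)$ becomes the tupling of the $F(\trm_i)$ followed by $F(\op)$; $\mathbf{let}$-bindings and applications become composites built from pairing and the evaluation maps $\ev$; abstractions $\fun{\var}\trm$ become exponential transposes $\Lambda(F(\trm))$; and product tuples and projections use the universal maps of $\times$ and $\terminal$. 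A one-variable judgement then yields the required arrow of $\Syn$.

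The main work is soundness: the interpretation must be invariant under all the equations of Fig.~\ref{fig:beta-eta}, so that $F$ descends to $\beta\eta$-classes and is genuinely defined on $\Syn$. The crucial ingredient is a semantic substitution lemma stating that $F(\subst{\trm}{\sfor{\var}{\trm[2]}})$ equals $F(\trm)$ precomposed (in context) with $F(\trm[2])$; I would prove it by induction on $\trm$, using naturality of projections, pairing, $\ev$ and $\Lambda$. Granting this, each equation reduces to a universal-property identity in the Cartesian closed category: the product $\beta$- and $\eta$-laws and the $\Unit$-law follow from the universal properties of $\times$ and $\terminal$; the $\mathbf{let}$-law is exactly the substitution lemma; the function $\beta$-law is the counit equation $(\Lambda(f)\times\id)\andthen\ev=f$ of the exponential adjunction; and the function $\eta$-law $\trm\freeeq{\var}\fun{\var}{\trm\,\var}$ follows from uniqueness of transposes (equivalently $\Lambda(\ev)=\id$) together with naturality. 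I expect this verification — in particular the interaction of the substitution lemma with the $\eta$-law — to be the principal obstacle, since it needs the full strength of the exponential adjunction rather than mere existence of exponentials.

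Finally I would package the result. Functoriality follows from the substitution lemma: identities are the variable terms $\var:\ty\vdash\var:\ty$, interpreted as identity projections, and composition in $\Syn$ is a $\mathbf{let}$-binding, whose interpretation is the composite by the same lemma. That $F$ is a Cartesian closed functor holds because it preserves the structure on objects by construction and sends the canonical structural morphisms (the projections $\tFst,\tSnd$, pairing, $\ev$, and currying) to their $\catC$-counterparts. Uniqueness is then an initiality argument: any Cartesian closed functor $G$ agreeing with $F$ on the generators $\reals^n$ and $\op$ must agree with $F$ on all objects, by induction on type structure since $G$ preserves $\terminal$, $\times$ and $\Rightarrow$, and on all morphisms, by induction on term structure since preservation of the product and exponential structure fixes $G$'s value on every syntactic constructor; hence $G=F$.
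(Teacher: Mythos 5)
Your proposal is correct and is exactly the standard freeness argument (interpretation of judgements in context, substitution lemma, soundness for the $\beta\eta$-theory, uniqueness by induction on types and terms) that the paper itself does not spell out but simply invokes by citation to Lambek--Scott, since Prop.~\ref{prop:universal-property-syn} is stated there as a well-known result. No gaps: your construction and uniqueness argument match the standard proof the paper relies on.
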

 \section{Linear $\lambda$-Calculus as an Idealised AD Target Language}\label{sec:minimal-linear-language} 
As a target language for our AD source-code transformations, 
we consider a language that extends the language of \S\ref{sec:language}
with limited linear types.
We could opt to work with a full linear logic as in \cite{barber1996dual} or 
\cite{benton1994mixed}.
Instead, however, we will only include the bare minimum of
linear type formers that we actually need to phrase the AD transformations.
The resulting language is closely related to, but more minimal than, the Enriched Effect Calculus 
of \cite{egger2009enriching}.
We limit our language in this way because we want to stress that the resulting 
code transformations can easily be implemented in existing functional languages 
such as Haskell or OCaml.
As we discuss in \S\ref{sec:implementation}, the idea is to use a module system
to implement the required linear types as abstract data types.

In our idealised target language, we consider \emph{linear types} (also known as computation types)
$\cty$, $\cty[2]$, $\cty[3]$,
in addition to the \emph{Cartesian types} (also known as value types) $\ty$, $\ty[2]$, $\ty[3]$ that
we have considered so far.
We think of Cartesian types as denoting sets and linear types as 
denoting sets equipped with an algebraic structure.
The Cartesian types will be used to represent sets of primals.
The relevant algebraic structure on linear types, in this instance, turns out to be 
that of a commutative monoid, as this algebraic structure is needed to 
formulate automatic differentiation algorithms.
Indeed, we will use the linear types to denote sets of (co)tangent vectors.
These (co)tangents form a commutative monoid under addition.

Concretely, we extend the types and terms of our language as follows:\\
 \begin{syntax}
    \cty, \cty[2], \cty[3] & \gdefinedby & &  \syncat{linear types}\\     
  &\gor & \creals^n & \synname{real array}\\
  & \gor & \lUnit & \synname{unit type}\\
  &&&\\
  \ty, \ty[2], \ty[3] & \gdefinedby & & \syncat{Cartesian types}               \\
  &\gor& \ldots                      & \synname{as in \S\ref{sec:language}}\\
  &&&\\
  \trm, \trm[2], \trm[3] & \gdefinedby  & & \syncat{terms}             \\
&\gor& \ldots                      & \synname{as in \S\ref{sec:language}} \\
& \gor & \lvar & \synname{linear identifier}\\
&\gor & \letin{\lvar}{\trm}{\trm[2]} & \synname{linear $\mathbf{let}$-binding}\\
\end{syntax}
~\qquad\qquad\!\!\!
\begin{syntax}
  & \gor & \cty \t* \cty[2] & \synname{binary product}\\
  & \gor & \ty \To \cty[2] & \synname{power}\\
  &\gor &\copower{\ty }{\cty[2]}& \synname{copower}\\
  &&&\\
  &\gor\quad\, & \cty\multimap \cty[2] & \synname{linear function}\\
  &&&\\
  &&&\\
&\gor & \lop(\trm_1,\ldots,\trm_k;\trm[2]) & \synname{linear operation}\\
  &\gor\;\; & \copower{\trm}{\trm[2]}\,\gor \tensMatch{\trm}{\var[2]}{\lvar}{\trm[2]} & \synname{copower intro/elim}\\
  &\gor & \lfun{\lvar}{\trm}\,\gor  \lapp\trm{\trm[2]}  & \synname{abstraction/application}\\
  &\gor & \zero\,\gor \trm+\trm[2] & \synname{monoid structure.}
\end{syntax}
 \\
 \\
We work with linear operations $\lop\in\LOp_{n_1,...,n_k;
n'_1,\ldots,n'_l}^{m_1,\ldots,m_r}$,
which are intended to represent \smooth{} functions
$$
(\RR^{n_1}\times \cdots\times \RR^{n_k}\times \RR^{n'_1}\times \cdots\times \RR^{n'_l})\to \RR^{m_1}\times\cdots\times\RR^{m_r}
$$
that are linear (in the sense of 
respecting $\zero$ and $+$) in the last $l$ 
arguments but not in the first $k$.
We 
 write $$
\LDomain{\lop}\defeq \creals^{n'_1}\t*\ldots\t* \creals^{n'_l}
\qquad\text{and}\qquad
\CDomain{\lop}\defeq \creals^{m_1}\t*\ldots\t* \creals^{m_r}$$
for $\lop\in\LOp_{n_1,...,n_k;
n'_1,\ldots,n'_l}^{m_1,\ldots,m_r}$.
These operations can include dense and sparse matrix-vector multiplications, for example.
Their purpose is to serve as primitives to
implement derivatives~$D\op(\var_1,\ldots,\var_k;\lvar)$ and transposed derivatives
$ \transpose{D\op}(\var_1,\ldots,\var_k;\lvar)$ 
of the
operations $\op$ from the source language as terms with free variables $\var_1,\ldots,\var_k,\lvar$ that are linear in $\lvar$.
In fact, one can also opt to directly include, in $\LOp$, primitive linear operations for the derivatives of each (Cartesian) operation $\op\in\Op$:
\begin{align*}
  &D\op\in \LOp_{n_1,...,n_k;n_1,\ldots,n_k}^m
  &\transpose{D\op}\in \LOp_{n_1,...,n_k;m}^{n_1,\ldots,n_k}.
  \end{align*}

In addition to the judgement $\Ginf \trm\ty$, which we encountered in \S\ref{sec:language}, we now consider an additional judgement 
$\Ginf[;\lvar:\cty]\trm{\cty[2]}$.
While we think of the former as denoting a function 
between sets, we think of the latter as a function 
from the set that $\Gamma$ denotes to the set of 
monoid homomorphisms from the denotation of $\cty$ to that of $\cty[2]$.

Figs.~\ref{fig:types1} and~\ref{fig:minimal-linear-types} display the typing rules of our language.
\begin{figure}[!t]
\fbox{\begin{minipage}{0.98\linewidth}\hspace{-24pt}\noindent\[
  \begin{array}{c}
    \inferrule{~}{\Ginf[;\lvar:\cty]{\lvar}{\cty}}
    \qquad 
    \inferrule{\Ginf{\trm}{\ty}\quad 
    \Ginf[,\var:\ty;\lvar:{\cty[2]}]{\trm[2]}{\cty[3]}}
    {\Ginf[;\lvar:{\cty[2]}]{\letin{\var}{\trm}{\trm[2]}}{\cty[3]}}\qquad
    \inferrule{\Ginf[;\lvar:\cty]{\trm}{\cty[2]}\quad 
    \Ginf[;\lvar:{\cty[2]}]{\trm[2]}{\cty[3]}}{
      \Ginf[;\lvar:\cty]{\letin{\lvar}{\trm}{\trm[2]}}{\cty[3]}}
    \\
    \\ 
    \inferrule{
        \set{\Ginf {\trm_i} {\reals^{n_i}}\mid i=1,\ldots,k}
        \quad
        \Ginf[;\lvar:\cty]{\trm[2]}{\LDomain{\lop}}\quad 
         (\lop\in\LOp^{m_1,\ldots,m_r}_{n_1,\ldots,n_k;n'_1,\ldots,n'_l})
      }{
        \Ginf[;\lvar:\cty] {\lop(\trm_1,\ldots,\trm_k;\trm[2])}{\CDomain{\lop}}
      }\\ 
      \\ 
    \inferrule{~}{\Ginf[;\lvar:\cty]\tUnit \lUnit}
    \qquad 
    \inferrule{\Ginf[;\lvar:\cty]\trm{\cty[2]}\quad 
    \Ginf[;\lvar:\cty]{\trm[2]}{\cty[3]}}{
        \Ginf[;\lvar:\cty]{\tPair{\trm}{\trm[2]}}{\cty[2]\t* \cty[3]}
    }\qquad
    \inferrule{\Ginf[;\lvar:\cty]\trm{\cty[2]\t*\cty[3]}}{\Ginf[;\lvar:\cty]{\tFst\trm}{\cty[2]}}
    \qquad
    \inferrule{\Ginf[;\lvar:\cty]\trm{\cty[2]\t*\cty[3]}}{\Ginf[;\lvar:\cty]{\tSnd\trm}{\cty[3]}}
    \\ \\ 
    \inferrule{\Ginf[{,\var[2]:\ty[2];\lvar:\cty}]\trm{\cty[3]}}{\Ginf[{;\lvar:\cty}]{\fun{\var[2]}\trm}{\ty[2]\To\cty[3]}}
    \qquad 
    \inferrule{\Ginf[;\lvar:\cty]\trm{\ty[2]\To\cty[3]}\quad 
    \Ginf{\trm[2]}{\ty[2]}}{\Ginf[;\lvar:\cty]{\trm\,\trm[2]}{\cty[3]}}
    \qquad
    \inferrule{\Ginf{\trm}{\ty[2]}\quad \Ginf[;\lvar:\cty]{\trm[2]}{\cty[3]}}
    {\Ginf[;\lvar:\cty]{\copower{\trm}{\trm[2]}}{\copower{\ty[2]}{\cty[3]}}}
    \\ \\ 
    \inferrule{\Ginf[;\lvar:\cty]{\trm}{\copower{\ty[2]}{\cty[3]}}\quad  
    \Ginf[{,\var[2]:\ty[2];\lvar:\cty[3]}]{\trm[2]}{\cty[3]'}}
    {\Ginf[;\lvar:\cty]{\tensMatch{\trm}{\var[2]}{\lvar}{\trm[2]}}{\cty[3]'}}
    \qquad 
\inferrule{\Ginf[;\lvar:\cty]\trm{\cty[2]}}{\Ginf{\lfun{\lvar}\trm}{\cty\multimap \cty[2]}}\\ \\  
\inferrule{\Ginf\trm{\cty[3]\multimap\cty[2]}\quad 
\Ginf[;\lvar:\cty]{\trm[2]}{\cty[3]}}{\Ginf[;\lvar:\cty]{\lapp\trm{\trm[2]}}{\cty[2]}}
\qquad 
\inferrule{~}{\Ginf[;\lvar:\cty]{\zero}{\cty[2]}}\qquad 
\inferrule{\Ginf[;\lvar:\cty]{\trm}{\cty[2]}\quad\Ginf[;\lvar:\cty]{\trm[2]}{\cty[2]} }
{\Ginf[;\lvar:\cty]{\trm+\trm[2]}{\cty[2]}}
\end{array}
\]
 \end{minipage}}
  \caption{Typing rules for the idealised AD target language with linear types, which we consider on top of the rules of Fig. \ref{fig:types1}.\label{fig:minimal-linear-types}}
\end{figure}
We consider the $\beta\eta+$-equational theory 
of Fig. \ref{fig:beta-eta} and \ref{fig:minimal-linear-beta-eta} for our language,
where equations hold on pairs of terms of the same type in the same context.
It includes $\beta\eta$-rules as well as commutative monoid and homomorphism laws.
\begin{figure}[!t]
    \fbox{\begin{minipage}{0.98\linewidth}\begin{align*}
       &\letin{\lvar}{\trm}{\trm[2]} = \subst{\trm[2]}{\sfor{\lvar}{\trm}}\\
       &\tensMatch{\copower{\trm}{\trm[2]}}{\var}{\lvar}{\trm[3]}=\subst{\trm[3]}{\sfor{\var}{\trm},\sfor{\lvar}{\trm[2]}}&&
       \subst{\trm}{\sfor{\var}{\trm[2]}}\freeeq{\var[2],\lvar}\tensMatch{\trm[2]}{\var[2]}{\lvar}{\subst{\trm}{\sfor{\var}{\copower{\var[2]}{\lvar}}}}\\
       &\lapp{(\lfun{\lvar}{\trm})}{\trm[2]} = \subst{\trm}{\sfor{\lvar}{\trm[2]}} && \trm = \lfun{\lvar}{\lapp{\trm}{\lvar}}
       \\
       &\trm + \zero = \trm\qquad \zero + \trm = \trm&& (\trm+\trm[2])+\trm[3]=\trm+(\trm[2]+\trm[3])\qquad 
       \trm + \trm[2] = \trm[2] + \trm\\
       &(\Ginf[;\lvar:\cty]{\trm}{\cty[2]})\Rightarrow \subst{\trm}{\sfor{\lvar}{\zero}}=\zero&& (\Ginf[;\lvar:\cty]{\trm}{\cty[2]})\Rightarrow
       \subst{\trm}{\sfor{\lvar}{\trm[2]+\trm[3]}}=\subst{\trm}{\sfor{\lvar}{\trm[2]}}+\subst{\trm}{\sfor{\lvar}{\trm[3]}}
\end{align*}
      \end{minipage}}
  \caption{Equational rules for the idealised, linear AD language, which we use on top of the 
  rules of Fig. \ref{fig:beta-eta}. In addition to standard $\beta\eta$-rules for $\copower{(-)}{(-)}$- and $\multimap$-types,
  we add rules making $(\zero,+)$ into a commutative monoid on the terms of 
  each linear type as well as rules that say that terms of linear types are homomorphisms in their linear variable.
\label{fig:minimal-linear-beta-eta}
  }
  
\end{figure}
 \section{Semantics of the Source and Target Languages}\label{sec:semantics}

\subsection{Preliminaries}
\subsubsection{Category theory}
We assume familiarity with categories, functors, natural transformations,
and their theory of (co)limits and adjunctions.
We write:
\begin{itemize}
\item unary, binary, and $I$-ary products
as $\terminal$, $X_1\times X_2$, and $\prod_{i\in I}X_i$, writing
$\projection_i$ for the projections and
$()$, $\sPair{x_1}{x_2}$, and $\seq[i\in I]{x_i}$ for the tupling maps;
\item unary, binary, and $I$-ary coproducts
as $\initial$, $X_1 + X_2$, and $\sum_{i\in I}X_i$, writing
$\injection_i$ for the injections and
$[]$, $[{x_1},{x_2}]$, and $\coseq[i\in I]{x_i}$ for the cotupling
maps;
\item exponentials as $Y\Rightarrow X$, writing $\Lambda$ and $\ev$ for currying
 and evaluation.
\end{itemize}

\subsubsection{Commutative Monoids}
A \emph{monoid} $(|X|,0_X,+_X)$ consists of a set $|X|$ with an element 
$0_X\in |X|$ and a function $(+_X):|X|\times |X|\to |X|$
such that $0_X +_X x = x= x+_X 0_X$ for any $x\in |X|$ and 
$x +_X (x' +_X x'')= (x+_Xx')+_X x''$ for any $x,x',x''\in|X|$.
A \emph{monoid} $(|X|,0_X,+_X)$ is called commutative if $x+_X x' =x' +_X x$ 
for all $x,x'\in|X|$.
Given monoids $X$ and $Y$, a function $f:|X|\to|Y|$ is called 
a \emph{homomorphism of monoids} if $f(0_X)=0_Y$ and $f(x+_Xx')=f(x)+_Y f(x')$.
We write $\CMon$ for the category of commutative monoids and their homomorphisms.
We will frequently simply write $0$ for $0_X$ and $+$ for $+_X$, if $X$ is clear from context.
We will sometimes write $\sum_{i=1}^nx_i$ for $((x_1+x_2)+\ldots)\ldots +x_n$.
\begin{example}
The real numbers $\cRR$ form a commutative monoid with $0$ and $+$ equal to the 
number $0$ and ordinary addition.
\end{example}
\begin{example}\label{ex:prod-monoid}
    Given commutative monoids $\seq[i\in I]{X_i}$, we can form the \emph{product monoid}
    $\prod_{i\in I}X_i$ with underlying set 
    $\prod_{i\in I}|X_i|$, $0=\seq[i\in I]{0_{X_i}}$ and 
    $\seq[i\in I]{x_i}+\seq[i\in I]{y_i}\defeq \seq[i\in I]{x_i+y_i}$.
    Given a set $I$ and a commutative monoid $X$, we can form the \emph{power monoid} $I\Rightarrow X\defeq \prod_{i\in I} X$
    as the $I$-fold self-product monoid.
\end{example}
\noindent Ex. \ref{ex:prod-monoid} gives the categorical product in $\CMon$.
We can, for example, construct a commutative monoid structure on any Euclidean space 
$\cRR^k \defeq \set{0,\ldots, k-1}\Rightarrow \cRR$ by combining the one on $\cRR$ with the power monoid structure.

\begin{example}\label{ex:coprod-monoid}
  Given commutative monoids $\seq[i\in I]{X_i}$, we can form the \emph{coproduct monoid}
  $\sum_{i\in I}X_i$ with underlying set 
  $\set{\seq[i\in I]{x_i}\in \prod_{i\in I}{|X_i|}\mid \set{j\in I\mid x_j\neq 0_{X_j}}\textnormal{ is finite}}$, $0=\seq[i\in I]{0_{X_i}}$ and 
  $\seq[i\in I]{x_i}+\seq[i\in I]{y_i}\defeq \seq[i\in I]{x_i+y_i}$.
  Given a set $I$ and a commutative monoid $X$, we can form the \emph{copower monoid} $\copower{I}{X}\defeq \sum_{i\in I} X$
  as the $I$-fold self-coproduct monoid.
We will often write $\copower{i}{x}\defeq \seq[j\in I]{\text{if }j=i\text{ then }x\text{ else } 0_X}\in \copower{I}{X}$.
\end{example}
\noindent Ex. \ref{ex:coprod-monoid} gives the categorical coproduct in $\CMon$.

\begin{example}\label{ex:hom-monoid}
  Given commutative monoids $X$ and $Y$, we can form the commutative monoid 
  $X\multimap Y$ of \emph{homomorphisms from $X$ to $Y$}.
  We define $|X\multimap Y|\defeq \CMon(X,Y)$, $0_{X\multimap Y}\defeq (x\mapsto 0_Y)$, and $f+_{X\multimap Y} g\defeq (x\mapsto f(x)+_Y g(x))$.
\end{example}
\noindent Ex. \ref{ex:hom-monoid} gives the categorical internal hom in $\CMon$.
Commutative monoid homomorphisms $\copower{I}X\to Y$ are in one-to-one correspondence with 
functions $I\to |X\multimap Y|$.

Finally, a category $\catC$ is called $\CMon$-enriched if
we have a commutative monoid structure on each homset $\catC(C,C')$ 
and function composition gives monoid homomorphisms $\catC(C,C')\to \catC(C',C'')\multimap 
\catC(C,C'')$.
In a category $\catC$ with finite products, these products are well-known to be biproducts (i.e. simultaneously products and coproducts) if and only if $\catC$ 
is $\CMon$-enriched (for more details, see, for example \cite{fiore2007differential}):
 define $[]\defeq 0$ and $[f,g]\defeq \pi_1;f + \pi_2;g$
and, conversely, $0\defeq []$ and $f + g\defeq \sPair{\id}{\id};[f,g]$.

\subsection{Abstract Denotational Semantics}
By the universal property of $\Syn$ (Prop. \ref{prop:universal-property-syn}), the language of \S\ref{sec:language} has a 
canonical interpretation in any Cartesian closed category $(\catC,\terminal,\times,\Rightarrow)$,
once we fix $\catC$-objects $\sem{\reals^n}$ to interpret $\reals^n$ and 
 $\catC$-morphisms
$\sem{\op}\in\catC(\sem{\Domain{\op}},\sem{\reals^m})$
to interpret $\op\in \Op_{n_1,...,n_k}^m$.
That is, any Cartesian closed category with such a choice of objects and morphisms is a \emph{categorical model of the source language} of \S\ref{sec:language}.
We interpret types $\ty$ and contexts $\Gamma$ as $\catC$-objects $\sem{\ty}$ and~$\sem{\Gamma}$:
$$\sem{\var_1:\ty_1,\ldots,\var_n:\ty_n}\defeq \sem{\ty_1}\times\ldots\times \sem{\ty_n}\qquad
\sem{\Unit}\defeq \terminal\qquad
\sem{\ty\t*\ty[2]}\defeq \sem{\ty}\times \sem{\ty[2]}\qquad
 \sem{\ty\To\ty[2]}\defeq \sem{\ty}\Rightarrow \sem{\ty[2]}.$$ 
We interpret terms $\Ginf\trm\ty$ as morphisms $\sem{\trm}$ in $\catC(\sem{\Gamma},\sem{\ty})$:
\[
\begin{array}{lll}
  \sem{\op(\trm_1,\ldots,\trm_k)}\defeq (\sem{\trm_1},\ldots,\sem{\trm_k});\sem{\op}
\\
\sem{\var_1:\ty_1,\ldots,\var_n:\ty_n\vdash \var_k:\ty_k}\defeq \pi_k\qquad\quad&
\sem{\letin{\var}{\trm}{\trm[2]}}\defeq (\id, \sem{\trm});\sem{\trm[2]}
\\
\sem{\tUnit}\defeq()\qquad\quad&
\sem{\tPair\trm{\trm[2]}}\defeq\sPair{\sem{\trm}}{\sem{\trm[2]}}\\
\sem{\tFst\trm}\defeq \sem{\trm};\pi_1\qquad\quad\sem{\tSnd\trm}\defeq \sem{\trm};\pi_2 \qquad \quad&
\sem{\fun{\var}\trm}\defeq \Lambda(\sem{\trm})\qquad\quad&
\sem{\trm\,\trm[2]}\defeq \sPair{\sem{\trm}}{\sem{\trm[2]}};\ev.
\end{array}
\]

We discuss how to extend $\sem{-}$ to apply to the full target language 
of \S\ref{sec:minimal-linear-language} by defining an appropriate notion of 
\emph{categorical model for the target language} of \S\ref{sec:minimal-linear-language}.
\begin{definition}[Categorical model of the target language]
By a categorical model of the target language, we mean the following data:
\begin{itemize}
\item A categorical model $\catC$ of the source language.
\item A locally indexed category (see, for example, \cite[\S\S\S 9.3.4]{levy2012call}) $\catL:\catC^{op}\to \Cat$,
i.e.\footnote{A locally $\catC$-indexed category $\catL$ can be equivalently defined as a category $\catL$ enriched over the presheaf category $[\catC^{op},\Set]$.
We prefer to consider locally indexed categories as special cases of indexed categories, instead, as 
CHAD's natural generalization to data types of varying dimension, such as unsized arrays or sum types, 
requires us to work with more general (non-locally) indexed categories \cite{lucatellivakar2021chad}.} a (strict) contravariant functor from $\catC$ to the category $\Cat$ of 
categories, such that $\ob\catL(C)=\ob\catL(C')$ and $\catL(f)(L)=L$ for any object $L$ of $\ob\catL(C)$ 
and any $f:C'\to C$ in $\catC$.
\item $\catL$ is \emph{biadditive}: each category $\catL(C)$ has (chosen) finite biproducts
 $(\terminal,\times)$
and $\catL(f)$ preserves them, for any $f:C'\to C$ in $\catC$, 
in the sense that $\catL(f)(\terminal)=\terminal$ and $\catL(f)(L\times L')=
\catL(f)(L)\times\catL(f)(L')$.
\item $\catL$ \emph{supports $\copower{(-)}{(-)}$-types and 
$\Rightarrow$-types}: $\catL(\pi_1)$ has a left adjoint $\copower[C]{C'}{-}$ and a right 
adjoint functor
$C'\Rightarrow_C -$, for each product projection $\pi_1:C\times C'\to C$
in $\catC$, satisfying a Beck-Chevalley condition\footnote{This condition says that the types $\copower[C]{C'}{L}$ and 
$C'\Rightarrow_C L$ do not depend on $C$.
We need to add this condition to match the syntax of the target language, in which copowers and powers only depend on two argument types.}:
$\copower[C]{C'}{L}=\copower[C'']{C'}{L}$ and 
$C'\Rightarrow_C L=C' \Rightarrow_{C''} L$
for any $C,C''\in\ob \catC$.
We simply write $\copower{C'}{L}$ and $C'\Rightarrow L$.
We write $\Phi$ and $\Psi$ for the natural isomorphisms
$\catL(C)(\copower{C'}{L}, L')\xto{\cong} \catL(C\times C')(L,L')$ 
and $\catL(C\times C')(L,L')\xto{\cong}\catL(C)(L, C'\Rightarrow L')$.
\item $\catL$ \emph{supports Cartesian $\multimap$-types}:
the functor $\catC^{op}\to \Set$;
$C\mapsto \catL(C)(L,L')$ is representable for any objects $L,L'$
of $\catL$.
That is,
we have objects $L\multimap L'$ of $\catC$ with
isomorphisms $ \lLambda:\catL(C)(L,L') \xto{\cong}\catC(C,L\multimap L')$, 
natural in $C$.
\item $\catL$ interprets primitive types and operations: we have a choice $\sem{\creals^n}\in\ob\catL$ to interpret $\creals^n$
and, for each $\lop\in\LOp_{n_1,...,n_k;n'_1,\ldots, n'_l}^{m_1,\ldots,m_r}$, compatible $\catL$-morphisms
$\sem{\lop}$ in $\catL(\sem{\reals^{n_1}}\times\cdots\times \sem{\reals^{n_k}})(\sem{\LDomain{\lop}}, \sem{\CDomain{\lop}})$.
\end{itemize}
\end{definition}
\noindent In particular, any biadditive model of intuitionistic linear/non-linear logic \cite{mellies2009categorical,fiore2007differential,benton1994mixed}
is such a categorical model, as long as we choose interpretations for primitive types and operations.

Next, we turn to the interpretation of our target language in such models, which gives an operational intuition of the different components of a categorical model.
We can interpret linear types $\cty$ as 
objects $\sem{\cty}$ of $\catL$:
\begin{align*}
   & \sem{\lUnit}\defeq \terminal \qquad \sem{\cty\t*\cty[2]}\defeq \sem{\cty}\times \sem{\cty[2]}
   \qquad \sem{\ty\To\cty[2]}\defeq \sem{\ty}\Rightarrow \sem{\cty[2]}\qquad 
   \sem{\copower{\ty}{\cty[2]}}\defeq \copower{\sem{\ty}}{\sem{\cty[2]}}.
    \end{align*}
We can interpret $\cty\multimap \cty[2]$ as the 
$\catC$-object $\sem{\cty\multimap \cty[2]}\defeq \sem{\cty}\multimap \sem{\cty[2]}$.
Finally, we can interpret terms $\Ginf\trm\ty$ as morphisms 
$\sem{\trm}$ in $\catC(\sem{\Gamma},\sem{\ty})$ and terms $\Ginf[;\lvar:\cty]\trm{\cty[2]}$
as $\sem{\trm}$ in $\catL(\sem{\Gamma})(\sem{\cty},\sem{\cty[2]})$:
\begin{align*}
&\sem{\lop(\trm_1,\ldots,\trm_k;\trm[2])}\defeq \sem{\trm[2]};\catL((\sem{\trm_1},\ldots,\sem{\trm_k}))(\sem{\lop})\\
&\sem{\Ginf[;\lvar:\cty]\lvar\cty}\defeq \id[\sem{\cty}]
\qquad \sem{\letin{\var}{\trm}{\trm[2]}}\defeq \catL((\id,\sem{\trm}))(\sem{\trm[2]})\qquad \sem{\letin{\lvar}{\trm}{\trm[2]}}\defeq \sem{\trm};\sem{\trm[2]}\\
& \sem{\tUnit}\defeq()\qquad \sem{\tPair\trm{\trm[2]}}\defeq\sPair{\sem{\trm}}{\sem{\trm[2]}}\qquad
\sem{\tFst\trm}\defeq\sem{\trm};\pi_1\qquad\sem{\tSnd\trm}\defeq \sem{\trm};\pi_2 \\
&\sem{\fun{\var}\trm}\defeq \Psi(\sem{\trm}) \qquad 
\sem{\trm\,\trm[2]}\defeq \catL(\sPair{\id}{\sem{\trm[2]}})(\inv\Psi(\sem{\trm})) \\
&\sem{\copower{\trm}{\trm[2]}}\defeq \catL(\sPair{\id}{\sem{\trm}})(\Phi(\id) ); (\copower{\sem{\ty[2]}}{\sem{\trm[2]}} )
\qquad\sem{\tensMatch{\trm}{\var[2]}{\lvar}{\trm[2]}}\defeq \sem{\trm};\inv\Phi(\sem{\trm[2]}) \\ & 
\sem{\lfun{\lvar}\trm}\defeq \lLambda({\sem{\trm}}) \qquad 
\sem{\lapp{\trm}{\trm[2]}}\defeq \sem{\trm[2]};\inv{\lLambda}(\sem{\trm})\qquad 
\sem{\zero}\defeq []\qquad \sem{\trm+\trm[2]}\defeq \sPair{\id}{\id};[\sem{\trm},\sem{\trm[2]}].
\end{align*}
Observe that we interpret $\zero$ and $+$ using the biproduct structure of $\catL$.

\begin{proposition}
The interpretation $\sem{-}$ of the language of 
\S\ref{sec:minimal-linear-language} in 
categorical models
is both sound and complete 
with respect to the $\beta\eta+$-equational theory: $\trm\bepeq \trm[2]$ iff 
$\sem{\trm}=\sem{\trm[2]}$ in each such~model.
\end{proposition}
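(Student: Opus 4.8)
The plan is to prove soundness and completeness separately, using the standard term-model (classifying category) technique, and to carry out both inductions over the two judgement forms $\Ginf\trm\ty$ and $\Ginf[;\lvar:\cty]\trm{\cty[2]}$ simultaneously.

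For soundness I would first establish the two semantic substitution lemmas on which the compositional definition of $\sem{-}$ implicitly relies: \emph{Cartesian} substitution $\sem{\subst{\trm}{\sfor{\var}{\trm[2]}}}$ is computed by reindexing along the pairing with $\sem{\trm[2]}$ --- that is, by $\catL(\sPair{\id}{\sem{\trm[2]}})$ in the linear case and by precomposition in $\catC$ in the Cartesian case --- whereas \emph{linear} substitution $\subst{\trm}{\sfor{\lvar}{\trm[2]}}$ is computed by composition in the fibre $\catL(\sem{\Gamma})$. These are proved by a routine induction on the typing derivation of $\trm$, matching each syntactic clause to its semantic counterpart. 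With the substitution lemmas in hand, it suffices to check that every axiom of Fig.~\ref{fig:beta-eta} and Fig.~\ref{fig:minimal-linear-beta-eta} is validated in an arbitrary model: the $\beta\eta$-rules for products, powers, copowers, and linear functions follow from the adjunction isomorphisms $\Phi$, $\Psi$, $\lLambda$ and their naturality (and the Beck--Chevalley condition), while the commutative-monoid laws and the homomorphism laws $\subst{\trm}{\sfor{\lvar}{\zero}}=\zero$ and additivity in $\lvar$ follow from the biproduct structure of $\catL$, exactly as in the $\CMon$-enrichment remark of \S\ref{sec:semantics}. Congruence is then automatic, since $\sem{-}$ is defined by structural recursion and morphism equality is a congruence for all the categorical operations used, so replacing a subterm by an equal one preserves the interpretation. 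This yields $\trm\bepeq\trm[2]\Rightarrow\sem{\trm}=\sem{\trm[2]}$ in every model.

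For completeness I would construct the syntactic model and observe that its interpretation is faithful. Concretely, let $\CSyn$ be the category whose objects are Cartesian types and whose morphisms $\ty\to\ty[2]$ are terms $\var:\ty\vdash\trm:\ty[2]$ modulo $\bepeq$, and let $\LSyn:\CSyn^{op}\to\Cat$ send $\ty$ to the category with linear types as objects and terms $\var:\ty;\lvar:\cty\vdash\trm:\cty[2]$ modulo $\bepeq$ as morphisms $\cty\to\cty[2]$, with reindexing given by Cartesian substitution. I would then verify that this is a categorical model of the target language: $\CSyn$ is Cartesian closed (this is Prop.~\ref{prop:universal-property-syn} extended by the new Cartesian type $\cty\multimap\cty[2]$); each fibre $\LSyn(\ty)$ is biadditive, with biproducts built from $\lUnit$, $\cty\t*\cty[2]$, $\zero$, and $+$ using the monoid and homomorphism laws; the copower and power types furnish the left and right adjoints to reindexing along projections, the Beck--Chevalley equation being supplied by the freshness side-conditions of the copower/power rules; and $\cty\multimap\cty[2]$ represents the relevant hom-functor, with $\lLambda$ realized by $\lfun\lvar{-}$ and $\lapp{-}{-}$. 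A final induction shows that in this model $\sem{-}$ sends each type to itself and each term to its own $\bepeq$-class, so $\sem{\trm}=\sem{\trm[2]}$ holds in the term model if and only if $\trm\bepeq\trm[2]$. Hence if $\sem{\trm}=\sem{\trm[2]}$ holds in \emph{every} categorical model, it holds in the term model, giving $\trm\bepeq\trm[2]$.

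I expect the main obstacle to be the bookkeeping inherent in the locally indexed setting: correctly stating and proving the two substitution lemmas (reindexing along a $\catC$-map versus composition within a fibre) and checking that the syntactic copowers and powers form genuine adjoints satisfying Beck--Chevalley rather than mere weak universal properties. The $\eta$-rules together with the freshness conditions are precisely what promote these universal properties to the strict, natural form demanded by the definition of a categorical model, so the delicate part is matching the syntactic side-conditions to the categorical naturality and Beck--Chevalley equations. By contrast, the commutative-monoid and homomorphism fragment is routine once the biproducts are in place.
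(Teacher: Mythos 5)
Your proposal is correct and follows essentially the same route as the paper: soundness by verifying each $\beta\eta+$ axiom in an arbitrary model (with the substitution lemmas you spell out left implicit there), and completeness via exactly the syntactic model $\LSyn:\CSyn^{op}\to\Cat$ you describe, whose morphisms are terms modulo $\bepeq$ so that the interpretation is faithful by construction. The paper merely compresses the verification details you elaborate (adjunctions, Beck--Chevalley, biproducts-from-$\CMon$-enrichment) into a citation to prior work, so your write-up is a legitimate expansion of the same argument rather than a different one.
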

The proof is a minor variation of syntax-semantics correspondences developed in detail in chapters 3 and 5 of 
\cite{vakar2017search}, where we use the well-known result 
that finite products in a category are biproducts iff the category is enriched over commutative monoids \cite{fiore2007differential}.
Soundness follows by case analysis on the $\beta\eta+$-rules.
Completeness follows by the construction of the syntactic model 
$\LSyn:\CSyn^{op}\to\Cat$:
\begin{itemize}
    \item $\CSyn$ extends its full subcategory $\Syn$ with Cartesian $\multimap$-types;
    \item Objects of $\LSyn(\ty)$ are linear types $\cty[2]$ of our target language.
    \item Morphisms in $\LSyn(\ty)(\cty[2],\cty[3])$ are terms
    $\var:\ty;\lvar:\cty[2]\vdash \trm:\cty[3]$ modulo $(\alpha)\beta\eta+$-equivalence.
    \item Identities in $\LSyn(\ty)$ are represented by the terms 
    $\var:\ty;\lvar:\cty[2]\vdash\lvar:\cty[2]$.
    \item Composition of $\var:\ty;\lvar:\cty[2]_1\vdash \trm:\cty[2]_2$
    and $\var:\ty;\lvar:\cty[2]_2\vdash \trm[2]:\cty[2]_3$ in $\LSyn(\ty)$
    is represented by $\var:\ty;\lvar:\cty[2]_1\vdash \letin{\lvar}{\trm}{\trm[2]}:\cty[2]_3$.
    \item Change of base $\LSyn(\trm):\LSyn(\ty)\to\LSyn(\ty')$ along 
    $(\var':\ty'\vdash \trm:\ty)\in \CSyn(\ty',\ty)$ is defined  
    $\LSyn(\trm)(\var:\ty;\lvar:\cty[2]\vdash \trm[2]:\cty[3])\defeq 
    \var':\ty';\lvar:\cty[2]\vdash \letin{\var}{\trm}{\trm[2]}~:~\cty[3]
    $.
    \item All type formers are interpreted in the expected way, based on their notation,
    using introduction and elimination rules for the required structural isomorphisms.
\end{itemize}

\subsection{Concrete Denotational Semantics}
\subsubsection{Sets and Commutative Monoids}
Throughout this paper, we have a particularly simple instance of the abstract semantics of our languages in mind,
as we intend to interpret $\reals^n$ as the usual Euclidean space 
$\RR^n$ (considered as a set) and to interpret each program $\var_1:\reals^{n_1},\ldots,\var_k:\reals^{n_k}\vdash\trm:{\reals^m}$ as a
function $\RR^{n_1}\times\ldots\times \RR^{n_k}\to \RR^m$.
Similarly, we intend to interpret $\creals^n$ as the commutative monoid $\cRR^n$ and each program 
$\var_1:\reals^{n_1},\ldots,\var_k:\reals^{n_k};\var[2]:\creals^{m}\vdash \trm : \creals^r$
as a function $\RR^{n_1}\times\ldots\times \RR^{n_k}\to \cRR^m\multimap \cRR^r$.
That is, we will work with a concrete denotational semantics in terms of sets and commutative monoids.

Some readers will immediately recognize that the free-forgetful adjunction $\Set\leftrightarrows \CMon$ gives a 
model of full intuitionistic linear logic \cite{mellies2009categorical}.
In fact, since $\CMon$ is $\CMon$-enriched, the model is biadditive \cite{fiore2007differential}.

However, we do not need such a rich type system.
For us, the following suffices.
Define $\CMon(X)$, for $X\in\ob\Set$, to have the objects of $\CMon$ 
and homsets $\CMon(X)(Y,Z)\defeq \Set(X,Y\multimap~Z)$.
Identities are defined as $x\mapsto (y\mapsto y)$ and composition 
$f;_{\CMon(X)}g$ is given by $x\mapsto (f(x);_{\CMon}g(x))$.
Given $f\in\Set(X,X')$, we define change-of-base $\CMon(X')\to\CMon(X)$
as $\CMon(f)(g)\defeq f;_{\Set}g$.
$\CMon(-)$ defines a locally indexed category.
By taking $\catC=\Set$ and $\catL(-)=\CMon(-)$, we 
obtain a concrete instance of our abstract semantics.
Indeed, we have natural isomorphisms
\begin{align*}
  &\CMon(X)(\copower{X'}{Y}, Z)\xto{\Phi} 
\CMon(X\times X')(Y,Z)
&&
\CMon(X\times X')(Y, Z)\xto{\Psi} 
\CMon(X)(Y,X'\Rightarrow Z)\\
    &\Phi(f)(x,x')(y)\defeq f(x)(\copower{x'}{y}) && \Psi(f)(x)(y)(x')\defeq f(x,x')(y)\\[-4pt] 
&\inv\Phi(f)(x)(\sum_{i=1}^n(\copower{x'_i}{y_i}))\defeq 
\sum_{i=1}^n f(x,x'_i)(y_i)&&
\inv\Psi(f)(x,x')(y)\defeq f(x)(y)(x').\\[-18pt]
\end{align*}

The main motivating examples of morphisms in this category are derivatives.
Recall that the \emph{derivative at $x$}, $Df(x)$, and \emph{transposed derivative at $x$},
$\transpose{Df}(x)$, 
of a \smooth{} function $f:\RR^n\to\RR^m$ are defined as the unique functions 
$Df(x):\RR^n\to \RR^m$ and $\transpose{Df}(x):\RR^m\to\RR^n$ satisfying
$$
Df(x)(v)=\lim_{\delta\to 0}\frac{f(x+\delta\cdot v)-f(x)}{\delta}
\qquad 
\innerprod{\transpose{Df}(x)(w)}{v}=\innerprod{w}{Df(x)(v)}
,
$$
where we write $\innerprod{v}{v'}$ for the inner product
$\sum_{i=1}^n (\pi_i v)\cdot (\pi_i v')$ 
of vectors $v,v'\in\RR^n$.
Now, for \smooth{} $f:\RR^n\to \RR^m$, 
 $Df$ and $\transpose{Df}$
give maps in $\CMon(\RR^n)(\cRR^n,\cRR^m)$ and 
$\CMon(\RR^n)(\cRR^m,\cRR^n)$, respectively.
Indeed, derivatives $Df(x)$ of $f$ at $x$ are linear functions,
as are transposed derivatives $\transpose{Df}(x)$.
When $f$ is twice differentiable, both depend differentiably on $x$.
Note that the derivatives are not merely linear in the sense of preserving $0$ and 
$+$.
They are also multiplicative in the sense that
$(Df)(x)(c\cdot v)=c\cdot (Df)(x)(v)$.
{We could have captured this property by working with vector spaces rather than commutative monoids.
However, we will not need this property to phrase or establish correctness of AD.}
Therefore, we restrict our attention to the more straightforward structure of 
commutative monoids.

Defining $\sem{\creals^n}\defeq \cRR^n$ and interpreting
each $\lop\in\LOp$ as the (\smooth{}) function
$\sem{\lop}:(\RR^{n_1}\times \ldots\times \RR^{n_k})\To 
(\cRR^{n'_1}\times \ldots\times \cRR^{n'_l})\multimap (\cRR^{m_1}\times\cdots\times \cRR^{m_r})$  
that it is intended to represent, we obtain a canonical interpretation 
of our target language in $\CMon$.

\subsection{Operational Semantics}
In this section, we describe an operational semantics for our 
source and target languages.
We consider call-by-value evaluation, but similar results can be obtained for call-by-name evaluation\footnote{In fact, we 
conjecture that our target language is pure in the sense that reductions are confluent.
}.
We present this semantics in big-step style.
Finally, we show that our denotational semantics are adequate with respect to this 
 operational semantics, thereby showing that the 
denotational semantics are sound tools for reasoning about our programs.

We consider the following program \emph{values}, where we write $\cnst$ for $\op()$ and $\lcnst(\lvar)$ for $\lop(;\lvar)$:\\
\begin{syntax}
    \val, \val[2], \val[3] & \gdefinedby & & \syncat{values}                          \\
    &\gor& \var\\
    &\gor & \cnst\\
    &\gor & \tUnit\\
    &\gor & \tPair{\val}{\val[2]}
\end{syntax}\qquad\qquad\qquad
  \begin{syntax}
    &\gor &\lvar\\
    &\gor\quad\,&   \fun{\var}{\trm}\\
    &\gor & \lfun{\lvar}{\trm}\\
    &\gor & \copower{\val_1}{\val[2]_1}+ (\copower{\val_2}{\val[2]_2}+(\cdots  \copower{\val_n}{\val[2]_n})\cdots)\\
    &\gor & \lcnst(\lvar).
\end{syntax} \\
We then define the big-step reduction relation $\trm\Downarrow \val$,
which says that a program
$\trm$ evaluates to the value $\val$, in Fig. \ref{fig:operational-semantics}.
To define this semantics, 
we assume that our languages contain, at least, nullary operations $\op=\cnst$ 
for all constants $c\in\RR^n$ and nullary linear operations $\lcnst$ for all 
linear maps (matrices) $lc\in \cRR^n\multimap \cRR^m$.
For all operations $\op$ and linear operations $\lop$, we assume that an intended semantics $\sem{\op}$ and $\sem{\lop}$ is specified as (functions on) vectors of reals.
\begin{figure}[!t]
  \fbox{\parbox{\linewidth}{\begin{align*}
    \inferrule{\text{$\val$ is not a linear map between tuples of real arrays}}
    {\val\Downarrow\val}
    &&
    \inferrule{     (\Ginf[;\lvar:\creals^{n_1}\t*\cdots\t*\creals^{n_k}]\lvar{\creals^{n_1}\t*\cdots\t*\creals^{n_k}})\!}
    {\lvar\Downarrow \underline{id}(;\lvar)}
    &&
    \inferrule{~}
    {\lcnst(\lvar)\Downarrow\lcnst(\lvar)}
\end{align*}
\begin{align*}
  \inferrule{\trm\Downarrow \val\quad \subst{\trm[2]}{\sfor{\var}{\val}}\Downarrow \val[2]}
  {\letin{\var}{\trm}{\trm[2]}\Downarrow \val[2]}
  &&
  \inferrule{\trm_1\Downarrow \cnst_1\quad\cdots\quad \trm_n\Downarrow \cnst_n}
  {\op(\trm_1,\ldots,\trm_n)\Downarrow \underline{\sem{\op}(c_1,\ldots,c_n)}}
  &&
  \inferrule{\trm\Downarrow \val\quad\trm[2]\Downarrow\val[2]}
  {\tPair{\trm}{\trm[2]}\Downarrow \tPair{\val}{\val[2]}} 
  &&
    \inferrule{\trm\Downarrow\tPair{\val}{\val[2]}}
    {\tFst{\trm}\Downarrow\val}
    && 
    \inferrule{\trm\Downarrow\tPair{\val}{\val[2]}}
    {\tSnd{\trm}\Downarrow\val[2]}
    \end{align*}
    \begin{align*}
      \inferrule{\trm\Downarrow\fun{\var}\trm[3]\quad 
      \trm[2]\Downarrow \val\quad 
      \subst{\trm[3]}{\sfor{\var}{\val}}\Downarrow \val[2]}
      {\trm\,\trm[2]\Downarrow \val[2]}
      &&
    \inferrule{\trm\Downarrow \val\quad \subst{\trm[2]}{\sfor{\lvar}{\val}}\Downarrow \val[2]}
    {\letin{\lvar}{\trm}{\trm[2]}\Downarrow \val[2]}
    &&
    \inferrule{\trm_1\Downarrow \cnst_1\quad \cdots\quad 
    \trm_n\Downarrow \cnst_n\quad 
    \trm[2]\Downarrow \lcnst(\lvar)
    }
    {\lop(\trm_1,\ldots,\trm_n;\trm[2])\Downarrow \underline{\sem{\lop}(c_1,\ldots,c_n;lc)}(\lvar)}
    &&
    \inferrule{\trm\Downarrow \val\quad 
    \trm[2]\Downarrow\val[2]}
    {\copower{\trm}{\trm[2]}\Downarrow \copower{\val}{\val[2]}}
    \end{align*}
    \begin{align*}
    \inferrule{\trm\Downarrow \copower{\val_1}{\val[2]_1}+ (\copower{\val_2}{\val[2]_2}+(\cdots  \copower{\val_n}{\val[2]_n})\cdots) \quad
    \set{\subst{\trm[2]}{\sfor{\var[2]}{\val_i},\sfor{\lvar}{\val[2]_i}}\Downarrow \val[3]_i}_{i=1}^n\qquad 
    \val[3]_1+ (\val[3]_2+(\cdots  \val[3]_n)\cdots)\Downarrow \val[3]
    }
    {\tensMatch{\trm}{\var[2]}{\lvar}{\trm[2]}\Downarrow \val[3]}
    \end{align*}
    \begin{align*}
      \inferrule{\trm\Downarrow \lfun{\lvar}\val\quad 
      \trm[2]\Downarrow\val[2]\quad 
      \subst{\val}{\sfor{\lvar}{\val[2]}}\Downarrow \val[3]}
      {\lapp{\trm}{\trm[2]}\Downarrow\val[3]}
      &&
      \inferrule{(\Ginf[;\lvar:\creals^{n_1}\t*\cdots\t*\creals^{n_k}]\zero{\creals^{m_1}\t*\cdots\t*\creals^{m_l}})}
      {\zero\Downarrow \zero(;\lvar)}
      \end{align*}
    \begin{align*}
      \inferrule{\trm_1\Downarrow \lcnst_1(\lvar)
      \quad \trm_2\Downarrow \lcnst_2(\lvar)}
      {\trm_1+\trm_2\Downarrow \underline{(lc_1+lc_2 )}(\lvar)}
      &&
      \inferrule{\trm_1\Downarrow \tUnit \quad \trm_2\Downarrow \tUnit 
      }
      {\trm_1+\trm_2\Downarrow \tUnit }
      \quad
      \inferrule{\trm_1\Downarrow \tPair{\val_1}{\val[2]_1} \quad \trm_2\Downarrow \tPair{\val_2}{\val[2]_2}
      \quad 
      \val_1+\val_2\Downarrow \val
      \quad 
      \val[2]_1+\val[2]_2\Downarrow \val[2]
      }
      {\trm_1+\trm_2\Downarrow \tPair{\val}{\val[2]} }
    \end{align*}
    \begin{align*}
      \inferrule{\trm_1\Downarrow \fun\var\trm[2]_1\quad 
      \trm_2\Downarrow\fun\var\trm[2]_2}
      {\trm_1+\trm_2\Downarrow \fun\var \trm[2]_1 + \trm[2]_2}
    \end{align*}
    \begin{align*}
      \inferrule{\trm_1\Downarrow \copower{\val_{1}}{\val[2]_1}+ (\copower{\val_2}{\val[2]_2}+(\cdots  \copower{\val_n}{\val[2]_n})\cdots) 
      \qquad 
      \trm_2\Downarrow \copower{\val_{n+1}}{\val[2]_{n+1}}+ (\copower{\val_{n+2}}{\val[2]_{n+2}}+(\cdots  \copower{\val_{n+m}}{\val[2]_{n+m}})\cdots)}
      {\trm_1+\trm_2\Downarrow \copower{\val_1}{\val[2]_1}+ (\copower{\val_2}{\val[2]_2}+(\cdots  \copower{\val_{n+m}}{\val[2]_{n+m}})\cdots)}
    \end{align*}
 }}
\caption{\label{fig:operational-semantics} The big-step call-by-value operational semantics $\trm\Downarrow \val$ for the 
source and target languages.
In the first rule, we intend to indicate that $\val\Downarrow \val$ unless $\val$ is a linear function 
between tuples of real arrays, i.e. unless it has a judgement of the form $\Gamma;\lvar:\creals^{n_1}\t*\cdots\t*\creals^{n_k} \vdash \val:{\creals^{m_1}\t*\cdots\t*\creals^{m_l}}$.}
\end{figure}
As a side note, we observe that this operational semantics has the following basic properties:
\begin{lemma}[Subject Reduction, Termination, Determinism]
If $\Ginf\trm\ty$ then there is a unique value $\val$ such that $\trm\Downarrow \val$. Then, $\Ginf\val\ty$.
Similarly, if $\Ginf[;\lvar:\cty]\trm{\cty[2]}$, then there is a unique value $\val$ such that $\trm\Downarrow \val$. Then, $\Ginf[;\lvar:\cty]\val{\cty[2]}$.
\end{lemma}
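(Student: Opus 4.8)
The plan is to establish the three properties in the order subject reduction, then termination (existence of an evaluating value), then determinism (uniqueness), since uniqueness becomes easy once we know the canonical shape a value of a given type must take. First I would record the two \emph{substitution lemmas}: Cartesian substitution preserves both judgements (e.g.\ $\Ginf[,\var:\ty]{\trm}{\ty[2]}$ and $\Ginf{\val}{\ty}$ imply $\Ginf{\subst{\trm}{\sfor{\var}{\val}}}{\ty[2]}$), and linear substitution preserves the linear judgement ($\Ginf[;\lvar:\cty]{\trm}{\cty[2]}$ and $\Ginf[;\lvar:\cty[3]]{\val}{\cty}$ imply $\Ginf[;\lvar:\cty[3]]{\subst{\trm}{\sfor{\lvar}{\val}}}{\cty[2]}$); both are routine inductions on typing derivations. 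With these in hand, \textbf{subject reduction} follows by induction on the derivation of $\trm\Downarrow\val$: each evaluation rule either returns a value of the correct type directly or builds its result by substituting already-evaluated (hence, by the induction hypothesis, well-typed) values into a well-typed body, so the substitution lemmas apply. The only slightly delicate cases are copower elimination and the additive rules, where one first inverts the typing of the copower/sum value in order to type the summands before retyping the reassembled result.

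Next I would prove a \emph{canonical forms lemma}: a closed value of type $\reals^n$ is some $\cnst$, of type $\Unit$ is $\tUnit$, of type $\ty\t*\ty[2]$ is a pair, of type $\ty\To\ty[2]$ is a $\lambda$-abstraction, a value of the Cartesian type $\cty\multimap\cty[2]$ is a linear abstraction $\lfun\lvar\trm$, and analogously each linear value has the shape forced by its linear type (a sum of copowers at a copower type, and so on); this is immediate from subject reduction together with inversion on the typing rules. \textbf{Determinism} then follows by induction on one of the two derivations $\trm\Downarrow\val_1$, $\trm\Downarrow\val_2$: the rules are syntax-directed except for the family of rules for $\trm_1+\trm_2$, and there the type of $\trm_1+\trm_2$ together with canonical forms pins down which rule must fire, forcing both derivations to use the same rule and to agree on their sub-results.

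The substantive part is \textbf{termination}, i.e.\ the existence of an evaluating value; plain structural induction fails at function application, so I would run a Tait-style logical-relations (reducibility) argument. I would define, by a single simultaneous induction on types, reducibility predicates for both judgement forms: $\mathrm{Red}_\ty$ collecting closed terms $\vdash\trm:\ty$ that evaluate to a hereditarily reducible value (at $\ty\To\ty[2]$ a value $\fun\var\trm$ is reducible exactly when $\subst{\trm}{\sfor{\var}{\val}}\in\mathrm{Red}_{\ty[2]}$ for every reducible value $\val$ of type $\ty$), together with a companion predicate on linear terms $;\lvar:\cty\vdash\trm:\cty[2]$ covering $\copower{(-)}{(-)}$-, power-, product- and $\multimap$-types. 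The \emph{fundamental lemma}---every well-typed term is reducible once its free Cartesian variables are replaced by reducible values, while for the linear judgement the distinguished variable $\lvar$ is kept free---then goes by induction on the typing derivation, and termination is its closed-term instance.

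The hard part will be the linear fragment of the fundamental lemma. The reducibility predicate has to be closed under the monoid operations $\zero$ and $+$ and under the copower-elimination rule $\tensMatch{\trm}{\var[2]}{\lvar}{\trm[2]}$, which evaluates $\trm$ to a sum $\copower{\val_1}{\val[2]_1}+\cdots+\copower{\val_n}{\val[2]_n}$ of statically unbounded length, substitutes into $\trm[2]$ for each summand, and re-sums the outcomes; to push the induction through I expect to need a lemma that reducibility is preserved under finite sums, so that the copower case reduces to the reducibility of each $\subst{\trm[2]}{\sfor{\var[2]}{\val_i},\sfor{\lvar}{\val[2]_i}}$. Because the two judgements interact through the Cartesian linear-function type $\cty\multimap\cty[2]$---a linear abstraction $\lfun\lvar\trm$ is a Cartesian value, whereas $\lapp{\trm}{\trm[2]}$ lives in the linear judgement---the two reducibility predicates must be defined together, mirroring logical-relations treatments of value/computation calculi such as the enriched effect calculus \cite{egger2009enriching}. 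Since the language has no recursion or iteration, no step-indexing or domain theory is needed, and the whole argument stays a finitary structural induction.
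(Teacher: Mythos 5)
Your proposal is correct and matches the paper's approach: the paper offers no detailed proof, stating only that these properties "are proved through a standard logical relations argument," and your plan — routine substitution/canonical-forms inductions for subject reduction and determinism, with a Tait-style reducibility predicate defined simultaneously for the Cartesian and linear judgements to handle termination (the only part where plain structural induction fails) — is precisely the standard argument the paper is alluding to.
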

Subject reduction and termination are proved by a standard logical relations argument similar to those in \cite{DBLP:journals/corr/abs-1907-11133}.
Determinism follows by noting that all rules in the definition of $\Downarrow$ have conclusions $\trm\Downarrow \val$ with disjoint $\trm$. 

\noindent In fact, since every well-typed program $\trm$ has a unique value $\val$ such that 
$\trm\Downarrow \val$, we write $\Downarrow \trm$ for this $\val$.

We assume that only first-order types are observable (i.e., have decidable equality on their values):\\
\begin{syntax}
    \phi, \psi& \gdefinedby & & \syncat{first-order Cartesian types}                          \\
    &\gor& \reals^n\\
    &\gor & \Unit\\
    &\gor & \phi\t*\psi \end{syntax}\qquad\qquad\qquad
  \begin{syntax}
    \phi, \psi& \gdefinedby & & \syncat{first-order linear types}                          \\
    &\gor& \creals^n\\
    &\gor & \lUnit\\
    &\gor & \underline{\phi}\t*\underline{\psi}.
\end{syntax} \\

We define \emph{program contexts $C[\_]$} to be programs $C[\_]$ 
that use the variable $\_$ exactly once.
We call such program contexts \emph{of first-order type} if they satisfy the typing judgement $\_:\ty\vdash C[\_]:\phi$ for first-order Cartesian type $\phi$
or $\_:\ty;\lvar:\underline{\psi}\vdash C[\_]:\underline{\phi}$ for first-order linear types $\underline{\psi}$ and $\underline{\phi}$.
We write $C[\trm]$ for the {capturing} substitution of $\trm$ for $\_$ in $C[\_]$.
This operational semantics and notion of observable types lead us to define \emph{observational equivalence} (also known as contextual equivalence) $\trm\approx\trm[2]$ of programs $\cdot\vdash \trm,\trm[2]:\ty$, where we say that $\trm\approx\trm[2]$ holds if $ \Downarrow C[\trm]=\Downarrow C[\trm[2]]$ for all program contexts of first-order type.
Similarly, we call two programs $\cdot;\lvar:\cty\vdash \trm,\trm[2]:\cty[2]$ of linear type observationally equivalent (write also $\trm\approx \trm[2]$) if $\lfun{\lvar}\trm\approx\lfun{\lvar}\trm[2]$.

Note that we consider values $\cdot;\lvar:\underline{\psi}\vdash\val:\underline{\phi}$ for first-order linear types $\underline{\psi}$ and $\underline{\phi}$
to be observable, since linear functions between finite-dimensional spaces are finite-dimensional objects that can be fully observed by evaluating them on a (finite) basis for their domain type $\underline{\psi}$.
Indeed, such values $\val$ are always of the form $\lcnst(\lvar)$ for some $lc:\cRR^n\to \cRR^m$, and hence are effectively matrices.

We first show two standard lemmas.
\begin{lemma}[Compositionality of $\sem{-}$]
For any two terms $\Ginf{\trm,\trm[2]}\ty$ and any type-compatible 
program context $C[\_]$ we have that 
$\sem{\trm}=\sem{\trm[2]}$ implies $\sem{C[\trm]}=\sem{C[\trm[2]]}$.
\end{lemma}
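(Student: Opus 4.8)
The plan is to prove the statement by structural induction on the program context $C[\_]$, making essential use of the fact that the interpretation $\sem{-}$ is defined clause by clause, so that the denotation of every compound term is a fixed function of the denotations of its immediate subterms. Since $C[\_]$ contains the hole $\_$ exactly once, every non-trivial context has exactly one immediate subcontext $C'[\_]$ that carries the hole; all sibling subterms are ordinary terms that do not mention $\_$, and hence their denotations are the same whether we plug in $\trm$ or $\trm[2]$. In the base case $C[\_]=\_$ we have $C[\trm]=\trm$ and $C[\trm[2]]=\trm[2]$, so the desired equality $\sem{C[\trm]}=\sem{C[\trm[2]]}$ is precisely the hypothesis $\sem{\trm}=\sem{\trm[2]}$.

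For the inductive step I run through each term former of both languages---operations $\op$, $\mathbf{let}$-bindings, tuples and projections, $\lambda$-abstraction and application, and their linear counterparts: linear operations $\lop$, copower introduction and elimination, linear abstraction and application, $\zero$, and $+$. In each case the relevant semantic clause exhibits $\sem{C[\trm]}$ as the result of applying one fixed semantic operation (composition, pairing $\sPair{-}{-}$, the currying isomorphisms $\Lambda$, $\Psi$, $\lLambda$, the copower isomorphisms $\Phi$ and $\inv\Phi$, change of base $\catL(-)$, or the biproduct (co)tupling used for $\zero$ and $+$) to $\sem{C'[\trm]}$ together with the fixed denotations of the sibling subterms. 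Because the induction hypothesis gives $\sem{C'[\trm]}=\sem{C'[\trm[2]]}$, applying the same operation to equal arguments yields $\sem{C[\trm]}=\sem{C[\trm[2]]}$. I carry out this induction simultaneously over contexts producing a Cartesian term and contexts producing a linear term, since a Cartesian hole may sit inside a linear term (for example in $\copower{C'[\_]}{\trm[2]}$ or $\lop(\ldots,C'[\_],\ldots;\trm[2])$) and, via $\lfun{\lvar}{\trm}$ and $\lapp{\trm}{\trm[2]}$, the two judgement forms feed into one another.

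The one step that requires genuine care---and the main obstacle---is the treatment of binders: contexts in which the hole lies beneath a $\lambda$-abstraction, a (Cartesian or linear) $\mathbf{let}$-binding, or a copower elimination. There the subcontext $C'[\_]$ is typed in a context extended by the freshly bound variable, so the induction hypothesis cannot be applied to $\trm$ and $\trm[2]$ in the original context $\Gamma$, but only to their weakenings. To make the induction go through I strengthen the statement so that the hole may sit in an extended context $\Gamma,\Delta$ of type $\ty$ and the hypothesis reads $\sem{\trm}=\sem{\trm[2]}$ as morphisms $\sem{\Gamma,\Delta}\to\sem{\ty}$. The equality $\sem{\trm}=\sem{\trm[2]}$ in $\Gamma$ is lifted through each binder by the routine weakening lemma $\sem{\Ginf[,\var:\ty[2]]{\trm}{\ty}}=\pi_1;\sem{\Ginf{\trm}{\ty}}$ (and its linear analogue), proved by an easy separate induction on $\trm$. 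With this generalization the binder cases collapse to the same pattern as all the others, since the clauses for $\lambda$, $\mathbf{let}$, and copower elimination are again fixed operations ($\Lambda$, change of base $\catL(\sPair{\id}{-})$ or $\catL((\id,-))$, and $\inv\Phi$) applied to the denotation of the subcontext.
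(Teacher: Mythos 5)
Your proof is correct and takes essentially the same route as the paper, which disposes of this lemma with the single line that it is proved by induction on the structure of terms (i.e., on the context $C[\_]$). The strengthening of the induction hypothesis to extended contexts, together with the semantic weakening lemma $\sem{\Ginf[,\var:\ty[2]]{\trm}{\ty}}=\pi_1;\sem{\Ginf{\trm}{\ty}}$ and the mutual induction over Cartesian and linear judgements, is precisely the bookkeeping that the paper's one-line proof leaves implicit.
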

\noindent This is proved by induction on the structure of terms.
\begin{lemma}[Soundness of $\Downarrow$]
If $\trm$ is well-typed, we have that $\sem{\trm}=\sem{\Downarrow \trm}$.
\end{lemma}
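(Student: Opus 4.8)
The plan is to prove the statement by rule induction on the big-step derivation of $\trm\Downarrow\val$ (equivalently, by induction on the height of the derivation tree in Fig.~\ref{fig:operational-semantics}), establishing in each case that $\sem{\trm}=\sem{\val}$, so that in particular $\sem{\trm}=\sem{\Downarrow\trm}$ for every well-typed program $\trm$. The observation that organises the argument is that, with the single exception of the rules for (linear) operations, every reduction rule is an instance of a $\beta\eta+$-equation once the immediate subterms are replaced by the values they evaluate to. I would therefore reduce most cases to the soundness direction of the preceding syntax--semantics correspondence (the Proposition stating $\trm\bepeq\trm[2]$ iff $\sem{\trm}=\sem{\trm[2]}$), handling the operation rules by a direct computation in the concrete $\Set$/$\CMon$ semantics.

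Concretely, for each structural rule I would first apply the induction hypotheses to the premises, obtaining equalities such as $\trm\bepeq\val$ for each evaluated subterm; since $\bepeq$ is a congruence, these may be substituted inside the surrounding term former. It then remains to check that the resulting term is $\beta\eta+$-equal to the value the rule produces. For instance, in the rule for $\letin{\var}{\trm}{\trm[2]}$ the induction hypotheses give $\trm\bepeq\val$ and $\subst{\trm[2]}{\sfor{\var}{\val}}\bepeq\val[2]$, whence $\letin{\var}{\trm}{\trm[2]}\bepeq\letin{\var}{\val}{\trm[2]}\bepeq\subst{\trm[2]}{\sfor{\var}{\val}}\bepeq\val[2]$ using the let-$\beta$ rule of Fig.~\ref{fig:beta-eta}, and soundness then yields $\sem{\letin{\var}{\trm}{\trm[2]}}=\sem{\val[2]}$. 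The cases for products and projections, function abstraction and application, linear let-bindings, linear abstraction and application, and copower introduction are entirely analogous, each invoking the matching $\beta$- or $\eta$-rule from Fig.~\ref{fig:beta-eta} or Fig.~\ref{fig:minimal-linear-beta-eta}.

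For the (linear) operation rules I would argue directly, using the Compositionality lemma and the concrete interpretation of constants. If $\trm_i\Downarrow\cnst_i$, then by the induction hypothesis $\sem{\trm_i}=\sem{\cnst_i}=c_i$, and by the definition $\sem{\op(\trm_1,\ldots,\trm_k)}=(\sem{\trm_1},\ldots,\sem{\trm_k});\sem{\op}$ this equals $\sem{\op}(c_1,\ldots,c_k)$, which is precisely the denotation $\sem{\underline d}=d$ of the produced value $\underline d$ for $d=\sem{\op}(c_1,\ldots,c_k)$. The linear operation rule is treated the same way, using $\sem{\lop(\trm_1,\ldots,\trm_k;\trm[2])}=\sem{\trm[2]};\catL((\sem{\trm_1},\ldots,\sem{\trm_k}))(\sem{\lop})$ together with the induction hypotheses on the premises $\trm_i\Downarrow\cnst_i$ and $\trm[2]\Downarrow\lcnst(\lvar)$.

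The main obstacle, and the bulk of the genuine work, lies in the rules governing the commutative-monoid structure: the rule evaluating $\tensMatch{\trm}{\var[2]}{\lvar}{\trm[2]}$ when $\trm$ reduces to a sum $\copower{\val_1}{\val[2]_1}+\cdots+\copower{\val_n}{\val[2]_n}$, and the several rules that evaluate $\trm_1+\trm_2$ by pushing the addition through $\tUnit$, pairs, $\lambda$-abstractions and copower sums. For these I would exhibit each rule as a consequence of the copower-$\beta$ rule, the commutative-monoid laws, and---crucially---the homomorphism laws of Fig.~\ref{fig:minimal-linear-beta-eta} asserting that every term of linear type is a monoid homomorphism in its linear variable, so that a linear pattern match distributes over a finite sum and addition commutes with each linear term former. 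Verifying clause by clause that the operational decomposition of sums matches these equational laws is routine but is where all the case analysis concentrates; once each reduction has been displayed as a $\bepeq$-equality, the conclusion $\sem{\trm}=\sem{\val}$ again follows uniformly from soundness of $\sem{-}$ with respect to the $\beta\eta+$-theory.
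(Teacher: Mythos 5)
Your proof is correct and takes essentially the same approach as the paper: the paper's own proof is a one-liner ("by induction on the definition of $\Downarrow$: note that every operational rule is also an equation in the semantics"), which is exactly your argument. Your elaboration—routing the structural and monoid rules through the $\beta\eta+$-theory and its soundness proposition, and computing the operation/constant rules directly in the concrete semantics—merely fills in the details the paper leaves implicit.
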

\noindent This is proved by induction on the definition of $\Downarrow$:
note that every operational rule is also an equation in the semantics.
Then, adequacy follows.
\begin{theorem}[Adequacy]
If $\sem{\trm}=\sem{\trm[2]}$,
it follows that $\trm\approx\trm[2]$.
\end{theorem}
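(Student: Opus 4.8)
The plan is to obtain adequacy by chaining the two preceding lemmas (Compositionality of $\sem{-}$ and Soundness of $\Downarrow$) with one genuinely new ingredient: the faithfulness of $\sem{-}$ on closed values of first-order type. Before anything else I would reduce the linear case to the Cartesian one. For linear terms $\cdot;\lvar:\cty\vdash\trm,\trm[2]:\cty[2]$, observational equivalence is \emph{defined} as $\lfun\lvar\trm\approx\lfun\lvar\trm[2]$, and since $\sem{\lfun\lvar\trm}=\lLambda(\sem{\trm})$ with $\lLambda$ an isomorphism, the hypothesis $\sem{\trm}=\sem{\trm[2]}$ immediately gives $\sem{\lfun\lvar\trm}=\sem{\lfun\lvar\trm[2]}$. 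Hence it suffices to prove the statement for (closed) Cartesian terms $\cdot\vdash\trm,\trm[2]:\ty$ and apply it to the abstractions.

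For Cartesian terms, suppose $\sem{\trm}=\sem{\trm[2]}$ and let $C[\_]$ be an arbitrary program context of first-order type, so that $\_:\ty\vdash C[\_]:\phi$. By Compositionality of $\sem{-}$ we obtain $\sem{C[\trm]}=\sem{C[\trm[2]]}$. By the Subject Reduction, Determinism and Termination lemma, the closed programs $C[\trm]$ and $C[\trm[2]]$ of type $\phi$ evaluate to unique values $\Downarrow C[\trm]$ and $\Downarrow C[\trm[2]]$, again of type $\phi$. By Soundness of $\Downarrow$ these values satisfy $\sem{\Downarrow C[\trm]}=\sem{C[\trm]}=\sem{C[\trm[2]]}=\sem{\Downarrow C[\trm[2]]}$. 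What remains is to deduce that the two values are equal as observable data, i.e.\ that $\Downarrow C[\trm]=\Downarrow C[\trm[2]]$; since $C$ is arbitrary, this is exactly $\trm\approx\trm[2]$.

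The crux is therefore a faithfulness lemma: on closed values of first-order type, $\sem{-}$ is injective. I would prove this by induction on the first-order type $\phi$, using a canonical-forms analysis of the value grammar. At $\phi=\reals^n$ the only closed values are constants $\cnst$, whose denotation is $c\in\RR^n$, so equal denotations force equal constants. At $\phi=\Unit$ the only value is $\tUnit$, so there is nothing to check. At $\phi=\psi\t*\psi'$ every closed value is a pair $\tPair{\val}{\val[2]}$ of closed values of the component types, with denotation $\sPair{\sem{\val}}{\sem{\val[2]}}$, so equality of denotations forces componentwise equality of denotations, and the induction hypothesis finishes the case. For first-order linear types, the observable values $\cdot;\lvar:\underline{\psi}\vdash\val:\underline{\phi}$ are, as already noted, exactly the matrices $\lcnst(\lvar)$ with denotation $lc\in\cRR^n\multimap\cRR^m$, so distinct denotations again correspond to distinct observable values. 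Applying this lemma to $\Downarrow C[\trm]$ and $\Downarrow C[\trm[2]]$ concludes the proof.

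I expect the faithfulness lemma to be the only nontrivial step and the main obstacle: it rests on the canonical-forms classification of the closed values inhabiting each first-order type, and on the fact that the denotation brackets are arranged so that syntactically distinct observable values --- real constants, and linear maps presented as matrices --- are never conflated. Everything else is a routine composition of the two lemmas we may assume, together with Termination to guarantee that $\Downarrow C[\trm]$ and $\Downarrow C[\trm[2]]$ exist and are well-typed values of type $\phi$.
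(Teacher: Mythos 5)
Your proposal is correct and follows essentially the same route as the paper: compositionality of $\sem{-}$, then soundness of $\Downarrow$, then injectivity of the interpretation on values of observable type, exactly as in the paper's proof. The only difference is one of detail: the paper dismisses the faithfulness step as ``easily seen,'' whereas you spell out the canonical-forms induction (and the reduction of the linear case to the Cartesian one via $\lfun\lvar(-)$), both of which are accurate elaborations of what the paper leaves implicit.
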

\begin{proof}
Suppose that $\sem{\trm}=\sem{\trm[2]}$ and let $C[\_]$ be a type-compatible program context of first-order type.
Then,\\ $\sem{\Downarrow C[\trm]}=\sem{C[\trm]}=\sem{C[\trm[2]]}=\sem{\Downarrow C[\trm[2]]}$
by the previous two lemmas.
Finally, as values of observable types are easily seen to be 
faithfully (injectively) interpreted in our denotational semantics, it follows that $\Downarrow C[\trm]=\Downarrow C[\trm[2]]$.
Therefore, $\trm\approx\trm[2]$.
\end{proof}
\noindent That is, the denotational semantics is a sound means for proving 
observational equivalences of the operational semantics.

 \section{Pairing Primals with (Co)Tangents, Categorically}\label{sec:self-dualization}
In this section, we show that any categorical model $\catL:\catC^{op}\to\Cat$
of our target language gives rise to two Cartesian closed categories 
$\Sigma_{\catC}\catL$ and $\Sigma_{\catC}\catL^{op}$.
We believe that these observations of Cartesian closure are novel.
Surprisingly, they are highly relevant for obtaining a principled understanding 
of AD on a higher-order language: the former for forward AD, and the latter for reverse AD.
Applying these constructions to the syntactic category $\LSyn:\CSyn^{op}\to \Cat$ of our target language, 
we produce
a canonical definition of the AD macros as the canonical interpretation of the $\lambda$-calculus 
in the Cartesian closed categories $\Sigma_{\CSyn}\LSyn$ and $\Sigma_{\CSyn}\LSyn^{op}$.
In addition, when we apply this construction to the denotational semantics $\CMon:\Set^{op}\to\Cat$
and invoke a categorical logical relations technique, known as \emph{subsconing},
we find an elegant correctness proof of the source-code transformations.
The abstract construction delineated in this section is in many ways the theoretical crux of this paper.

\subsection{Grothendieck Constructions on Strictly Indexed Categories}\label{ssec:grothendieck-construction}
Recall that for any strictly indexed category, i.e., a (strict) functor $\catL:\catC^{op}\to\Cat$, 
we can consider its total category (or Grothendieck construction) $\Sigma_\catC \catL$,
which is a fibred category over $\catC$ (see \cite[sections A1.1.7, B1.3.1]{johnstone2002sketches}).
We can view it as a $\Sigma$-type of categories, which 
generalizes the Cartesian product.
Concretely, its objects are pairs $(A_1,A_2)$ of objects $A_1$ of $\catC$ and 
$A_2$ of $\catL(A_1)$.
Its morphisms $(A_1,A_2)\to (B_1,B_2)$ are pairs $\sPair{f_1}{f_2}$ of a morphism $f_1:A_1\to{} B_1$ in
$\catC$ 
and a morphism $f_2:A_2\to \catL(f_1)(B_2)$ in $\catL(A_1)$.
Identities are $\id[(A_1,A_2)]\defeq (\id[A_1], \id[A_2])$
and composition is $(f_1,f_2);(g_1,g_2)\defeq (f_1;g_1, f_2; \catL(f_1)(g_2))$.
Furthermore, given a strictly indexed category $\catL:\catC^{op}\to \Cat$,
we can consider its fibrewise dual category $\catL^{op}:\catC^{op}\to \Cat$,
which is defined as the composition $\catC^{op}\xto{\catL}\Cat\xto{op}\Cat$.
Thus, we can apply the same construction to $\catL^{op}$ to obtain a category~$\Sigma_{\catC}\catL^{op}$.

\subsection{Structure of $\Sigma_{\catC}\catL$ and $\Sigma_{\catC}\catL^{op}$ for Locally Indexed Categories}
\S\S\ref{ssec:grothendieck-construction} applies, in particular, to the locally indexed categories of \S\ref{sec:semantics}.
In this case, we will analyze the categorical structure of $\Sigma_{\catC}\catL$ and $\Sigma_{\catC}\catL^{op}$.
For reference, we first give a concrete description.

$\Sigma_{\catC}\catL$ is the following category:
\begin{itemize}
\item objects are pairs $(A_1,A_2)$ of objects $A_1$ of $\catC$ and $A_2$ of $\catL$;
\item morphisms $(A_1,A_2)\to (B_1,B_2)$ are pairs $(f_1,f_2)$ with $f_1:A_1\to B_1\in\catC$ and $f_2:A_2\to B_2\in \catL(A_1)$;
\item composition of $(A_1,A_2)\xto{(f_1,f_2)}(B_1,B_2)$
and $(B_1,B_2)\xto{(g_1,g_2)}(C_1,C_2)$ is given by 
$(f_1;g_1, f_2;\catL(f_1)(g_2))$ and identities $\id[(A_1,A_2)]$ are $(\id[A_1],\id[A_2])$.
\end{itemize}

$\Sigma_{\catC}\catL^{op}$ is the following category:
\begin{itemize}
    \item objects are pairs $(A_1,A_2)$ of objects $A_1$ of $\catC$ and $A_2$ of $\catL$;
    \item morphisms $(A_1,A_2)\to (B_1,B_2)$ are pairs $(f_1,f_2)$ with $f_1:A_1\to B_1\in\catC$ and $f_2:B_2\to A_2\in \catL(A_1)$;
    \item composition of $(A_1,A_2)\xto{(f_1,f_2)}(B_1,B_2)$ and 
    $(B_1,B_2)\xto{(g_1,g_2)}(C_1,C_2)$ is given by 
    $(f_1;g_1, \catL(f_1)(g_2);f_2)$ and identities $\id[(A_1,A_2)]$ are $(\id[A_1],\id[A_2])$.
\end{itemize}

These categories are relevant to automatic differentiation for the following reason.
Let us write $\CartSp$ for the category of Cartesian spaces $\RR^n$ and \smooth{} 
functions between them.
Observe that for any categorical model $\catL:\catC^{op}\to\Cat$ of the target language, 
\begin{align*}
&\Sigma_\catC\catL((A_1,A_2),(B_1,B_2))=\catC(A_1,B_1)\times \catL(A_1)(A_2,B_2)\cong \catC(A_1, B_1\times (A_2\multimap B_2))    \\
&\Sigma_\catC\catL^{op}((A_1,A_2),(B_1,B_2))=\catC(A_1,B_1)\times \catL(A_1)(B_2,A_2)\cong \catC(A_1, B_1\times (B_2\multimap A_2)).
\end{align*}
Then, observing that the composition in these $\Sigma$-types of categories is precisely the chain rule,
we see that the paired-up derivative $\Dsemsymbol$ and transposed derivative $\Dsemrevsymbol$ of \S\S\ref{ssec:pairing-sharing} define functors 
\[
\Dsemsymbol: \CartSp\to \Sigma_\Set\CMon\qquad \qquad\qquad\qquad\Dsemrevsymbol:\CartSp\to\Sigma_\Set\CMon^{op}.
\]
As we will see in \S\ref{sec:ad-transformation}, we can implement (higher-order extensions of) these functors as code transformations
\[
\Dsynsymbol: \Syn\to \Sigma_\CSyn\LSyn\qquad \qquad\qquad\qquad\Dsynrevsymbol:\Syn\to\Sigma_\CSyn\LSyn^{op}.
\]

As we will see, we can derive these code transformations by examining the categorical structure present in $\Sigma_{\catC}\catL$ and 
$\Sigma_{\catC}\catL^{op}$ for categorical models $\catL:\catC^{op}\to \Cat$ of the target language in the sense of \S\ref{sec:semantics}.
We believe the existence of this categorical structure is a novel observation.
We will make heavy use of it to define our AD algorithms and to prove them~correct.
\begin{theorem}\label{thm:grothendieck-ccc-covariant}
For a categorical model $\catL:\catC^{op}\to \Cat$ of the target language, $\Sigma_{\catC}\catL$ has:
\begin{itemize}
    \item terminal object 
$\terminal=(\terminal,\terminal)$ and binary products 
$(A_1,A_2)\times (B_1,B_2)=(A_1\times B_1,A_2\times  B_2)$;
\item exponentials $(A_1,A_2)\Rightarrow (B_1,B_2)=
(A_1\Rightarrow  (B_1\times (A_2\multimap B_2)), A_1\Rightarrow B_2).$
\end{itemize}
\end{theorem}
\begin{proof}
    We have (natural) bijections 
    \begin{align*}
    &\Sigma_{\catC}\catL((A_1,A_2), (\terminal,\terminal))
    =
    \catC(A_1,\terminal)\times \catL(A_1)(A_2,\terminal)
    \cong\terminal\times \terminal
    \cong \terminal\explainr{$\terminal$ terminal in $\catC$ and $\catL(A_1)$}\\
    &\\[-6pt]
    &\Sigma_{\catC}\catL((A_1,A_2), (B_1\times C_1,B_2\times  C_2))
    =\catC(A_1,B_1\times C_1)\times \catL(A_1)(A_2, B_2\times C_2)\hspace{-40pt}\;\\
    &\cong \catC(A_1,B_1)\times \catC(A_1,C_1)\times \catL(A_1)(A_2, B_2)\times 
    \catL(A_1)(A_2, C_2)\explainr{$\times $ product in $\catC$ and $\catL(A_1)$}\\
    &\cong \Sigma_{\catC}\catL((A_1,A_2),(B_1,B_2))\times \Sigma_{\catC}\catL((A_1,A_2), (C_1,C_2))\hspace{-40pt}\;\\
    &\\[-6pt]
    &\Sigma_{\catC}\catL((A_1,A_2)\times (B_1,B_2), (C_1,C_2)) =
        \Sigma_{\catC}\catL((A_1\times B_1,A_2\times  B_2), (C_1,C_2))\hspace{-40pt}\; \\
        &=\catC(A_1\times B_1, C_1)\times \catL(A_1\times B_1)(A_2\times B_2, C_2)\\
        &\cong
        \catC(A_1\times B_1, C_1)\times \catL(A_1\times B_1)(A_2, C_2)\times  \catL(A_1\times B_1)(B_2,C_2)\explainr{$\times$ coproducts in $\catL(A_1\times B_1)$}\\
        &\cong
        \catC(A_1\times B_1, C_1)\times \catL(A_1)(A_2, B_1\Rightarrow C_2)\times  \catL(A_1\times B_1)(B_2,C_2)
        \explainr{$\Rightarrow$-types in $\catL$}\\
        &\cong
        \catC(A_1\times B_1, C_1)\times \catL(A_1)(A_2, B_1\Rightarrow C_2)\times  \catC(A_1\times B_1,B_2\multimap C_2)
        \explainr{Cartesian $\multimap$-types}\\
        &\cong\catC(A_1\times B_1, C_1\times  (B_2\multimap C_2))\times 
        \catL(A_1)(A_2, B_1\Rightarrow C_2)\explainr{$\times$ is product in $\catC$}\\
        &\cong \catC(A_1, B_1\Rightarrow (C_1\times  (B_2\multimap C_2)))\times 
        \catL(A_1)(A_2, B_1\Rightarrow C_2)
        \explainr{$\Rightarrow$ is exponential in $\catC$}\\
        &= \Sigma_{\catC}\catL((A_1,A_2), (B_1\Rightarrow (C_1\times  (B_2\multimap C_2)), B_1\Rightarrow C_2))\\
        &= \Sigma_{\catC}\catL((A_1,A_2), (B_1,B_2)\Rightarrow (C_1,C_2)).\end{align*}\\[-22pt]
    \end{proof}
    We observe that
    we need $\catL$ to have biproducts (equivalently: to be $\CMon$-enriched) in order to show Cartesian closure.
    Furthermore, we need linear $\Rightarrow$-types and Cartesian $\multimap$-types to construct exponentials.
    Codually, we also obtain the Cartesian closure of $\Sigma_\catC \catL^{op}$.
    However, for concreteness, we give the proof explicitly.
\begin{theorem}\label{thm:grothendieck-ccc-contravariant}
    For a categorical model $\catL:\catC^{op}\to \Cat$ of the target language, $\Sigma_{\catC}\catL^{op}$ has:\begin{itemize}\item  terminal object $\terminal=(\terminal,\terminal)$ and binary products
    $(A_1,A_2)\times (B_1,B_2)=(A_1\times B_1,A_2\times  B_2)$;
    \item exponentials $(A_1,A_2)\Rightarrow (B_1,B_2)=
    (A_1\Rightarrow  (B_1\times (B_2\multimap A_2)), \copower{A_1}{B_2}).$
    \end{itemize}
    \end{theorem}
    \begin{proof}
        We have (natural) bijections
        \begin{align*}
            &\Sigma_{\catC}\catL^{op}((A_1,A_2), (\terminal,\terminal))
            =
            \catC(A_1,\terminal)\times \catL(A_1)(\terminal,A_2)
            \cong\terminal\times \terminal
            \cong \terminal
            \explainr{$\terminal$ terminal in $\catC$, initial in $\catL(A_1)$}\\
&\\[-6pt]
        &\Sigma_{\catC}\catL^{op}((A_1,A_2), (B_1\times C_1,B_2\times  C_2))
        =\catC(A_1,B_1\times C_1)\times \catL(A_1)(B_2\times C_2,A_2)\hspace{-60pt}\;\\
        &\cong
        \catC(A_1,B_1)\times \catC(A_1,C_1)\times \catL(A_1)(B_2,A_2)\times 
        \catL(A_1)(C_2,A_2)
        \explainr{$\times $ product in $\catC$, coproduct in $\catL(A_1)$}\\
        &= \Sigma_{\catC}\catL^{op}((A_1,A_2),(B_1,B_2))\times \Sigma_{\catC}\catL^{op}((A_1,A_2), (C_1,C_2))
        \\
        \\[-6pt]
            &\Sigma_{\catC}\catL^{op}((A_1,A_2)\times (B_1,B_2), (C_1,C_2)) =
            \Sigma_{\catC}\catL^{op}((A_1\times B_1,A_2\times  B_2), (C_1,C_2))\hspace{-60pt}\; \\
            &=\catC(A_1\times B_1, C_1)\times \catL(A_1\times B_1)(C_2, A_2\times B_2)\\
            &\cong
            \catC(A_1\times B_1, C_1)\times \catL(A_1\times B_1)(C_2, A_2)\times \catL(A_1\times B_1)(C_2, B_2)
            \explainr{$\times$ is product in $\catL(A_1\times B_1)$}\\
            &\cong            \catC(A_1\times B_1, C_1)\times\catC(A_1\times B_1,C_2\multimap B_2)\times 
            \catL(A_1\times B_1)(C_2, A_2)\hspace{-40pt}\;
             \explainr{Cartesian $\multimap$-types}\\
            &\cong
            \catC(A_1\times B_1, C_1\times  (C_2\multimap B_2))\times 
            \catL(A_1\times B_1)(C_2, A_2)
            \explainr{$\times$ is product in $\catC$}\\
            &\cong
            \catC(A_1,B_1\Rightarrow ( C_1\times  (C_2\multimap B_2)))\times 
            \catL(A_1\times B_1)(C_2, A_2)
            \explainr{$\Rightarrow$ is exponential in $\catC$}\\
            &\cong
            \catC(A_1,B_1\Rightarrow ( C_1\times  (C_2\multimap B_2)))\times 
            \catL(A_1)(\copower{B_1}{C_2}, A_2)
            \explainr{$\copower{(-)}{(-)}$-types} \\
            &=
            \Sigma_{\catC}\catL^{op}((A_1,A_2), (B_1\Rightarrow ( C_1\times  (C_2\multimap B_2)), \copower{B_1}{C_2}))\\
            &= 
            \Sigma_{\catC}\catL^{op}((A_1,A_2), (B_1,B_2)\Rightarrow (C_1,C_2)).\\[-22pt]
        \end{align*}
    \end{proof}
    Observe that we need the biproduct structure of $\catL$ to construct finite products in
    $\Sigma_{\catC}\catL^{op}$.
    Furthermore, we need Cartesian $\multimap$-types 
    and $\copower{(-)}{(-)}$-types, but not biproducts, to construct exponentials.

    Interestingly, the exponentials in $\Sigma_\catC\catL$ and $\Sigma_\catC\catL^{op}$ are not 
    fibred over $\catC$ (unlike their products, for example). Indeed $(A_1, A_2)\Rightarrow (B_1, B_2)$ has first 
    component not equal to $A_1\Rightarrow B_1$.
    In the context of automatic differentiation, this has the consequence that primals associated with
    values $f$ of function type 
    are not equal to $f$ itself. Instead, as we will see, they include both a copy of $f$ and a copy of its (transposed)
    derivative.
    These primals at higher-order types can be contrasted with the situation at first-order types, where values are equal to their associated primal,
    as a result of the finite products being fibred.

 \section{Novel AD Algorithms as Source-Code~Transformations}
\label{sec:ad-transformation}
As $\Sigma_{\CSyn}\LSyn$ and $\Sigma_{\CSyn}\LSyn^{op}$
are both Cartesian 
closed categories by Theorems \ref{thm:grothendieck-ccc-covariant} and \ref{thm:grothendieck-ccc-contravariant},
the universal property of the source language (Prop. \ref{prop:universal-property-syn})
gives us the following definition of forward and reverse mode CHAD as canonical homomorphic functors.
\begin{corollary}[Canonical definition of CHAD]\label{cor:chad-definition}
Once we fix compatible definitions 
$\Dsyn{\reals^n}$ and $\Dsyn{\op}$ (resp. $\Dsynrev{\reals^n}$ and $\Dsynrev{\op}$), 
we obtain a unique structure-preserving functor 
$$\Dsyn{-}:\Syn\to \Sigma_\CSyn \LSyn\qquad\qquad (\text{resp.\quad} \Dsynrev{-}:\Syn\to \Sigma_\CSyn \LSyn^{op}).$$
\end{corollary}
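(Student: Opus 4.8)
The plan is to read the functor off directly from the universal property of the source language, Prop.~\ref{prop:universal-property-syn}, once we know that the two codomain categories are Cartesian closed. The entire content of the corollary is already packaged in two ingredients established earlier: the freeness of $\Syn$ as a Cartesian closed category, and the Cartesian closure of $\Sigma_\CSyn\LSyn$ and $\Sigma_\CSyn\LSyn^{op}$ proved in Thms.~\ref{thm:grothendieck-ccc-covariant} and~\ref{thm:grothendieck-ccc-contravariant}. So the proof is essentially an assembly of these facts, and I would present it as such.

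First I would check that the syntactic model $\LSyn:\CSyn^{op}\to\Cat$ is a categorical model of the target language in the sense required by Thms.~\ref{thm:grothendieck-ccc-covariant} and~\ref{thm:grothendieck-ccc-contravariant}: it must be biadditive and must support $\copower{(-)}{(-)}$-types, $\Rightarrow$-types, and Cartesian $\multimap$-types. Each of these is furnished directly by a corresponding type former of the target-language syntax of \S\ref{sec:minimal-linear-language}, with the $\beta\eta+$-equations providing the required universal properties; this is exactly what is recorded in the construction of $\LSyn$ following the soundness/completeness proposition. Granting this, the two closure theorems apply with $\catC=\CSyn$ and $\catL=\LSyn$, so both $\Sigma_\CSyn\LSyn$ and $\Sigma_\CSyn\LSyn^{op}$ are Cartesian closed categories.

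Second I would observe that a choice of \emph{compatible} data $\Dsyn{\reals^n}$ and $\Dsyn{\op}$ is precisely a choice of objects $\Dsyn{\reals^n}$ in $\Sigma_\CSyn\LSyn$ together with, for each $\op\in\Op^m_{n_1,\ldots,n_k}$, a morphism $\Dsyn{\op}:\Dsyn{\reals^{n_1}}\times\cdots\times\Dsyn{\reals^{n_k}}\to\Dsyn{\reals^m}$, where the product is the one computed in Thm.~\ref{thm:grothendieck-ccc-covariant}; here \emph{compatible} means exactly that the domain and codomain of $\Dsyn{\op}$ match the product and image object just described. This is verbatim the data that Prop.~\ref{prop:universal-property-syn} requires. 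Applying that proposition with $\catC=\Sigma_\CSyn\LSyn$ therefore yields a unique Cartesian closed --- i.e.\ structure-preserving --- functor $\Dsyn{-}:\Syn\to\Sigma_\CSyn\LSyn$ extending the chosen data. Running the identical argument with $\catC=\Sigma_\CSyn\LSyn^{op}$, using instead the product of Thm.~\ref{thm:grothendieck-ccc-contravariant} and the corresponding data $\Dsynrev{\reals^n}$, $\Dsynrev{\op}$, gives the unique $\Dsynrev{-}:\Syn\to\Sigma_\CSyn\LSyn^{op}$.

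The only step requiring genuine care, and hence the main obstacle, is the verification that $\LSyn$ really meets the biadditivity and type-former hypotheses of the Grothendieck-closure theorems, since these are precisely the conditions under which the exponential formulas of Thms.~\ref{thm:grothendieck-ccc-covariant} and~\ref{thm:grothendieck-ccc-contravariant} are well-defined. Everything else is bookkeeping: once the codomain is known to be Cartesian closed and the generator data is seen to be of the correct type, both the existence and the uniqueness of the structure-preserving extension are immediate from the freeness of $\Syn$.
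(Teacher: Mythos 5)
Your proposal is correct and follows essentially the same route as the paper: the corollary is obtained by noting that $\LSyn:\CSyn^{op}\to\Cat$ is (by its syntactic construction) a categorical model of the target language, so Theorems~\ref{thm:grothendieck-ccc-covariant} and~\ref{thm:grothendieck-ccc-contravariant} make $\Sigma_\CSyn\LSyn$ and $\Sigma_\CSyn\LSyn^{op}$ Cartesian closed, and then the universal property of $\Syn$ (Prop.~\ref{prop:universal-property-syn}) yields the unique structure-preserving functors from the chosen generator data.
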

In this section, we discuss
\begin{itemize}
\item the interpretation of the above functors as a 
type-respecting code transformation;
\item how to give the basic definitions $\Dsyn{\reals^n}$, $\Dsynrev{\reals^n}$,  $\Dsyn{\op}$ and $\Dsynrev{\op}$;
\item what the induced AD 
definitions $\Dsyn{\trm}$ and $\Dsynrev{\trm}$ are for arbitrary source language programs $\trm$;
\item some consequences of the sharing of subexpressions that we have employed 
when defining the code transformations.
\end{itemize}

\subsection{Some Notation}
In the rest of this section, we use the following syntactic sugar:
\begin{itemize}
\item a notation for (linear) $n$-ary tuple types: $\tProd{\cty_1}{\ldots}{\cty_n}\defeq \bProd{\bProd{\bProd{\cty_1}{\cty_2}\cdots}{\cty_{n-1}}}{\cty_n}
$;
\item a notation for $n$-ary tuples: $\tTriple{\trm_1}{\cdots}{\trm_n}\defeq \tPair{\tPair{\tPair{\trm_1}{\trm_2}\cdots}{\trm_{n-1}}}{\trm_n}$;
\item given $\Gamma;\lvar:\cty\vdash \trm:\tProd{\cty[2]_1}{\cdots}{\cty[2]_n}$, we write $\Gamma;\lvar:\cty\vdash \tProj{i}(\trm):\cty[2]_i$ for the 
obvious $i$-th projection of $\trm$, which is constructed by repeatedly applying  
$\tFst$ and $\tSnd$ to $\trm$;
\item given $\Gamma;\lvar:\cty\vdash \trm:\cty[2]_i$, we write the $i$-th coprojection $\Gamma;\lvar:\cty\vdash 
\tCoProj{i}(\trm)\defeq \tTuple{\zero,\ldots,\zero,\trm,\zero,\ldots,\zero}
:\tProd{\cty[2]_1}{\cdots}{\cty[2]_n}$;
\item for a list $\var_1,\ldots,\var_n$ of distinct identifiers, we write $\idx{\var_i}{\var_1,\ldots, \var_n}\defeq i$ for the index of the identifier $\var_i$ in this list;
\item a $\mathbf{let}$-binding for tuples: $\pletin{\var}{\var[2]}{\trm}{\trm[2]}\defeq 
\letin{\var[3]}{\trm}{\letin{\var}{\tFst\var[3]}{\letin{\var[2]}{\tSnd\var[3]}{\trm[2]}}},$ where $\var[3]$ is a fresh variable.
\end{itemize}
Furthermore, all variables used in the source-code transformations below are assumed to be freshly chosen.

\subsection{$\Dsyn{-}$ and $\Dsynrev{-}$ as Type-Respecting Code Transformations}
Writing out the definitions of the categories $\Syn$, 
$\Sigma_\CSyn \LSyn$, $\Sigma_\CSyn \LSyn^{op}$, 
$\Dsyn{-}$ and $\Dsynrev{-}$ provide 
for each type $\ty$ of the source language (\S\ref{sec:language}) the following types in the target language (\S\ref{sec:minimal-linear-language}):
\begin{itemize}
    \item a Cartesian type $\Dsyn{\ty}_1$ 
    of forward mode primals;
    \item a linear type $\Dsyn{\ty}_2$ of forward mode tangents;
    \item a Cartesian type $\Dsynrev{\ty}_1$ 
    of reverse mode primals;
    \item a linear type $\Dsynrev{\ty}_2$ of reverse mode cotangents.
\end{itemize}
We can extend the actions of $\Dsyn{-}$ and $\Dsynrev{-}$ to typing contexts
$\Gamma=\var_1:\ty_1,\ldots,\var_n:\ty_n$ as
\begin{align*}
    &\Dsyn{\Gamma}_1\defeq \var_1:\Dsyn{\ty_1}_1,\ldots,\var_n:\Dsyn{\ty_n}_1\qquad
    \;&&\text{(a Cartesian typing context)}\\
    &\Dsyn{\Gamma}_2\defeq \tProd{\Dsyn{\ty_1}_2}{\cdots}{\Dsyn{\ty_n}_2}\qquad\;&&\text{(a linear type)}\\
    &\Dsynrev{\Gamma}_1\defeq \var_1:\Dsynrev{\ty_1}_1,\ldots,\var_n:\Dsynrev{\ty_n}_1\qquad
    \;&&\text{(a Cartesian typing context)}\\
    &\Dsynrev{\Gamma}_2\defeq \tProd{\Dsynrev{\ty_1}_2}{\cdots}{\Dsynrev{\ty_n}_2}\qquad\;&&\text{(a linear type)}.
    \end{align*}
Similarly, $\Dsyn{-}$ and $\Dsynrev{-}$ associate with each source-language 
program $\Gamma\vdash\trm:\ty$ the following programs in the target language (\S\ref{sec:minimal-linear-language}): 
\begin{itemize}
    \item a forward mode primal computation $\Dsyn{\Gamma}_1\vdash\Dsyn[\vGamma]{\trm}_1:\Dsyn{\ty}_1$;
    \item a forward mode tangent computation 
    $\Dsyn{\Gamma}_1;\lvar:\Dsyn{\Gamma}_2\vdash \Dsyn[\vGamma]{\trm}_2:\Dsyn{\ty}_2$;
    \item a reverse mode primal computation $\Dsynrev{\Gamma}_1\vdash\Dsynrev[\vGamma]{\trm}_1:\Dsynrev{\ty}_1$;
    \item a reverse mode cotangent computation 
    $\Dsynrev{\Gamma}_1;\lvar:\Dsynrev{\ty}_2\vdash \Dsynrev[\vGamma]{\trm}_2:\Dsynrev{\Gamma}_2$.
\end{itemize}
Here, we write $\vGamma$ for the list of identifiers $\var_1,\ldots,\var_n$ that occur in the typing context $\Gamma=\var_1:\ty_1,\ldots,\var_n:\ty_n$.
As we will see later, we need to know these context identifiers to define the code transformation.
Equivalently, we can pair up the primal and (co)tangent computations
as 
\begin{itemize}
\item a combined forward mode primal and tangent computation $\Dsyn{\Gamma}_1\vdash \Dsyn[\vGamma]{\trm}:\Dsyn{\ty}_1\t*(\Dsyn{\Gamma}_2\multimap\Dsyn{\ty}_2)$, where $\Dsyn[\vGamma]{\trm}\bepeq \tPair{\Dsyn[\vGamma]{\trm}_1}{\lfun{\lvar}\Dsyn[\vGamma]{\trm}_2}$;
\item a combined reverse mode primal and cotangent computation $\Dsynrev{\Gamma}_1\vdash \Dsynrev[\vGamma]{\trm}:\Dsynrev{\ty}_1\t*(\Dsynrev{\ty}_2\multimap \Dsynrev{\Gamma}_2)$, where $\Dsynrev[\vGamma]{\trm}\bepeq \tPair{\Dsynrev[\vGamma]{\trm}_1}{\lfun{\lvar}\Dsynrev[\vGamma]{\trm}_2}$.
\end{itemize}
We prefer to work with these combined primal and (co)tangent code transformations because doing so allows us to share common subexpressions between the primal and (co)tangent computations using $\mathbf{let}$-bindings.
Indeed, note that the universal property of $\Syn$ only defines the code transformations $\Dsyn{-}$ and $\Dsynrev{-}$ up to $\bepeq$.
In writing down the definitions of CHAD on programs, we make sure to 
choose sensible representatives of these $\beta\eta+$-equivalence classes
that share common subexpressions through $\mathbf{let}$-bindings.
While these $\mathbf{let}$-bindings naturally do not affect
correctness of the transformation, they let us avoid 
code explosion at compile time and unnecessary recomputation at run time.

Finally, because they are defined from a universal property,
our code transformations automatically 
respect equational reasoning in the sense that 
$\Gamma\vdash \trm\beeq \trm[2]:\ty$ implies that $\Dsyn[\vGamma]{\trm}\bepeq\Dsyn[\vGamma]{\trm[2]}$ and $\Dsynrev[\vGamma]{\trm}\bepeq\Dsynrev[\vGamma]{\trm[2]}$.

\subsection{The Basic Definitions:  $\Dsyn{\reals^n}$, $\Dsynrev{\reals^n}$,  $\Dsyn{\op}$ and $\Dsynrev{\op}$}
In \S\ref{sec:minimal-linear-language}, we have assumed that there are suitable terms (for example, linear operations) 
\begin{align*}
&\var_1:\reals^{n_1}, \cdots,\var_k: \reals^{n_k}\;\;;\;\; \lvar:\creals^{n_1}\t* \cdots\t* \creals^{n_k} &\hspace{-5pt}&\vdash D\op(\var_1,\ldots,\var_k;\lvar)\hspace{-5pt} &:\; &\creals^m\\
&\var_1:\reals^{n_1}, \cdots,\var_k: \reals^{n_k}\;\;;\;\; \lvar: \creals^m &\hspace{-5pt}&\vdash \transpose{D\op}(\var_1,\ldots,\var_k;\lvar)\hspace{-5pt}& :\;&\creals^{n_1}\t* \cdots\t* \creals^{n_k}
\end{align*} to represent the forward and 
reverse mode derivatives of the primitive operations $\op\in\Op_{n_1,...,n_k}^m$.
Using these, we define
\begin{flalign*}
    &\Dsyn{\reals^n}_1 &&\defeq&& \reals^n\\
    & {\Dsyn{\reals^n}_2} && \defeq && \creals^n\\
   \\
        &\Dsyn[\vGamma]{\op(\trm_1,\ldots,\trm_k)} && \defeq && \pletin{\var_1}{\var_1'}{\Dsyn[\vGamma]{\trm_1}}{\cdots\pletin{\var_k}{\var_k'}{\Dsyn[\vGamma]{\trm_k}}{\\
       && &&&\tPair{\op(\var_1,\ldots,\var_k)}{\lfun\lvar D\op(\var_1,\ldots,\var_k;\tTriple{\lapp{\var_1'}{\lvar}}{\ldots}{\lapp{\var_k'}{\lvar}})}}}
        \\
            &\Dsynrev{\reals^n}_1 && \defeq && \reals^n\\
            & \Dsynrev{\reals^n}_2 && \defeq && \creals^n\\
       \\
            &\Dsynrev[\vGamma]{\op(\trm_1,\ldots,\trm_k)} && \defeq && \pletin{\var_1}{\var_1'}{\Dsynrev[\vGamma]{\trm_1}}{\cdots
            \pletin{\var_k}{\var_k'}{\Dsynrev[\vGamma]{\trm_k}}{\\
            &&&&&\tPair{\op(\var_1,\ldots,\var_k)}{\lfun\lvar \letin{\lvar}{\transpose{D\op}(\var_1,\ldots,\var_k;\lvar)}{\lapp{\var_1'}{(\tProj{1}{\lvar})}+\cdots+\lapp{\var_k'}{(\tProj{k}{\lvar})}}}}}
        \end{flalign*}
    These basic definitions of CHAD for primitive operations implement the well-known multivariate chain rules for (transposed) derivatives of \S\S\ref{ssec:pairing-sharing}.

    For the AD transformations to be correct, it is important that these derivatives of language
    primitives are implemented correctly in the sense that
    $$
\sem{\var_1,\ldots,\var_k;\lvar\vdash D\op(\var_1,\ldots,\var_k;\lvar)}=D\sem{\op}\qquad \sem{\var_1,\ldots,\var_k;\lvar\vdash \transpose{D\op}(\var_1,\ldots,\var_k;\lvar)}=\transpose{D\sem{\op}}.
    $$
    For example, for elementwise multiplication $(*)\in\Op_{n,n}^n$, which we interpret as the usual elementwise product $\sem{(*)}\defeq (*): \RR^n\times \RR^n\to\RR^n$, we need, by the product rule for differentiation, that
\begin{align*}&\sem{D(*)(\var_1,\var_2;\lvar)}((a_1, a_2), (b_1, b_2))&&=&a_1 * b_2 + a_2 * b_1\\
&\sem{\transpose{D(*)}(\var_1,\var_2;\lvar)}((a_1, a_2),b)&&=&(a_2 * b, a_1 * b).\end{align*}
    By Prop. \ref{prop:universal-property-syn}, the extensions of the AD transformations $\Dsynsymbol$ and $\Dsynrevsymbol$ 
    to the full source language are now canonically determined, as the unique 
    Cartesian closed functors that extend these basic definitions.

    \subsection{The Implied Forward Mode CHAD Definitions}\label{ssec:forward-chad-defs}
We define the types of (forward mode) primals $\Dsyn{\ty}_1$ and tangents $\Dsyn{\ty}_2$ associated with a type $\ty$ as follows:
\begin{flalign*}
&\Dsyn{\Unit}_1 &&\defeq&& \Unit&& \Dsyn{\Unit }_2 &&\defeq&& \lUnit \\
&\Dsyn{\ty\t*\ty[2]}_1 &&\defeq&& \Dsyn{\ty}_1\t*\Dsyn{\ty[2]}_1&& \Dsyn{\ty\t*\ty[2]}_2 &&\defeq&&  \Dsyn{\ty}_2\t*\Dsyn{\ty[2]}_2\\
&\Dsyn{\ty\To\ty[2]}_1 &&\defeq&& \Dsyn{\ty}_1\To(\Dsyn{\ty[2]}_1\t* (\Dsyn{\ty}_2\multimap 
\Dsyn{\ty[2]}_2))\qquad\qquad\qquad\qquad&&\Dsyn{\ty\To\ty[2]}_2 &&\defeq&& \Dsyn{\ty}_1\To\Dsyn{\ty[2]}_2.
\end{flalign*}
Observe that the type of primals associated with a function type is not equal to the original type.
This is a consequence of the non-fibred nature of the exponentials in the $\Sigma$-type category $\Sigma_\CSyn\LSyn$ (\S\ref{sec:self-dualization}).

For programs $\trm$, we define their efficient CHAD transformation $\Dsyn[\vGamma]{\trm}$ 
 as follows:
    \begin{flalign*}
    &\Dsyn[\vGamma]{\var} &&\defeq  && \tPair{\var}{\lfun{\lvar}\tProj{\idx{\var}{\vGamma}}(\lvar)}
    \\
    &\Dsyn[\vGamma]{\letin{\var}{\trm}{\trm[2]}} &&\defeq &&\pletin{\var}{\var'}{\Dsyn[\vGamma]{\trm}}{ \pletin{\var[2]}{\var[2]'}{\Dsyn[\vGamma,\var]{\trm[2]}}{\tPair{\var[2]}{\lfun\lvar
    \lapp{\var[2]'}{\tPair{\lvar}{\lapp{\var'}{\lvar}}}}}}
    \\
    &\Dsyn[\vGamma]{\tUnit}  &&\defeq &&\tPair{\tUnit}{\lfun\lvar\tUnit}
    \\
    &\Dsyn[\vGamma]{\tPair{\trm}{\trm[2]}} &&\defeq &&
    \pletin{\var}{\var'}{\Dsyn[\vGamma]{\trm}}{ \pletin{\var[2]}{\var[2]'}{\Dsyn[\vGamma]{\trm[2]}}{\tPair{\tPair{\var}{\var[2]}}{\lfun\lvar \tPair{\lapp{\var'}\lvar}{\lapp{\var[2]'}\lvar}}}}
    \\&
    \Dsyn[\vGamma]{\tFst\trm} &&\defeq &&
    \pletin{\var}{\var'}{\Dsyn[\vGamma]{\trm}}
    {\tPair{\tFst\var}{\lfun\lvar \tFst(\lapp{\var'}{\lvar})}}
    \\&
    \Dsyn[\vGamma]{\tSnd\trm} &&\defeq &&
    \pletin{\var}{\var'}{\Dsyn[\vGamma]{\trm}}
    {\tPair{\tSnd\var}{\lfun\lvar \tSnd(\lapp{\var'}{\lvar})}}
    \\
        &
    \Dsyn[\vGamma]{\fun\var\trm}&&\defeq &&
        \letin{\var[2]}{\fun\var\Dsyn[\vGamma,\var]{\trm}}{\tPair{\fun\var
            \pletin{\var[3]}{\var[3]'}{\var[2]\,\var}{
            \tPair{\var[3]}{\lfun\lvar\lapp{\var[3]'}{\tPair{\zero}{\lvar}}}}}{\lfun\lvar\fun\var\lapp{(\tSnd(\var[2]\,\var))}{\tPair{\lvar}{\zero}}}
        }
    \\&
    \Dsyn[\vGamma]{\trm\,\trm[2]} &&\defeq &&
    \pletin{\var}{\var'_{\text{ctx}}}{\Dsyn[\vGamma]{\trm}}{
    \pletin{\var[2]}{\var[2]'}{\Dsyn[\vGamma]{\trm[2]}}{
    \pletin{\var[3]}{\var'_{\text{arg}}}{\var\,\var[2]}{
    \tPair{\var[3]}{\lfun\lvar (\lapp{\var'_{\text{ctx}}}{\lvar})\,\var[2]+
    \lapp{\var'_{\text{arg}}}{(\lapp{\var[2]'}{\lvar})}}}}}.
    \end{flalign*}
We explain and justify these transformations in the next subsection after discussing the transformations for reverse CHAD.

\subsection{The Implied Reverse Mode CHAD Definitions}\label{ssec:reverse-chad-defs}
We define the types of (reverse mode) primals $\Dsynrev{\ty}_1$ and cotangents $\Dsynrev{\ty}_2$ associated with a type $\ty$ as follows:
\begin{align*}
&\Dsynrev{\Unit}_1 &&\defeq&& \Unit &&  \Dsynrev{\Unit }_2 \defeq \lUnit\\
&\Dsynrev{\ty\t*\ty[2]}_1 &&\defeq&& \Dsynrev{\ty}_1\t*\Dsynrev{\ty[2]}_1
&& \Dsynrev{\ty\t*\ty[2]}_2 &&\defeq&&  \Dsynrev{\ty}_2\t*\Dsynrev{\ty[2]}_2\\
&\Dsynrev{\ty\To\ty[2]}_1 &&\defeq&& \Dsynrev{\ty}_1\To(
\Dsynrev{\ty[2]}_1\t*(\Dsynrev{\ty[2]}_2\multimap 
\Dsynrev{\ty}_2))\qquad\qquad\qquad\qquad
&&\Dsynrev{\ty\To\ty[2]}_2 &&\defeq&&  \copower{\Dsynrev{\ty}_1}{\Dsynrev{\ty[2]}_2}.
\end{align*}
Again, we associate a non-trivial type of primals to function types because exponentials are not fibred in $\Sigma_\CSyn\LSyn^{op}$ (\S\ref{sec:self-dualization}).

For programs $\trm$, we define their efficient CHAD transformation $\Dsynrev[\vGamma]{\trm}$ 
as follows:
\begin{flalign*}
&\Dsynrev[\vGamma]{\var} &&\defeq &&  \tPair{\var}{\lfun{\lvar} \tCoProj{\idx{\var}{\vGamma}}(\lvar)}
\\
    &
\Dsynrev[\vGamma]{\letin{\var}{\trm}{\trm[2]}}  && 
\defeq &&
\pletin{\var}{\var'}{\Dsynrev[\vGamma]{\trm}}{
    \pletin{\var[2]}{\var[2]'}{\Dsynrev[\vGamma,\var]{\trm[2]}}{
        \tPair{\var[2]}{\lfun\lvar 
        \letin{\lvar}{\lapp{\var[2]'}{\lvar}}{
            \tFst\lvar+\lapp{\var'}{(\tSnd \lvar)}
        }}
    }}
\\& 
\Dsynrev[\vGamma]{\tUnit}  &&\defeq && \tPair{\tUnit}{\lfun\lvar\zero}\\&
\Dsynrev[\vGamma]{\tPair{\trm}{\trm[2]}} &&\defeq && 
\pletin{\var}{\var'}{\Dsynrev[\vGamma]{\trm}}{ 
\pletin{\var[2]}{\var[2]'}{\Dsynrev[\vGamma]{\trm[2]}}{
\tPair{\tPair{\var}{\var[2]}}{\lfun\lvar \lapp{\var'}{(\tFst\lvar)} + \lapp{\var[2]'}{(\tSnd \lvar)}}}}
\\&
\Dsynrev[\vGamma]{\tFst\trm} &&\defeq && 
\pletin{\var}{\var'}{\Dsynrev[\vGamma]{\trm}}
{\tPair{\tFst\var}{\lfun\lvar \lapp{\var'}{\tPair{\lvar}{\zero}}}}
\\&
\Dsynrev[\vGamma]{\tSnd\trm} &&\defeq && 
\pletin{\var}{\var'}{\Dsynrev[\vGamma]{\trm}}
{\tPair{\tSnd\var}{\lfun\lvar \lapp{\var'}{\tPair{\zero}{\lvar}}}}
\\&   
\Dsynrev[\vGamma]{\fun\var\trm} &&\defeq && 
\letin{\var[2]}{\fun\var\Dsynrev[\vGamma,\var]{\trm}}{\\ &&&&&
\tPair{\fun\var
\pletin{\var[3]}{\var[3]'}{\var[2]\,\var}{
\tPair{\var[3]}{\lfun\lvar\tSnd(\lapp{\var[3]'}{\lvar})}}}
{\lfun\lvar  \tensMatch{\lvar}{\var}{\lvar}{
    \tFst(\lapp{(\tSnd(\var[2]\,\var))}{\lvar})} }
}
\\&
\Dsynrev[\vGamma]{\trm\,\trm[2]} &&\defeq && 
\pletin{\var}{\var'_{\text{ctx}}}{\Dsynrev[\vGamma]{\trm}}{
\pletin{\var[2]}{\var[2]'}{\Dsynrev[\vGamma]{\trm[2]}}{
\pletin{\var[3]}{\var'_{\text{arg}}}{\var\,\var[2]}{
\tPair{\var[3]}{\lfun\lvar  \lapp{\var'_{\text{ctx}}}{(!\var[2]\otimes \lvar)} + \lapp{\var[2]'}{(\lapp{\var'_{\text{arg}}}{\lvar})}}}}}.
\end{flalign*}
We now explain and justify the forward and reverse CHAD transformations.
The transformations for variables, tuples, and projections implement the well-known multivariate calculus facts about (transposed) derivatives of \smooth{} functions into and out of products of spaces.
The transformations for $\mathbf{let}$-bindings add to that the chain rules for $\Dsemsymbol$  and $\Dsemrevsymbol$ of \S\S\ref{ssec:pairing-sharing}. 
The transformations for $\lambda$-abstractions split the derivative of a closure $\fun{\var}\trm$ into the derivative $\var[3]'$ with respect to the function argument $\var$ and the derivative $\tSnd(\var[2]\,\var)$ with respect to the captured context variables; they store $\var[3]'$ together with the primal computation $\var[3]$ of $\fun{\var}\trm$ in the primal associated with the closure and they store $\tSnd(\var[2]\,\var)$ in the (co)tangent associated with the closure.
Conversely, the transformations for evaluations extract those two components of the (transposed) derivative $\var'_{\text{ctx}}$ (with respect to context variables) and $\var'_{\text{arg}}$ (with respect to the function argument) from the (co)tangent and primal, respectively, and recombine them to correctly propagate (co)tangent contributions from both sources.

\subsection{Sharing of Common Subexpressions}
Through careful use of $\mathbf{let}$-bindings, we have taken care to ensure that the CHAD code transformations we specified have the following good property:
for every program former $C[\trm_1,\ldots,\trm_n]$ that takes $n$ subprograms
$\trm_1$,\ldots, $\trm_n$ (for example, function application $\trm\,\trm[2]$
takes two subprograms $\trm$ and $\trm[2]$), we have that $\Dsyn[\vGamma]{C[\trm_1,\ldots,\trm_n]}$ uses $\Dsyn[\vGamma_i]{\trm_i}$ exactly once in its definition for each subprogram $\trm_i$, for some list of identifiers $\vGamma_i$.
Similarly, $\Dsynrev[\vGamma]{C[\trm_1,\ldots,\trm_n]}$ uses $\Dsynrev[\vGamma_i]{\trm_i}$ exactly once in its definition for each subprogram $\trm_i$, which demonstrates the following.
\begin{corollary}[No code explosion]
The code sizes of the forward and reverse CHAD-transformed programs $\Dsyn[\vGamma]{\trm}$ and $\Dsynrev[\vGamma]{\trm}$ both grow linearly in the size 
of the original source program $\trm$. 
\end{corollary}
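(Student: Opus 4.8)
The plan is to establish the bound by a single structural induction on the source term $\trm$, for the specific representatives of $\Dsyn[\vGamma]{\trm}$ and $\Dsynrev[\vGamma]{\trm}$ fixed in \S\ref{ssec:forward-chad-defs} and \S\ref{ssec:reverse-chad-defs}. I fix the code-size measure $|\trm|$ to be the number of nodes in the abstract syntax tree, counting each $\tProj{i}$, each $\tCoProj{i}$, and each $n$-ary tuple as a single node (consistent with the constant-cost indexed access obtained once copowers and products are implemented as in \S\ref{sec:implementation}). The real content of the corollary is not the constant factor but the absence of \emph{duplication}: a carelessly written reverse-mode transform re-expands the transformed derivative of a subprogram several times and blows up exponentially, and the sole purpose of our $\mathbf{let}$-bindings is to rule this out.

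First I would make precise the structural invariant flagged in the paragraph above the corollary. For every compound former $C[\trm_1,\ldots,\trm_n]$, the defining clause of $\Dsyn[\vGamma]{C[\trm_1,\ldots,\trm_n]}$ embeds the subterm transforms $\Dsyn[\vGamma_i]{\trm_i}$, \emph{each occurring textually exactly once}, inside a surrounding template of $\mathbf{let}$-bindings, pairs, projections, monoid operations, and linear (co)operations whose size is bounded by a constant depending only on $C$. This is checked by inspecting the finitely many clauses; the decisive point in the binder clauses, such as $\Dsyn[\vGamma]{\fun\var\trm}$ where $\fun\var\Dsyn[\vGamma,\var]{\trm}$ is bound to $\var[2]$ and thereafter mentioned only through the \emph{variable} $\var[2]$, is that every repeated occurrence is an occurrence of a bound variable, never a second copy of a subterm transform.

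Then I would take $c$ to be the maximum over the finitely many formers of its template size, enlarged if necessary so as to also bound the leaf clauses $\Dsyn[\vGamma]{\var}$, $\Dsyn[\vGamma]{\tUnit}$, and the nullary-operation case. The induction is then mechanical: at a leaf, $|\Dsyn[\vGamma]{\trm}| \le c \le c|\trm|$ because $|\trm| \ge 1$; at a compound former the invariant gives $|\Dsyn[\vGamma]{C[\trm_1,\ldots,\trm_n]}| \le c + \sum_{i=1}^n |\Dsyn[\vGamma_i]{\trm_i}|$, and combining the induction hypotheses $|\Dsyn[\vGamma_i]{\trm_i}| \le c|\trm_i|$ with $|C[\trm_1,\ldots,\trm_n]| = 1 + \sum_{i=1}^n |\trm_i|$ yields $|\Dsyn[\vGamma]{C[\trm_1,\ldots,\trm_n]}| \le c(1 + \sum_{i=1}^n|\trm_i|) = c\,|C[\trm_1,\ldots,\trm_n]|$. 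The argument for $\Dsynrev[\vGamma]{\trm}$ is identical, reading off its clauses from \S\ref{ssec:reverse-chad-defs}.

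The main obstacle I anticipate is the size accounting at the leaves and in $\tProj{i}$ and $\tCoProj{i}$: if one insists on desugaring these into iterated $\tFst/\tSnd$ applications or into nested tuples padded with $\zero$, their size grows with the length of the ambient list $\vGamma$, which itself deepens under each binder, degrading the naive estimate to quadratic. I would resolve this by committing to the list/product representation of \S\ref{sec:implementation}, under which indexed projection and coprojection are single nodes so that each per-former template is genuinely of constant size; alternatively one refines the measure to charge variable lookups against the ambient context and argues linearity relative to the size of the fully type-annotated program. Everything beyond this point is routine bookkeeping.
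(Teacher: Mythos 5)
Your proposal follows essentially the same route as the paper: the paper's entire proof consists of the observation, in the paragraph immediately preceding the corollary, that for every program former $C[\trm_1,\ldots,\trm_n]$ the defining clause of $\Dsyn[\vGamma]{C[\trm_1,\ldots,\trm_n]}$ (resp.\ $\Dsynrev[\vGamma]{C[\trm_1,\ldots,\trm_n]}$) uses each subterm transform $\Dsyn[\vGamma_i]{\trm_i}$ (resp.\ $\Dsynrev[\vGamma_i]{\trm_i}$) exactly once inside a fixed surrounding template --- precisely your structural invariant --- and the corollary is read off from this, with the structural induction you spell out left implicit. Where you go beyond the paper is your final paragraph, and the concern raised there is legitimate rather than pedantic. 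The paper never fixes a size measure, and in the idealised target language $\tProj{i}$ and $\tCoProj{i}$ are sugar: $\tProj{i}$ desugars to iterated $\tFst/\tSnd$ and $\tCoProj{i}(\trm)$ to a tuple $\tTuple{\zero,\ldots,\zero,\trm,\zero,\ldots,\zero}$ whose width is the length of $\vGamma$, so after desugaring each occurrence of a variable in the source contributes a term of size proportional to $|\vGamma|$; since contexts grow under binders, the honest bound for the fully desugared output is quadratic in $|\trm|$, not linear. So the literal statement of the corollary requires either your convention (indexed projection, coprojection, and $n$-ary tupling counted as single nodes, as they effectively are in the implementation's target language) or a refined relative measure, and committing to one of these is exactly the right repair; the paper leaves this point unexamined.
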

This compile-time complexity property is crucial if we are to keep 
compilation times and executable sizes manageable when performing AD 
on large codebases.

Of course, our use of $\mathbf{let}$-bindings has the additional run-time benefit that repeated subcomputations are performed only once 
and their stored results are shared, rather than recomputed whenever their results are needed.
We have taken care to avoid any unnecessary computation in this way, which we hope will benefit the performance of CHAD in practice.
However, we leave a proper complexity and practical performance analysis to future work.
 \section{Proving Reverse and Forward AD Semantically Correct}\label{sec:glueing-correctness}
In this section, we show that the CHAD code transformations 
described in \S\ref{sec:ad-transformation} correctly compute 
mathematical derivatives (Thm. \ref{thm:AD-correctness}).
The proof consists mainly of an (open) logical relations argument over the
semantics in the Cartesian closed categories
$\Set\times \Sigma_{\Set}\CMon$ and $\Set\times \Sigma_{\Set}\CMon^{op}$.
The intuition behind the proof is as follows:
\begin{itemize}
    \item the logical relations relate \smooth{} functions $\RR^d\to\sem{\ty}$ to associated primal and (co)tangent functions;
    \item the semantics $\sem{\trm}\times \sem{\Dsyn[\vGamma]{\trm}}$ 
    and $\sem{\trm}\times \sem{\Dsynrev[\vGamma]{\trm}}$  of forward and reverse mode CHAD respect the logical relations;
    \item therefore, by basic results in calculus, they must equal the derivative and transposed derivative of $\sem{\trm}$.
\end{itemize}
This logical relations proof can be phrased in elementary terms, but the resulting argument is 
technical and would be hard to discover.
Instead, we prefer to phrase it in terms of a categorical subsconing 
construction, a more abstract and elegant perspective on logical relations.
We discovered the proof by taking this categorical perspective,
and, while we have verified the elementary argument (see \S\S\ref{ssec:key-ideas-chad-correct}),
we would not otherwise have found it.

\subsection{Preliminaries}
\subsubsection{Subsconing}\label{sssec:subsconing}
Logical relations arguments provide a powerful proof technique for demonstrating
properties of typed programs.
The arguments proceed by induction on the structure of types.
Here, we briefly review the basics of categorical logical relations arguments,
or \emph{subsconing constructions}.
We restrict to the level of generality that we need here, but we point out 
that the theory applies much more generally.

Consider a Cartesian closed category $(\catC,\terminal,\times,\Rightarrow)$.
Suppose that we are given a functor 
$F:\catC\to\Set$ to the category $\Set$ of sets and functions
that preserves finite products in the sense that $F(\terminal)\cong\terminal$
and $F(C\times C')\cong F(C)\times F(C')$.
Then, we can form the \emph{subscone} of $F$, or category of logical relations 
over $F$, which is Cartesian closed, with a
faithful Cartesian closed functor $\pi_1$ to $\catC$ that forgets about the predicates \cite{johnstone-lack-sobocinski}:
\begin{itemize}
\item objects are pairs $(C,P)$ of an object $C$ of $\catC$ and a predicate 
$P\subseteq FC$;
\item morphisms $(C,P)\to (C',P')$ are $\catC$ morphisms $f:C\to C'$
that respect the predicates in the sense that $F(f)(P)\subseteq P'$;
\item identities and composition are as in $\catC$;
\item $(\terminal, F\terminal)$ is the terminal object, and binary products and exponentials are given by
\begin{align*}(C,P)\times (C',P')&= (C\times C', \set{\alpha\in F(C\times C')\mid 
F(\pi_1)(\alpha)\in P, F(\pi_2)(\alpha)\in P'})\\
(C,P)\Rightarrow (C', P')&= (C\Rightarrow C',
\{\alpha\in F(C\Rightarrow C')\mid 
\forall \gamma\in F((C\Rightarrow C')\times C).\;
\big(F(\pi_1)(\gamma)=\alpha \textnormal{ and } F(\pi_2)(\gamma)\in P\big)\Rightarrow F(\ev)(\gamma)\in P'
\}
).\end{align*}
\end{itemize}

In typical applications, $\catC$ can be the syntactic category of a 
language (like $\Syn$), the codomain of a denotational semantics
$\sem{-}$ (like $\Set$), or a product of the above, if we want to 
consider $n$-ary logical relations.
Typically, $F$ tends to be a hom-functor (which always preserves products), like $\catC(\terminal,-)$ or 
$\catC(C_0,-)$, for some important object $C_0$.
When applied to the syntactic category $\Syn$ and $F=\Syn(\Unit,-)$, the 
formulae for products and exponentials in the subscone clearly
reproduce the usual recipes in traditional, syntactic logical relations arguments.
In this sense, subsconing generalizes standard logical relations methods.

\subsection{Subsconing for Correctness of AD}
We apply the subsconing construction above to
\[
    \begin{array}{lll}
\catC=\Set\times \Sigma_{\Set}\CMon& F=\Set\times \Sigma_{\Set}\CMon((\RR^d,(\RR^d,\cRR^d)),-)&\hspace{-3pt}\textnormal{(forward AD)}\\
\catC=\Set\times \Sigma_{\Set}\CMon^{op}\qquad\qquad & F=\Set\times \Sigma_{\Set}\CMon^{op}((\RR^d,(\RR^d,\cRR^d)),-)\qquad\qquad&\textnormal{(reverse AD)},
    \end{array}
\]
where we note that $\Set$, $\Sigma_{\Set}\CMon$, and $\Sigma_{\Set}\CMon^{op}$
are Cartesian closed (given the arguments of \S\ref{sec:semantics} and \S\ref{sec:self-dualization})
and that the product of Cartesian closed categories is again Cartesian closed.
Let us write $\Gl$ and $\GlRev$, respectively, for the resulting categories of logical 
relations.

Since $\Gl$ and $\GlRev$ are Cartesian closed, we obtain unique 
Cartesian closed functors $\semgl{-}:\Syn\to\Gl$ and $\semglrev{-}:\Syn\to\GlRev$,
by the universal property of $\Syn$ (\S\ref{sec:language}),
once we fix an interpretation of $\reals^n$ and all operations $\op$.
We write $P_{\ty}^f$ and $P_{\ty}^r$, respectively, for the relations 
 $\pi_2\semgl{\ty}$ and $\pi_2\semglrev{\ty}$.
Let us~interpret
\begin{align*}
&\semgl{\reals^n}\defeq (((\RR^n,(\RR^n,\cRR^n)), \set{(f,(g,h))\mid f\textnormal{ is \smooth{}}, f=g\textnormal{ and } 
h=Df
}))\\
&\semglrev{\reals^n}\defeq (((\RR^n,(\RR^n,\cRR^n)),\{(f,(g,h))\mid f\textnormal{ is \smooth{}},  f=g\textnormal{ and } 
h=\transpose{Df}
\}))\\
&\semgl{\op}\defeq (\sem{\op}, (\sem{\op}, \sem{D\op}))\qquad
\semglrev{\op}\defeq (\sem{\op}, (\sem{\op}, \sem{\transpose{D\op}})),
\end{align*}
where we write $Df$ for the semantic derivative of $f$ and $\transpose{(-)}$ for the matrix transpose (see \S\ref{sec:semantics}).

\begin{lemma}
These definitions extend uniquely to define Cartesian closed functors $$\semgl{-}:\Syn\to\Gl\qquad\textnormal{ and }\qquad\semglrev{-}:\Syn\to\GlRev.$$
\end{lemma}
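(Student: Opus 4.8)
The statement asserts that the assignments $\semgl{\reals^n}$, $\semglrev{\reals^n}$, $\semgl{\op}$, and $\semglrev{\op}$ extend uniquely to Cartesian closed functors. By Prop.~\ref{prop:universal-property-syn}, this follows immediately \emph{once} we verify that the data supplied are of the correct type — that is, that $\semgl{\reals^n}$ and $\semglrev{\reals^n}$ are genuine objects of $\Gl$ and $\GlRev$ respectively, and that $\semgl{\op}$ and $\semglrev{\op}$ are genuine morphisms of the correct signature in these categories. Since $\Gl$ and $\GlRev$ were already established to be Cartesian closed (via the subsconing construction of \S\ref{sssec:subsconing} applied to the product of Cartesian closed categories $\Set\times\Sigma_\Set\CMon$ and $\Set\times\Sigma_\Set\CMon^{op}$), the existence and uniqueness of the extending functor is then automatic.

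\textbf{The only real content is the well-definedness check for the operations.} First I would confirm the object data: $\semgl{\reals^n}$ has underlying object $(\RR^n,(\RR^n,\cRR^n))$ in $\catC=\Set\times\Sigma_\Set\CMon$ together with a subset $P^f_{\reals^n}$ of $F((\RR^n,(\RR^n,\cRR^n)))$, and one checks that the stated relation $\set{(f,(g,h))\mid f\text{ \smooth{}}, f=g, h=Df}$ is indeed such a subset (an element of $F$ here is a triple $(f,(g,h))$ consisting of a curve $\RR\to\RR^n$ together with a primal curve and a tangent homomorphism, by the definition of $F$ as the hom-functor out of $(\RR,(\RR,\cRR))$). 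The analogous check for $\semglrev{\reals^n}$ is identical with $Df$ replaced by $\transpose{Df}$. Then comes the crux: to see that $\semgl{\op}=(\sem{\op},(\sem{\op},\sem{D\op}))$ is a morphism $\semgl{\reals^{n_1}}\times\cdots\times\semgl{\reals^{n_k}}\to\semgl{\reals^m}$ in $\Gl$, I must show it \emph{respects the predicates}: given any $(f,(g,h))$ in the product relation $P^f_{\reals^{n_1}}\times\cdots\times P^f_{\reals^{n_k}}$, its image under $F(\semgl{\op})$ must lie in $P^f_{\reals^m}$.

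\textbf{This predicate-preservation step is where the mathematics lives, and it reduces to the chain rule.} Unwinding the definitions, the image of a curve $x\mapsto f(x)$ (with $f$ \smooth{}, $g=f$, $h=Df$) under $\semgl{\op}$ is the curve $x\mapsto\sem{\op}(f(x))$ paired with its primal $\sem{\op}\compose f$ and tangent $x\mapsto D\sem{\op}(f(x))\compose Df(x)$. To land in $P^f_{\reals^m}$ I need this composite $\sem{\op}\compose f$ to be \smooth{} (immediate, since $\sem{\op}$ is $C^\infty$ by assumption and \smooth{} maps compose) and its derivative to equal the stated tangent — which is precisely the chain rule $D(\sem{\op}\compose f)(x)=D\sem{\op}(f(x))\compose Df(x)$, together with the correctness hypothesis $\sem{D\op}=D\sem{\op}$ recorded in \S\ref{sec:ad-transformation}. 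For the reverse case I use instead the contravariant chain rule $\transpose{(A;B)}=\transpose{B};\transpose{A}$ along with $\sem{\transpose{D\op}}=\transpose{D\sem{\op}}$. I expect the \textbf{main obstacle} to be purely bookkeeping: carefully tracking the three-layered structure of objects in $\Gl$ (a $\Set$-component, a $\Sigma_\Set\CMon$-primal, and a linear tangent/cotangent component) and matching it against the unfolded definition of $F$ and of composition in the Grothendieck construction, so that the chain rule appears in exactly the form that the definitions of $\semgl{\op}$ and $P^f_{\reals^m}$ demand. Once this single naturality-style computation is done for the generators, the universal property of $\Syn$ delivers the functors with no further work.
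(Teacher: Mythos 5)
Your proposal is correct and follows essentially the same route as the paper: invoke the universal property of $\Syn$ (Prop.~\ref{prop:universal-property-syn}), reduce everything to checking that $(\sem{\op},(\sem{\op},\sem{D\op}))$ and $(\sem{\op},(\sem{\op},\sem{\transpose{D\op}}))$ respect the logical relations, and discharge that check via the (transposed) chain rule together with the hypothesis that the primitive derivatives are implemented correctly. The only step the paper spells out that you fold into ``bookkeeping'' is the explicit identification of the product relation $P^f_{\reals^{n_1}}\times\cdots\times P^f_{\reals^{n_k}}$ with the relation of smooth curves paired with their (transposed) derivatives, which holds because derivatives of tuple-valued functions are computed component-wise; this is exactly the unwinding you assume when you take $(f,(g,h))$ with $g=f$, $h=Df$.
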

\begin{proof}
This follows from the universal property of $\Syn$ (Prop. \ref{prop:universal-property-syn}) once we verify that $(\sem{\op}, (\sem{\op}, \sem{D\op}))$
and $(\sem{\op}, (\sem{\op}, \sem{\transpose{D\op}}))$
respect the logical relations $P^f$ and $P^r$, respectively.
This preservation of the relations follows immediately from the chain rule for multivariate differentiation, 
provided we have implemented our derivatives correctly for the basic operations~$\op$, in the sense that
\begin{align*}
&\sem{\var;\var[2]\vdash D\op(\var;\var[2])}=D\sem{\op}\;\;\quad\qquad\textnormal{and}\quad\qquad\;\;
\sem{\var;\var[2]\vdash \transpose{D\op}(\var;\var[2])}=\transpose{D\sem{\op}}.
\end{align*}

Writing $\reals^{n_1,...,n_k}\!\defeq\! \reals^{n_1}\t*\cdots\t*\reals^{n_k}$
and $\RR^{n_1,...,n_k}\!\defeq\! \RR^{n_1}\times\cdots\times \RR^{n_k}$,
we compute
\begin{align*}
    &\semgl{\reals^{n_1,...,n_k}}\!=\! ((\RR^{n_1,...,n_k},(\RR^{n_1,...,n_k},\cRR^{n_1,...,n_k})), \set{(f,(g,h))\mid f\textnormal{ is \smooth{}}, f=g, 
    h=Df
    })\\
    &\semglrev{\reals^{n_1,...,n_k}}\!=\! ((\RR^{n_1,...,n_k},(\RR^{n_1,...,n_k},\cRR^{n_1,...,n_k})),  \{(f,(g,h))\mid f\textnormal{ is \smooth{}}, f=g, 
    h=\transpose{Df}\})
\end{align*}
since derivatives of tuple-valued functions are computed component-wise.
(In fact, the corresponding facts hold more generally for any first-order type,
as an iterated product of $\reals^n$.)
Suppose that $(f,(g,h))\in P^f_{\reals^{n_1,...,n_k}}$,
i.e. $f$ is \smooth{}, $g=f$ and $h=Df$.
Then, using the chain rule in the last step, we have
\begin{align*}
 &  (f,(g,h));(\sem{\op},(\sem{\op},\sem{D\op})) \\
 &  = (f,(f,Df));(\sem{\op},(\sem{{\op}},\sem{\var;\var[2]\vdash D\op(\var;\var[2])}))\\
 &  = (f,(f,Df));(\sem{\op},(\sem{\op},D\sem{\op})) \\
 &  = (f;\sem{\op},(f;\sem{\op}, x\mapsto r\mapsto D\sem{\op}(f(x))(Df(x)(r)) ))\\
 &  = (f;\sem{\op},(f;\sem{\op}, D(f;\sem{\op}) ))\in P_{\reals^m}^f.
\end{align*}
Similarly, if $(f, (g,h))\in P^r_{\reals^{n_1,...,n_k}}$, then by the chain rule 
and linear algebra
\begin{align*}
 &  (f,(g,h));(\sem{\op},(\sem{\op},\sem{\transpose{D\op}})) \\
 &  = (f,(f,\transpose{Df}));(\sem{\op},(\sem{{\op}},\sem{\var;\var[2]\vdash \transpose{D\op}(\var;\var[2])})) \\
 &  = (f,(f,\transpose{Df}));(\sem{\op},(\sem{\op},\transpose{D\sem{\op}})) \\
 &  = (f;\sem{\op},(f;\sem{\op}, x\mapsto v\mapsto 
    \transpose{Df}(x)(\transpose{D\sem{\op}}(f(x))(v))
    )) \\
 &  = (f;\sem{\op},(f;\sem{\op}, x\mapsto v\mapsto 
    \transpose{Df(x);D\sem{\op}(f(x))}
    (v))
    ) \\
 &  = (f;\sem{\op},(f;\sem{\op}, \transpose{D(f;\sem{\op})} ))\in P_{\reals^m}^r.
\end{align*}
\noindent Consequently, we obtain our unique Cartesian closed functors $\semgl{-}$ and $\semglrev{-}$.
\end{proof}

Furthermore, observe that $\Sigma_{\sem{-}}\sem{-}(\trm_1,\trm_2)\defeq (\sem{\trm_1},\sem{\trm_2})$
defines a Cartesian closed functor $\Sigma_{\sem{-}}\sem{-}:\Sigma_{\CSyn}\LSyn\to \Sigma_{\Set}\CMon$.
Similarly, we get a Cartesian closed functor
$\Sigma_{\sem{-}}\sem{-}^{op}:\Sigma_{\CSyn}\LSyn^{op}\to \Sigma_{\Set}\CMon^{op}$.
As a consequence,
both squares below commute.
\begin{figure}[!h]
\begin{tikzcd}
    \Syn \arrow[r, "\sPair{\id}{\Dsynsymbol}"] \arrow[dd, "\semgl{-}"'] & \Syn\times \Sigma_{\CSyn}\LSyn \arrow[dd, "\sem{-}\times\Sigma_{\sem{-}}\sem{-}"] & \qquad\qquad & \Syn \arrow[r, "\sPair\id\Dsynrevsymbol"] \arrow[dd, "\semglrev{-}"'] & \Syn\times\Sigma_{\CSyn}\LSyn^{op} \arrow[dd, "\sem{-}\times\Sigma_{\sem{-}}\sem{-}^{op}"] \\
    \\
    \Gl \arrow[r, "\pi_1"']                                            & \Set\times\Sigma_{\Set}\CMon                                               &\qquad\qquad & \GlRev \arrow[r, "\pi_1"']                                           & \Set\times\Sigma_{\Set}\CMon^{op}.                                                       
    \end{tikzcd}\end{figure}

\noindent Indeed, going around the squares in both directions defines Cartesian closed functors
that agree on their action on the generators $\reals^n$ and $\op$ of the Cartesian closed category $\Syn$.
\begin{corollary}\label{cor:respecting-logical-relation}
    For any source language (\S\ref{sec:language}) program $\Gamma\vdash\trm:\ty$, 
    $(\sem{\trm}, (\sem{\Dsyn[\vGamma]{\trm}_1},\sem{\Dsyn[\vGamma]{\trm}_2}))$ is a morphism in $\Gl$
and therefore respects the logical relations $P^f$.
Similarly, $(\sem{\trm}, (\sem{\Dsynrev[\vGamma]{\trm}_1},\sem{\Dsynrev[\vGamma]{\trm}_2}))$ is a morphism in $\GlRev$
and therefore respects the logical relations $P^r$.
\end{corollary}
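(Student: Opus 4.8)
The plan is to prove the corollary as a purely formal consequence of the two commuting squares displayed just above, reducing everything to the universal property of $\Syn$ (Prop.~\ref{prop:universal-property-syn}) together with the defining property of morphisms in a subscone. No induction on terms is needed here: all the analytic content (the chain rule and the correctness of $D\op$ and $\transpose{D\op}$) has already been absorbed into the preceding lemma, which verifies that the chosen interpretations of the generators $\reals^n$ and $\op$ respect $P^f$ and $P^r$, and hence that $\semgl{-}$ and $\semglrev{-}$ land in $\Gl$ and $\GlRev$. In effect, this corollary is the high-level categorical counterpart of the fundamental lemmas (Lemmas~\ref{lem:fundamental-fwd} and~\ref{lem:fundamental-rev}).

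First I would argue that each of the two squares commutes. Around the left square, the composite of $\semgl{-}$ with the forgetful functor $\pi_1$ is a Cartesian closed functor $\Syn\to\Set\times\Sigma_{\Set}\CMon$, since $\semgl{-}$ is Cartesian closed and $\pi_1$ is a faithful Cartesian closed functor by the subscone construction of \S\S\ref{sssec:subsconing}. The other leg, the composite of $\langle\id,\Dsynsymbol\rangle$ with $\sem{-}\times\Sigma_{\sem{-}}\sem{-}$, is likewise Cartesian closed, because $\Dsynsymbol$ is structure preserving by Cor.~\ref{cor:chad-definition} and $\Sigma_{\sem{-}}\sem{-}$ is Cartesian closed as noted above. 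I would then check that these two Cartesian closed functors agree on the generators of $\Syn$: on objects, both send $\reals^n$ to $(\RR^n,(\RR^n,\cRR^n))$; on morphisms, both send $\op$ to $(\sem{\op},(\sem{\op},\sem{D\op}))$, which holds precisely because the basic CHAD definitions were implemented so that the primal of $\op$ is $\sem{\op}$ and $\sem{D\op}=D\sem{\op}$. By the universal property of $\Syn$, a Cartesian closed functor out of $\Syn$ is determined by its action on these generators, so the two legs coincide. The right square commutes by the identical argument, now with $\GlRev$, $\Dsynrevsymbol$, $\Sigma_{\sem{-}}\sem{-}^{op}$, and $\sem{\transpose{D\op}}=\transpose{D\sem{\op}}$.

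With the squares in hand, the corollary is immediate. For a program $\Gamma\vdash\trm:\ty$, regarded as a morphism of $\Syn$ (with $\Gamma$ read as a product object), the object $\semgl{\trm}$ is by construction a morphism of $\Gl$, i.e.\ a morphism of $\Set\times\Sigma_{\Set}\CMon$ whose image under the hom-functor $F$ carries the relation attached to the source into the relation attached to the target. Applying $\pi_1$ and using commutativity of the left square identifies the underlying morphism with $(\sem{\trm},\Sigma_{\sem{-}}\sem{-}(\Dsyn{\trm}))=(\sem{\trm},(\sem{\Dsyn[\vGamma]{\trm}_1},\sem{\Dsyn[\vGamma]{\trm}_2}))$. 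Hence this pair is a morphism of $\Gl$ and so respects $P^f$; the statement for $\GlRev$ and $P^r$ follows symmetrically from the right square.

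The one place that demands care — and the step I would flag as the crux — is the verification that the two legs of each square really agree on the generators \emph{on the nose}, rather than merely up to the relation. This rests entirely on the earlier lemma's identities $\sem{D\op}=D\sem{\op}$ and $\sem{\transpose{D\op}}=\transpose{D\sem{\op}}$, which is exactly where the mathematics of differentiation enters. Once these are granted, the remainder is a purely diagrammatic application of the universal property of $\Syn$ and of the definition of morphisms in a subscone, requiring no further computation.
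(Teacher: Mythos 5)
Your proposal is correct and follows essentially the same route as the paper: the paper, too, obtains this corollary purely formally from the two commuting squares, justifying their commutativity exactly as you do --- both composites around each square are Cartesian closed functors out of $\Syn$, so by Prop.~\ref{prop:universal-property-syn} they coincide once they agree on the generators $\reals^n$ and $\op$. The only point to adjust is your closing remark: the on-the-nose agreement on generators does \emph{not} rest on $\sem{D\op}=D\sem{\op}$, because both legs send $\op$ to $(\sem{\op},(\sem{\op},\sem{D\op}))$ --- the semantic interpretation of the \emph{same syntactic} derivative term --- so that agreement is definitional; the identities $\sem{D\op}=D\sem{\op}$ and $\sem{\transpose{D\op}}=\transpose{D\sem{\op}}$ are instead what the preceding lemma needs so that these interpretations of the generators respect $P^f$ and $P^r$, i.e.\ so that the functors $\semgl{-}$ and $\semglrev{-}$ (and hence the subscone legs of the squares) exist at all.
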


Most of the work is now in place to show correctness of 
AD.
We finish the proof below.
To ease notation, we work with terms in a context with a single type.
Doing so is not a restriction as our language has products,
and the theorem holds for arbitrary terms between 
first-order types.
\begin{theorem}[Correctness of AD]\label{thm:AD-correctness}
For programs $\Gamma\vdash \trm:\ty[2]$ where $\ty[2]$ and all types $\ty_i$ in $\Gamma=\var_1:\ty_1,\ldots,\var_n:\ty_n$ are first-order types, $\sem{\trm}$ is \smooth{} and
$$
\sem{\Dsyn[\vGamma]{\trm}_1}=\sem{\trm}\qquad \sem{\Dsyn[\vGamma]{\trm}_2}=D\sem{\trm}
\qquad\sem{\Dsynrev[\vGamma]{\trm}_1}=\sem{\trm}\qquad\sem{\Dsynrev[\vGamma]{\trm}_2}=\transpose{D\sem{\trm}},
$$
where we write $D$ and $\transpose{(-)}$ for the usual calculus derivative and matrix transpose.
Hence,
$$
\sem{\Dsyn[\vGamma]{\trm}}=(\sem{\trm},D\sem{\trm})\qquad \text{and}\qquad 
\sem{\Dsynrev[\vGamma]{\trm}}=(\sem{\trm},\transpose{D\sem{\trm}})
$$
\end{theorem}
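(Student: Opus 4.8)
The plan is to derive the theorem directly from Corollary~\ref{cor:respecting-logical-relation}, probing the logical relations with well-chosen curves. First I would reduce to a single-variable context $\var:\ty\vdash\trm:\ty[2]$ with $\ty$ and $\ty[2]$ first-order; this is harmless, since the language has products and any first-order context is itself a first-order type. The key preliminary observation is that, for a first-order type $\phi$ (an iterated product of $\reals^n$ and $\Unit$), the relations $P^f_\phi$ and $P^r_\phi$ are \emph{exactly} the (transposed-)derivative relations, $P^f_\phi=\{(f,(g,h))\mid f\textnormal{ \smooth{}},\,g=f,\,h=Df\}$ and $P^r_\phi=\{(f,(g,h))\mid f\textnormal{ \smooth{}},\,g=f,\,h=\transpose{Df}\}$. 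This follows by induction on $\phi$ from the product clauses defining $P^f$ and $P^r$, using that derivatives of tuple-valued maps are computed componentwise, i.e. the computation already carried out for $\reals^{n_1,\ldots,n_k}$ in the preceding lemma. In particular $\sem{\Dsyn{\phi}_1}=\sem{\phi}\cong\RR^N$, so primals at first-order types are trivial and tangent vectors are represented by curves $\RR\to\sem{\phi}$.

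\emph{Forward case.} By Corollary~\ref{cor:respecting-logical-relation}, $(\sem{\trm},(\sem{\Dsyn[\vGamma]{\trm}_1},\sem{\Dsyn[\vGamma]{\trm}_2}))$ is a morphism of $\Gl$, hence respects $P^f$. Unfolding the action of the subscone functor (post-composition in $\Set\times\Sigma_{\Set}\CMon$) gives exactly the statement of Lemma~\ref{lem:fundamental-fwd}: for every $(\gamma,(\gamma,D\gamma))\in P^f_\Gamma$ we obtain $(\gamma;\sem{\trm},(\gamma;\sem{\Dsyn[\vGamma]{\trm}_1},\,x\mapsto r\mapsto \sem{\Dsyn[\vGamma]{\trm}_2}(\gamma(x))(D\gamma(x)(r))))\in P^f_{\ty[2]}$. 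Since $\ty[2]$ is first-order, membership in $P^f_{\ty[2]}$ unpacks into three facts: $\gamma;\sem{\trm}$ is \smooth{}; $\gamma;\sem{\Dsyn[\vGamma]{\trm}_1}=\gamma;\sem{\trm}$; and $x\mapsto r\mapsto\sem{\Dsyn[\vGamma]{\trm}_2}(\gamma(x))(D\gamma(x)(r))=D(\gamma;\sem{\trm})$. Ranging over all \smooth{} curves $\gamma$ and invoking the standard fact that a map between Cartesian spaces is \smooth{} iff it preserves \smooth{} curves (Boman's theorem) shows $\sem{\trm}$ is \smooth{}. Taking the straight line $\gamma(t)=x+tv$ and evaluating the primal equation at $t=0$ yields $\sem{\Dsyn[\vGamma]{\trm}_1}(x)=\sem{\trm}(x)$, while evaluating the tangent equation at $t=0,\,r=1$ and using the chain rule $D(\gamma;\sem{\trm})(0)(1)=D\sem{\trm}(x)(v)$ yields $\sem{\Dsyn[\vGamma]{\trm}_2}(x)(v)=D\sem{\trm}(x)(v)$. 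As $x,v$ are arbitrary, $\sem{\Dsyn[\vGamma]{\trm}_1}=\sem{\trm}$ and $\sem{\Dsyn[\vGamma]{\trm}_2}=D\sem{\trm}$.

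\emph{Reverse case.} Symmetrically, $(\sem{\trm},(\sem{\Dsynrev[\vGamma]{\trm}_1},\sem{\Dsynrev[\vGamma]{\trm}_2}))$ respects $P^r$, and unfolding gives the conclusion of Lemma~\ref{lem:fundamental-rev}: for the straight line $\gamma(t)=x+tv$, which yields $(\gamma,(\gamma,\transpose{D\gamma}))\in P^r_\Gamma$, we obtain $(\gamma;\sem{\trm},(\gamma;\sem{\Dsynrev[\vGamma]{\trm}_1},\,x\mapsto w\mapsto \transpose{D\gamma}(x)(\sem{\Dsynrev[\vGamma]{\trm}_2}(\gamma(x))(w))))\in P^r_{\ty[2]}$. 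The primal equation $\sem{\Dsynrev[\vGamma]{\trm}_1}=\sem{\trm}$ is read off exactly as before. The cotangent equation, evaluated at $t=0$ with $v=\gamma'(0)$, states that for all $w$ we have $\transpose{D\gamma}(0)(\sem{\Dsynrev[\vGamma]{\trm}_2}(x)(w))=\transpose{D(\gamma;\sem{\trm})}(0)(w)$; the transpose chain rule turns this into $\innerprod{\sem{\Dsynrev[\vGamma]{\trm}_2}(x)(w)}{v}=\innerprod{w}{D\sem{\trm}(x)(v)}=\innerprod{\transpose{D\sem{\trm}}(x)(w)}{v}$. I expect this extraction to be the main obstacle: unlike forward mode, where evaluation at a single tangent directly exposes $\sem{\Dsyn{\trm}_2}$, here the relation only exposes $\sem{\Dsynrev{\trm}_2}$ paired against $v$, so I must let $v=\gamma'(0)$ range over all directions and appeal to non-degeneracy of the inner product to conclude $\sem{\Dsynrev[\vGamma]{\trm}_2}(x)(w)=\transpose{D\sem{\trm}}(x)(w)$ for every $w$, hence $\sem{\Dsynrev[\vGamma]{\trm}_2}=\transpose{D\sem{\trm}}$ for arbitrary $x$.

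Finally, the displayed identities $\sem{\Dsyn[\vGamma]{\trm}}=(\sem{\trm},D\sem{\trm})$ and $\sem{\Dsynrev[\vGamma]{\trm}}=(\sem{\trm},\transpose{D\sem{\trm}})$ are immediate, since $\sem{\Dsyn[\vGamma]{\trm}}=(\sem{\Dsyn[\vGamma]{\trm}_1},\sem{\Dsyn[\vGamma]{\trm}_2})$ and likewise for reverse mode.
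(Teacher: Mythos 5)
Your proposal is correct and follows essentially the same route as the paper: reduce to a single first-order context type, use Corollary~\ref{cor:respecting-logical-relation} together with Boman's theorem for smoothness, and probe the logical relations with smooth curves, reading off the primal and (co)tangent equations via the chain rule. The only cosmetic difference is in reverse mode, where the paper probes with basis-direction curves $\gamma_i$ and expands $\transpose{D\sem{\trm}}(x)(v)$ in the standard basis, whereas you let the curve direction range over all of $\RR^N$ and appeal to non-degeneracy of the inner product --- the same argument in different clothing.
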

\begin{proof}
    Since our language has tuples, we may assume without loss of generality that $\Gamma=\var:\ty$.
    We use the logical relations for general $d$ to show differentiability of all programs.
    Next, the case of $d=1$ suffices to show that CHAD computes correct derivatives.

    First, we observe that $\sem{\trm}$ sends \smooth{} functions $\RR^d\to \sem{\ty}$ to \smooth{} functions $\RR^d\to\sem{\ty[2]}$, as $\trm$ respects the logical relations.
    Observing that $\sem{\ty}\cong \RR^N$ for some $N$, as $\ty$ is a first-order type, we can choose $d=N$.
    Then, $P_{\ty}^f$ contains $(f,(f,Df))$ for a \smooth{} isomorphism $f$.
    It therefore follows that $\sem{\trm}$ is \smooth{}.

    Second, we focus on the correctness of forward AD, $\Dsynsymbol$.\\
    Let $x\in \sem{\Dsyn{\ty}_1}=\sem{\ty}\cong \RR^N$ and $v\in\sem{\Dsyn{\ty}_2}\cong \cRR^N$ (for some $N$).
    Then, there is a \smooth{} curve $\gamma:\RR\to \sem{\ty}$ such that
    $\gamma(0)=x$ and $D\gamma(0)(1)=v$.
    Clearly, $(\gamma,(\gamma, D\gamma))\in P_{\ty}^f$ (for $d=1$).

    As $(\sem{\trm}, (\sem{\Dsyn[\vGamma]{\trm}_1},\sem{\Dsyn[\vGamma]{\trm}_2}))$ respects the logical relation $P^f$ by 
    Cor. \ref{cor:respecting-logical-relation},
    we have
    \begin{align*}&(\gamma;\sem{\trm}, (\gamma;\sem{\Dsyn[\vGamma]{\trm}_1},x\mapsto r\mapsto \sem{\Dsyn[\vGamma]{\trm}_2}(\gamma(x))(D\gamma(x)(r))))=
    (\gamma,(\gamma,D\gamma));(\sem{\trm}, (\sem{\Dsyn[\vGamma]{\trm}_1},\sem{\Dsyn[\vGamma]{\trm}_2}))
    \in P^f_{\ty[2]},
    \end{align*}
    where we use the definition of composition in $\Set\times \Sigma_{\Set}\CMon$.
    Therefore, $$\gamma;\sem{\trm}=\gamma;\sem{\Dsyn[\vGamma]{\trm}_1}$$ and, by the chain rule,
    \begin{align*}
        x\mapsto r\mapsto D\sem{\trm}(\gamma(x))(D\gamma(x)(r))
        &=D(\gamma;\sem{\trm})=
     x\mapsto r\mapsto \sem{\Dsyn[\vGamma]{\trm}_2}(\gamma(x))(D\gamma(x)(r)).\end{align*}
    Evaluating the former at $0$ gives $\sem{\trm}(x)=\sem{\Dsyn[\vGamma]{\trm}_1}(x)$.
    Similarly, evaluating the latter at $0$ and $1$ gives $D\sem{\trm}(x)(v)= \sem{\Dsyn[\vGamma]{\trm}_2}(x)(v)$.
\\
\\
    Third, we turn to the correctness of reverse AD, $\Dsynrevsymbol$.\\
    Let $x\in \sem{\Dsynrev{\ty}_1}=\sem{\ty}\cong \RR^N$
    and $v\in\sem{\Dsynrev{\ty[2]}_2}\cong \cRR^M$ (for some $N$ and $M$).
    Let $\gamma_i:\RR\to \sem{\ty}$ be a \smooth{} curve such that
    $\gamma_i(0)=x$ and $D\gamma_i(0)(1)=e_i$, where we write $e_i$ for the $i$-th standard basis vector 
    of $\sem{\Dsynrev{\ty}_2}\cong \cRR^N$.
    Clearly, $(\gamma_i,(\gamma_i, \transpose{D\gamma_i}))\in P_{\ty}^r$ (for $d=1$).

    As $(\sem{\trm}, (\sem{\Dsynrev[\vGamma]{\trm}_1},\sem{\Dsynrev[\vGamma]{\trm}_2}))$ respects the logical relation $P^r$ by Cor. \ref{cor:respecting-logical-relation},
    we have
    \begin{align*}&(\gamma_i;\sem{\trm}, (\gamma_i;\sem{\Dsynrev[\vGamma]{\trm}_1},
    x\mapsto w\mapsto \transpose{D\gamma_i(x)}(\sem{\Dsynrev[\vGamma]{\trm}_2}(\gamma_i(x))(w))))= (\gamma_i,(\gamma_i,\transpose{D\gamma_i}));(\sem{\trm}, (\sem{\Dsynrev[\vGamma]{\trm}_1},\sem{\Dsynrev[\vGamma]{\trm}_2}))\in P^r_{\ty[2]},\end{align*}
    using the definition of composition in $\Set\times \Sigma_{\Set}\CMon^{op}$.
    Consequently, $$\gamma_i;\sem{\trm}=\gamma_i;\sem{\Dsynrev[\vGamma]{\trm}_1}$$ and, by the chain rule,
    \begin{align*}x\mapsto w\mapsto \transpose{D\gamma_i(x)}(\transpose{D\sem{\trm}(\gamma_i(x))}(w))& =\transpose{D(\gamma_i;\sem{\trm})} =
    x\mapsto w\mapsto \transpose{D\gamma_i(x)}(\sem{\Dsynrev[\vGamma]{\trm}_2}(\gamma_i(x))(w)).\end{align*}
    Evaluating the former at $0$ gives $\sem{\trm}(x)=\sem{\Dsynrev[\vGamma]{\trm}_1}(x)$.
    Similarly, evaluating the latter at $0$ and $v$ gives
    $\innerprod{e_i}{\transpose{D\sem{\trm}(x)}(v)}= \innerprod{e_i}{\sem{\Dsynrev[\vGamma]{\trm}_2}(x)(v)}$.
    As this equation holds for all basis vectors $e_i$ of $\sem{\Dsynrev{\ty}_2}$, we find that
    \begin{align*}\transpose{D\sem{\trm}(x)}(v)&=
    \sum_{i=1}^N (\innerprod{e_i}{\transpose{D\sem{\trm}(x)}(v)})\cdot e_i
    =\sum_{i=1}^N (\innerprod{e_i}{\sem{\Dsynrev[\vGamma]{\trm}_2}(x)(v)})\cdot e_i
    = \sem{\Dsynrev[\vGamma]{\trm}_2}(x)(v).\end{align*}
\end{proof}
 \section{Practical Relevance and Implementation in Functional Languages}\label{sec:implementation}
Most popular functional languages, such as Haskell and OCaml, do not 
natively support linear types.
Thus, the transformations described in this paper may seem 
hard to implement. 
However, as we will argue in this section, we can easily implement 
the limited linear types 
used in phrasing the transformations
as abstract data types using only a basic module system, such as that of Haskell.
The key idea is that linear function types $\cty\multimap \cty[2]$ can be represented as 
plain functions $\ty\To\ty[2]$ and copowers $\copower{\ty}{\cty[2]}$ can be represented as lists or arrays of pairs of type $\ty\t*\ty[2]$.

To substantiate that claim, we provide a reference implementation of CHAD operating on strongly typed, deeply embedded DSLs in Haskell at \href{https://github.com/VMatthijs/CHAD}{https://github.com/VMatthijs/CHAD}.
This section explains how that implementation relates to the theoretical development in the rest of this paper.
This section is rather short because our implementation almost exactly follows the 
theoretical development in \S\ref{sec:language}, \ref{sec:minimal-linear-language}, \ref{sec:semantics} and \ref{sec:ad-transformation}.

\subsection{Implementing Linear Functions and Copowers as Abstract Types in Functional Languages}\label{ssec:linear-types-as-abstract-types}
Based on the denotational semantics,
$\cty\multimap\cty[2]$-types should hold (representations of) functions $f$
from $\cty$ to $\cty[2]$
that are homomorphisms of the monoid structures on $\cty$ and $\cty[2]$.
We will see that these types can be implemented using an abstract data type 
that holds certain basic linear functions (extensible as the library evolves)
and is closed under the identity, composition, argument swapping, and currying.
Again, based on the semantics, $\copower{\ty}{\cty[2]}$ should contain (representations of)
finite maps (associative arrays) $\sum_{i=1}^n\copower{\trm_i}{\trm[2]_i}$ of pairs $(\trm_i,\trm[2]_i)$, where $\trm_i$ is of type $\ty$, and 
$\trm[2]_i$ is of type $\cty[2]$, and where we identify $xs+\copower{\trm}{\trm[2]}+\copower{\trm}{\trm[2]'}$
and $xs+\copower{\trm}{(\trm[2]+\trm[2]')}$.

To implement this idea, we consider abstract types $\LinFun{\ty}{\ty[2]}$ of linear functions and $\Map{\ty}{\ty[2]}$ of copowers.
Their programs are generated by the following grammar\\
\begin{syntax}
    \ty, \ty[2], \ty[3] & \gdefinedby & &  \syncat{types}\\     
    &\gor& \ldots                      & \synname{as in \S\ref{sec:language}}\\
   &&&\\
\trm, \trm[2], \trm[3] & \gdefinedby  & & \syncat{terms}             \\
&\gor& \ldots                      & \synname{as in \S\ref{sec:language}}\\
&\gor & \lop(\trm_1,\ldots,\trm_n) & \synname{linear operations}\\
& \gor & \zero_{\ty} & \synname{zero}\\
&\gor & \trm + \trm[2] & \synname{plus}\\
&\gor & \linearid & \synname{linear identity}\\
&\gor & \trm\lcomp \trm[2] & \synname{linear composition}\\
\end{syntax}~\qquad\qquad
  \begin{syntax}
    &\gor \quad\,& \Map{\ty}{\ty[2]} & \synname{copower types}\\ 
  & \gor & \LinFun{\ty}{\ty[2]} & \synname{linear function}\\
  &  & & \\ 
  &\gor & \lswap\,\trm & \synname{swapping args}\\
  &\gor & \leval{\trm} & \synname{linear evaluation}\\
  & \gor & \lsing{\trm} & \synname{singletons}\\
  &\gor & \lcopowfold\trm & \synname{$\MapSym$-elimination}\\
  &\gor & \lFst           & \synname{linear projection}\\
  &\gor & \lSnd           &\synname{linear projection}\\
  &\gor & \lPair{\trm}{\trm[2]} & \synname{linear pairing,}
  \end{syntax}
 \\[2pt]
and their API can be typed according to the rules of Fig. \ref{fig:types-maps}.
\begin{figure}[!t]
    \noindent
\framebox{\begin{minipage}{0.98\linewidth}\noindent\hspace{-24pt}\[
  \begin{array}{c}
    \inferrule{\set{\Ginf {\trm_i} {\reals^{n_i}}\mid i=1,\ldots,k}
    \quad 
    (\lop(\trm_1,\ldots,\trm_k)\in\LOp^{m_1,\ldots,m_r}_{n_1,\ldots,n_k;n'_1,\ldots,n'_l})}{\Ginf {\lop(\trm_1,\ldots,\trm_k)}{\LinFun{\reals^{n'_1}\t*\cdots\t*\reals^{n'_l}}{\reals^{m_1}\t*\cdots\t*\reals^{m_r}}}}
\qquad 
    \inferrule{~}
    {\Ginf {\zero_{\ty}} {\ty}}
    \qquad 
    \inferrule{\Ginf \trm {\ty}
    \quad \Ginf {\trm[2]}{\ty}}
     {\Ginf {\trm+_{\ty}\trm[2]}{\ty}}
\\\\
\inferrule{~}{\Ginf \linearid \LinFun{\ty}{\ty}}
    \qquad
    \inferrule{\Ginf \trm {\LinFun{\ty}{\ty[2]}}
    \quad \Ginf {\trm[2]} {\LinFun{\ty[2]}{\ty[3]}}}
    {\Ginf {\trm\lcomp \trm[2]} {\LinFun{\ty}{\ty[3]}}}
\\\\
    \inferrule{\Ginf \trm {\ty\To \LinFun{\ty[2]}{\ty[3]}}}
    {\Ginf {\lswap\, \trm} {\LinFun{\ty[2]}{\ty\To\ty[3]}}}
    \qquad
    \inferrule{\Ginf \trm {\ty}}
    {\Ginf {\leval{\trm}} {\LinFun{\ty\To\ty[2]}{\ty[2]}}}
    \\\\
    \inferrule{\Ginf \trm \ty}
    {\Ginf {\lsing{\trm}} {\LinFun{\ty[2]}{\Map{\ty}{\ty[2]}}}}
    \qquad 
    \inferrule{\Ginf \trm{\ty\To \LinFun{\ty[2]}{\ty[3]}}}
    {\Ginf {\lcopowfold\trm} {\LinFun {\Map{\ty}{\ty[2]}}{\ty[3]}}}
    \\\\ 
    \inferrule{~}{\Ginf {\lFst}
    {\LinFun{\ty\t*\ty[2]}{\ty}}}\qquad
    \inferrule{~}{\Ginf {\lSnd}
    {\LinFun{\ty\t*\ty[2]}{\ty[2]}}}\qquad 
    \inferrule{\Ginf {\trm}{\LinFun{\ty}{\ty[2]}}\qquad 
    \Ginf{\trm[2]}{\LinFun{\ty}{\ty[3]}}}{\Ginf {\lPair{\trm}{\trm[2]}}
    {\LinFun{\ty}{\ty[2]\t*\ty[3]}}}
\end{array}
\]
 \end{minipage}}
    \caption{Typing rules for the applied target language, to extend the source language.\label{fig:types-maps}}
    \end{figure}

We note that these abstract types give us precisely the functionality
and type safety of the linear function and copower types of our target language 
of \S\ref{sec:minimal-linear-language}.
Indeed, we can define a semantics and type-preserving translation $(-)^T$ from that target language to our source language extended with these $\LinFun{\ty}{\ty[2]}$ and $\Map{\ty}{\ty[2]}$ types, for which $(\copower{\ty}{\cty[2]})^T\defeq \Map{\ty^T}{\cty[2]^T}$,
$(\cty\multimap \cty[2])^T\defeq \LinFun{\cty^T}{\cty[2]^T}$,
$(\creals^n)^T\defeq \reals^n$
and we extend $(-)^T$ structurally recursively, letting it preserve all other type formers.
We then translate
$(\var_1:\ty_1,\ldots,\var_n:\ty_n;\var[2]:\cty[2]\vdash \trm:{\cty[3]})^T\defeq
\var_1:\ty_1^T,\ldots,\var_n:\ty_n^T\vdash \trm^T:{(\cty[2]\multimap\cty[3])^T}$
and $(\var_1:\ty_1,\ldots,\var_n:\ty_n\vdash \trm:{\ty[2]})^T\defeq 
\var_1:\ty_1^T,\ldots,\var_n:\ty_n^T\vdash \trm^T : \ty[2]^T$.
We believe an interested reader can fill in the details.

\subsection{Implementing the API of $\LinFun{\ty}{\ty[2]}$ and $\Map{\ty}{\ty[2]}$ Types}\label{ssec:api}
We can implement the API of $\LinFun{\ty}{\ty[2]}$ and $\Map{\ty}{\ty[2]}$ types
in a language that extends the source language with
types $\List{\ty}$ of lists (or arrays) of elements of type $\ty$.
Indeed, under the hood, we implement $\LinFun{\ty}{\ty[2]}$ as 
$\ty\To\ty[2]$ and $\Map{\ty}{\ty[2]}$ as $\List{\ty\t*\ty[2]}$.
The idea is that $\LinFun{\ty}{\ty[2]}$, which arose as a right adjoint in our linear language, 
is essentially a \emph{subtype} of $\ty\To\ty[2]$. On the other hand, $\Map{\ty}{\ty[2]}$, which arose as a left adjoint, 
is a \emph{quotient type} of $\List{\ty\t*\ty[2]}$.
We achieve the desired subtyping and quotient typing by exposing only the API of Fig. \ref{fig:types-maps} and 
hiding the implementation.
We can then implement this interface as follows.\footnote{Note that the implementation of $\trm +_{\Map{\ty}{\ty[2]}} \trm[2]$ is merely list concatenation written using a fold.}
    \begin{align*}
    &\lop \defeq \overline{\lop}
    \qquad\zero_{\Unit} \defeq \tUnit
    \qquad\trm +_{\Unit} \trm[2] \defeq \tUnit
    \qquad\zero_{\ty\t*\ty[2]} \defeq \tPair{\zero_{\ty}}{\zero_{\ty[2]}}
    \qquad \trm +_{\ty\t*\ty[2]}\trm[2] \defeq \tPair{\tFst\trm+_{\ty}\tFst\trm[2]}{\tSnd\trm+_{\ty[2]}\tSnd\trm[2]}
\\
    &\zero_{\ty\To\ty[2]} \defeq \fun{\_}\zero_{\ty[2]}\qquad \trm +_{\ty\To\ty[2]} \trm[2] \defeq \fun{\var}\trm\,\var +_{\ty[2]} \trm[2]\,\var
    \qquad 
    \zero_{\LinFun{\ty}{\ty[2]}} \defeq \fun{\_}\zero_{\ty[2]}
    \qquad
    \trm +_{\LinFun{\ty}{\ty[2]}} \trm[2] \defeq \fun{\var}\trm\,\var +_{\ty[2]} \trm[2]\,\var
    \\ 
    &\zero_{\Map{\ty}{\ty[2]}}\defeq \EmptyList\qquad \trm +_{\Map{\ty}{\ty[2]}} \trm[2] \defeq \ListFold{\ListCons{\var}{acc}}{\var}{\trm}{acc}{\trm[2]}
    \\ &\linearid \defeq \fun{\var}\var  \qquad \trm\lcomp\trm[2] \defeq \fun{\var}\trm[2]\,(\trm\,\var)\qquad \lswap\,\trm\defeq \fun{\var}\fun{\var[2]}\trm\,\var[2]\,\var
    \qquad \leval{\trm}\defeq \fun{\var}\var\,\trm 
    \\
  & \lsing{\trm}\defeq \fun{\var}\ListCons{\tPair{\trm}{\var}}{\EmptyList}\qquad  \lcopowfold\trm\defeq \fun{\var[3]}\ListFold{\trm\, (\tFst\var)\,(\tSnd\var)+acc}{\var}{\var[3]}{acc}{\zero}
    \\ &\lFst\defeq \fun\var {\tFst\var}\qquad \lSnd\defeq \fun\var{\tSnd\var}
    \qquad \lPair{\trm}{\trm[2]}\defeq \fun\var{\tPair{\trm\,\var}{\trm[2]\,\var}}
    \end{align*}
Here, we write $\overline{\lop}$ for the function that $\lop$ is intended to implement,
$\EmptyList$ for the empty list, $\ListCons{\trm}{\trm[2]}$ for the list consisting 
of $\trm[2]$ with $\trm$ prepended, and $\ListFold{\trm}{\var}{\trm[2]}{acc}{init}$
for (right) folding an operation $\trm$ over a list $\trm[2]$, starting from $init$.
The monoid structure $0_{\ty}$, $+_{\ty}$ can be defined by induction on the structure of types, 
using, for example, type classes in Haskell, OCaml's module system, or reified types in any functional language with algebraic data types.
Furthermore, the implementer of the AD library can determine which linear operations $\lop$ 
to include within the implementation of $\LinFunSym$.
We expect these linear operations to include various forms of dense
and sparse matrix-vector multiplication as well as code that computes 
Jacobian-vector and Jacobian-adjoint products for the operations $\op$ while avoiding the need to 
compute the full Jacobian.
Another option is to simply include two linear operations $D\op$ and $\transpose{D\op}$ for computing the 
derivative and transposed derivative of each operation $\op$.

\subsection{Maintaining Type Safety throughout the Compilation Pipeline in our Reference Implementation}
In a principled approach to building a define-then-run AD library,
one would shield this implementation using the
abstract data types $\Map{\ty}{\ty[2]}$ and $\LinFun{\ty}{\ty[2]}$ as 
we describe, both
for reasons of type safety
and because it conveys the intuition
behind the algorithm and its correctness.
By combining such abstract data types in our Haskell implementation 
with GADTs and type families, we achieve a fully type-safe (well-scoped, well-typed De Bruijn) 
implementation of 
the source and target languages of \S\ref{sec:language} and \S\ref{sec:minimal-linear-language} with their semantics of \S\ref{sec:semantics} 
and statically type-checked code transformations of \S\ref{sec:ad-transformation}.

However, nothing prevents library implementers from exposing the full implementation rather than working with abstract types.
In fact, this seems to be the approach taken in \cite{vytiniotis2019differentiable}.
A downside of that ``exposed'' approach is that the transformations then no longer 
respect equational reasoning principles.
In our reference implementation, we include a compiler from the (linearly typed) target language to a less type-safe ``concrete'' target language (implementing \S\S\ref{ssec:api} as a compilation step): essentially the source language extended with list (or array) types.\footnote{More precisely, the to-concrete compilation step in the implementation does convert copowers and linear functions to lists and regular functions, but retains $\zero$ and $+$ primitives. This brings a major increase in readability of the output. Implementing the inductive definitions for $\zero$ and $+$ is easy when reified types (singletons) are added to the implementation, which itself is an easy change if one modifies the \texttt{LT} type class.}
This demonstrates that CHAD implements a compile-time AD code transformation that takes standard functional code as input and produces standard functional code without any custom semantics.

\subsection{Compiling Away Copowers}
As a final observation on this implementation, we note that while the proposed implementation 
of copowers as lists is generally applicable, more efficient implementation strategies can often be achieved in practice.
In fact, in unpublished follow-up work to this paper led by Tom Smeding, we show that when we implement CHAD for Accelerate \cite{McDonell:2013:OPF:2500365.2500595}, we can optimize away uses of copower types.

\section{Adding Higher-Order Array Primitives}
\label{sec:higher-order-primitives}
The aim of this paper is to answer the foundational question 
of how to perform (reverse) AD at higher types.
The problem of how to perform AD for evaluation and currying is highly 
challenging.
For this reason, we have devoted this paper to explaining a solution to that problem in detail, working with a toy language whose ground types 
are black-box, sized arrays $\reals^n$ with some first-order
operations $\op$.
However, many of the interesting applications only arise 
once we can use higher-order array primitives such as $\tMap$ 
and $\mathbf{fold}$ on $\reals^n$.

Our definitions and correctness proofs extend to this
 setting with standard array processing primitives including $\tMap$, $\mathbf{fold}$, $\mathbf{filter}$, $\mathbf{zipWith}$, $\mathbf{permute}$ (also known as $\mathbf{scatter}$), $\mathbf{backpermute}$ (also known as $\mathbf{gather}$), $\mathbf{generate}$ (also known as $\mathbf{build}$), and array indexing. We
 plan to discuss these primitives as well as CHAD applied to dynamically sized arrays in detail 
in an applied follow-up paper, which will focus on 
an implementation of CHAD for Accelerate \cite{McDonell:2013:OPF:2500365.2500595}.

To illustrate the idea behind such an extension, we briefly discuss the case of $\tMap$ here and leave the rest to future work.
Suppose that we add operations\footnote{We prefer to work with this elementary formulation of maps rather than the usual higher-order formulation of 
$$
\inferrule{\Ginf\trm{\reals\To\reals}\quad 
\Ginf{\trm[2]}{\reals^n}}{\Ginf{\tMap(\trm,\trm[2])}{\reals^n}}
$$
because it makes sense in the wider context of languages 
without function types as well and because it simplifies the CHAD correctness proof. 
Note that both are equivalent in the presence of function types: $\tMap(\trm,\trm[2])=\tMap(\var.\trm\,\var,\trm[2])$ and $\tMap(\var.\trm,\trm[2])=\tMap(\fun{\var}\trm,\trm[2])$.
}
\[
  \inferrule{\Ginf[,\var:\reals]\trm\reals\quad 
  \Ginf{\trm[2]}{\reals^n}}{\Ginf{\tMap(\var.\trm,\trm[2])}{\reals^n}}  
\]
 to the 
source language to ``map'' functions over the black-box arrays.
Then, supposing that we add the following primitives to the target language 
\begin{align*}
\inferrule{\Gamma,\var:\reals;\lvar:\cty\t*\creals\vdash \trm: \creals\quad 
\Gamma\vdash \trm[2]: \reals^n\quad 
\Gamma;\lvar:\cty\vdash \trm[3]:\creals^n
}
{\Gamma;\lvar:\cty\vdash D\tMap(\var.\trm,\trm[2],\trm[3]) :\creals^n}\\
\\
\inferrule{
\Gamma,\var:\reals; \lvar:\creals\vdash \trm:\cty\t* \creals \quad 
\Gamma\vdash \trm[2]: \reals^n
\quad\Gamma;\lvar:\creals^n\vdash \trm[3] : \cty}
{\Gamma;\lvar:\creals^n\vdash \transpose{D\tMap}(\var.\trm,\trm[2],\trm[3]) : \cty}
\end{align*}
we can define 
\begin{flalign*}
&\Dsyn[\vGamma]{\tMap(\var.\trm,\trm[2])} &&\defeq &&
\letin{\var[2]}{\fun{\var}\Dsyn[\vGamma,\var]{\trm}}{
\pletin{\var[3]}{\var[3]'}{\Dsyn[\vGamma]{\trm[2]}}{
\tPair{\tMap(\var.\tFst(\var[2]\,\var),\var[3])}{\lfun{\lvar}D\tMap(\var.\lapp{(\tSnd(\var[2]\,\var))}{\lvar},\var[3],\lapp{\var[3]'}{\lvar})}}}\\
&\Dsynrev[\vGamma]{\tMap(\var.\trm,\trm[2])} &&\defeq &&
\letin{\var[2]}{\fun{\var}\Dsynrev[\vGamma,\var]{\trm}}{
\pletin{\var[3]}{\var[3]'}{\Dsynrev[\vGamma]{\trm[2]}}{
\tPair{\tMap(\var.\tFst(\var[2]\,\var),\var[3])}{\lfun\lvar \transpose{D\tMap}(\var.\lapp{(\tSnd (\var[2]\,\var))}{\lvar},\var[3],\lapp{\var[3]'}{\lvar})}}}.
\end{flalign*}

In our practical API of \S\S\ref{ssec:linear-types-as-abstract-types},
the required target language primitives correspond to 
\begin{align*}
  \inferrule{\Gamma,\var:\reals\vdash \trm: \LinFun{\cty\t*\creals}{\creals}\quad 
  \Gamma\vdash \trm[2]: \reals^n\quad 
  \Gamma\vdash \trm[3]:\LinFun{\cty}{\creals^n}
  }
  {\Gamma\vdash D\tMap(\var.\trm,\trm[2],\trm[3]) :\LinFun{\cty}{\creals^n}}\\
  \\
  \inferrule{
  \Gamma,\var:\reals\vdash \trm:\LinFun{\creals}{\cty\t* \creals} \quad 
  \Gamma\vdash \trm[2]: \reals^n
  \quad\Gamma\vdash \trm[3] : \LinFun{\creals^n}{\cty}}
  {\Gamma\vdash \transpose{D\tMap}(\var.\trm,\trm[2],\trm[3]) : \LinFun{\creals^n}{\cty}.}
  \end{align*}
Extending \S\S\ref{ssec:api}, we can implement the API as 
\begin{align*}
&D\tMap(\var.\trm,\trm[2],\trm[3])\defeq \fun{\var[2]} \tZipWith((\var,\var').\trm\,\tPair{\var[2]}{\var'},\trm[2],\trm[3]\,\var[2])\\
&\transpose{D\tMap}(\var.\trm,\trm[2],\trm[3])\defeq 
\fun{\var[2]}\letin{zs}{\tZipWith((\var,\var').\trm\,\var',\trm[2],\var[2])}{
 \mathbf{sum}(\tMap(w.\tFst\,w,zs)) + \trm[3] \,\tMap(w.\tSnd\,w,zs)
},
\end{align*}
where 
\begin{align*}
  \inferrule{\Ginf[,\var:\ty]\trm\ty[2]\quad 
\Ginf{\trm[2]}{\ty^n}}{\Ginf{\tMap(\var.\trm,\trm[2])}{\ty[2]^n}}  
\qquad 
\inferrule{\Gamma,\var:\ty,\var':\ty[2]\vdash\trm:\ty[3]\quad
\Gamma\vdash \trm[2]:\ty^n\quad 
\Gamma\vdash \trm[3]:\ty[2]^n}
{\Gamma\vdash \tZipWith((\var,\var').\trm,\trm[2],\trm[3]):{\ty[3]}^n}\qquad
\inferrule{\Ginf\trm{{\ty}^n}}{\Ginf{\mathbf{sum}\,\trm}{\ty}}
\end{align*}
are the usual functional programming idioms for mapping a unary function over an array, zipping two arrays with a binary operation, and taking the sum of the elements in an array.
Note that we assume that we have types $\ty^n$ for 
length-$n$ arrays of elements of type $\ty$ here, 
generalizing the arrays $\reals^n$ of elements of type $\reals$.
We present a correctness proof for this implementation of the derivatives in \citeappx{A}.

Applications frequently require AD of
higher-order primitives such as differential and algebraic equation 
solvers,
e.g. for use in pharmacological modelling in 
Stan \cite{tsiros2019population}.
Currently, derivatives of such primitives are derived using
the calculus of variations (and implemented with define-by-run AD) \cite{betancourt2020discrete, hannemann2015adjoint}.
Our proof method provides a more lightweight and formal method for 
calculating derivatives for such higher-order primitives and establishing their correctness.
Indeed, most formalizations of the calculus of variations
use infinite-dimensional vector spaces and are technically involved
\cite{kriegl1997convenient}.
 
\section{Scope of CHAD and Future Work}\label{sec:scope}

\subsection{Memory Use of CHAD's Forward AD}\label{ssec:dual-numbers-forward-ad}
Our formulation makes reverse and forward AD precisely each other's 
categorical dual.
The former first computes the primals in a forward pass and then the cotangents in a reverse pass. 
Dually, the latter first computes the primals in a forward pass and then the tangents in another forward pass.
Since the two forward passes in forward AD have identical control flow, 
it can be advantageous to interleave them and simultaneously compute the primals and tangents.
Such interleaving greatly reduces the memory consumption of the algorithm, because not all primals need to be stored for most of the algorithm.
We present such an interleaved formulation of forward AD in \cite{hsv-fossacs2020}.

Although this formulation is much more memory efficient, it has the conceptual downside of no longer being the mirror image of reverse AD.
Furthermore, these interleaved formulations of forward AD work by operating on dual numbers.
That is, they use an array-of-structs representation, in contrast to the struct-of-arrays representation used to pair primals with tangents in CHAD.
Therefore, an SoA-to-AoS optimization is typically needed to make interleaved implementations of forward AD efficient \cite{shaikhha2019efficient}.

Finally, we note that such interleaving techniques do not apply to reverse AD, 
because the dependency structure of the algorithm requires us to complete the forward primal pass before starting the reverse cotangent pass.

\subsection{Applying CHAD to Richer Source Languages}\label{ssec:chad-rich-source}
The core observations that let us use CHAD for AD on a higher-order language were the following:
\begin{enumerate}
\item there is a class of categories with structure $\mathcal{S}$ (in this case, Cartesian closure)
such that the source language $\Syn$ on which we want to perform AD can be seen as the freely generated 
$\mathcal{S}$-category on the operations $\op$;
\item we identified structure $\mathcal{T}$ that suffices for a $\CMon$-enriched strictly indexed category $\catL:\catC^{op}\to \Cat$ to ensure that $\Sigma_\catC\catL$ and $\Sigma_\catC\catL^{op}$ are $\mathcal{S}$-categories;
\item we gave a description $\LSyn:\CSyn^{op}\to\Cat$ of the freely generated $\CMon$-enriched strictly indexed category with structure $\mathcal{T}$, on the Cartesian operations $\op$ in $\CSyn$ and linear operations $D\op$ and $\transpose{D\op}$ in $\LSyn$; we interpret this linear/non-linear language as the target language of our AD translations;
\item by the universal property of $\Syn$, we obtain unique $\mathcal{S}$-homomorphic AD functors $\Dsynsymbol:\Syn\to\Sigma_\CSyn\LSyn$ and $\Dsynrevsymbol:\Syn\to\Sigma_\CSyn\LSyn^{op}$ such that $\Dsyn{\op}=(\op,D\op)$ and 
$\Dsynrev{\op}=(\op,\transpose{D\op})$, whose correctness proof follows immediately because of the well-known theory of 
subsconing for $\mathcal{S}$-categories.
\end{enumerate}
CHAD applies equally to source languages with other choices of $\mathcal{S}$, provided that we can follow steps 1-4.

In particular, \cite{lucatellivakar2021chad} shows how CHAD applies equally to languages with sum types and (co)inductive types (and tuple and function types). 
In that setting, the category $\LSyn$ is a genuine strictly indexed category over $\CSyn$, to account 
for the fact that the (co)tangent space to a space of varying dimension depends on the chosen base point.
That is, in its most principled formulation, the target language has (linear) dependent types.
However, we can also work with a simply typed target language in this setting, at the cost of some extra type safety.
In fact, our Haskell implementation already supports such a treatment of coproducts.

As discussed in \S\ref{sec:higher-order-primitives}, $\mathcal{S}$ can also be chosen to include various operations for manipulating arrays, such as $\tMap$, $\mathbf{fold}$, $\mathbf{filter}$, $\mathbf{zipWith}$, $\mathbf{permute}$ (also known as $\mathbf{scatter}$), $\mathbf{backpermute}$ (also known as $\mathbf{gather}$), $\mathbf{generate}$ (also known as $\mathbf{build}$), and array indexing.
We plan to describe this application of CHAD to array processing languages
and we are implementing CHAD to operate on the Accelerate parallel array processing language.

In work in progress, we are applying CHAD to partial features such as real conditionals, iteration, recursion and recursive types.
Our Haskell implementation of CHAD already supports real conditionals, iteration and recursion.
The challenge in this setting is to understand the subtle interactions between the $\wCpo$-structure 
needed to model recursion and the commutative monoid structure that CHAD uses to accumulate (co)tangents.

\subsection{CHAD for Other Dynamic Program Analyses}\label{ssec:other-analyses}
As noted by \cite{vytiniotis2019differentiable}, source-code transformation AD 
has many similarities to other dynamic program analyses such as 
dynamic symbolic analysis and provenance analysis.

In fact, as the abstract perspective on CHAD given in \S\S\ref{ssec:chad-rich-source} makes clear, CHAD is in no way tied to automatic differentiation.
In many ways, it is much more general, and can best be seen as a framework for 
applying dynamic program analyses that accumulate data (either by going through the program forward or backward) in a commutative monoid 
to functional languages with expressive features.
In fact, by varying the definitions of $\Dsyn{\mathcal{R}}$ and $\Dsynrev{\mathcal{R}}$ for the ground types $\mathcal{R}$ and the definitions of $\Dsyn{\op}$ and $\Dsynrev{\op}$ for the primitive operations $\op$
(we do not even need to use $\Dsyn{\mathcal{R}}_1=\mathcal{R}$, $\Dsynrev{\mathcal{R}}_1=\mathcal{R}$, $\Dsyn{\op}_1=\op$ or $\Dsynrev{\op}_1=\op$!),
we can completely change the nature of the analysis.
In most cases, as long as a notion of correctness of the analysis can be phrased at the level of a denotational semantics,
we conjecture that our subsconing techniques lead to straightforward correctness proofs of the analysis.

To give one more example application of such an analysis, beyond AD, dynamic symbolic analysis and provenance analysis, note that for a source language $\Syn$, generated from a base type $\mathcal{R}$ that is a commutative semiring, we have a notion of algebraic (or formal) derivative of any polynomial $x_1:\mathcal{R},\ldots,x_n:\mathcal{R}\vdash \op(x_1,\ldots,x_n):\mathcal{R}$ \cite{lang02}.
CHAD can be used to extend and compute this notion of derivative for arbitrary functional programs generated from the polynomials $\op$ as basic operations.
The particular case of such formal derivatives for the Boolean semiring $\mathcal{R}=\mathbb{B}$ is used in
\cite{wilson2021reverse} to feed into a gradient descent algorithm to learn Boolean circuits. 
CHAD makes this method applicable to more general (higher-order) programs over (arrays of) Booleans.

\section{Related Work}\label{sec:related-work}
This work is closely related to \cite{hsv-fossacs2020} and \cite{huot2021higher},
which introduced a similar semantic correctness proof for a dual-numbers version 
of forward mode AD and higher-order forward AD, using a subsconing construction.
A major difference is that this paper also phrases and proves 
correctness of reverse mode AD on a $\lambda$-calculus and relates reverse mode 
to forward mode AD.
Using a syntactic logical relations proof instead, \cite{bcdg-open-logical-relations}
also proves correctness of forward mode AD.
Again, it does not address reverse AD.

\cite{rev-deriv-cat2020} proposes a construction similar to that of
\S\ref{sec:self-dualization}, and it relates it to the
differential $\lambda$-calculus.
This paper develops sophisticated axiomatics for
semantic reverse differentiation. 
However, it neither relates the semantics to a source-code transformation,
nor discusses differentiation of higher-order functions.
Our construction of differentiation with a (biadditive) linear target language 
might remind the reader of differential linear logic \cite{ehrhard2018introduction}.
In differential linear logic, (forward) differentiation is a first-class operation 
in a (biadditive) linear language. 
By contrast, in our treatment, differentiation is a meta-operation.

Importantly, \cite{elliott2018simple} describes and implements
 what are essentially our source-code transformations, though they were restricted to 
first-order functions and scalars.
After completing this work, we realized that 
\cite{vytiniotis2019differentiable} describes an extension of the reverse mode transformation
to higher-order functions in a manner similar to what we propose in this paper, but without the linear or abstract types. Though that paper did not derive the algorithm or show its correctness, it does discuss important practical considerations for its implementation and offers a dependently typed 
variant of the algorithm based on typed closure conversion, inspired by \cite{pearlmutter2008reverse}.

Next, there are various lines of work related to the correctness of 
reverse mode AD that we consider less similar to our work.
For example, \cite{mak-ong2020}
define and prove correct a formulation of reverse mode AD on a higher-order 
language that depends on a non-standard operational semantics, essentially 
a form of symbolic execution. \cite{abadi-plotkin2020} does something similar 
for reverse mode AD on a first-order language extended with conditionals and iteration.
\cite{brunel2019backpropagation} defines a beautifully simple AD algorithm 
on a simply typed $\lambda$-calculus with linear negation
(essentially, a more finely typed version of the continuation-based AD of \cite{hsv-fossacs2020})
and proves it correct using operational techniques.
\cite{mazza2021automatic} extends this work to apply to recursion.
Furthermore, they show with an impressive operational argument that this simple algorithm, surprisingly, corresponds to true reverse mode AD with the correct complexity
under an operational semantics with a ``linear factoring rule''.
While this is a natural operational semantics for a linear $\lambda$-calculus,
it is fundamentally different from normal call-by-value or call-by-name evaluation (under which the generated code has the wrong computational complexity).
For this reason, this reverse AD method requires a custom interpreter or compiler in practice.
Very recently, \cite{krawiec2022provably} specified another purely functional reverse AD algorithm, which appears similar to,
and which we conjecture to be equivalent to, an implementation of the techniques of \cite{mazza2021automatic}.
These formulations of reverse mode AD all depend on non-standard runtimes 
and hence fall into the category of ``define-by-run'' 
formulations of reverse mode AD, for our purposes. 
Meanwhile, we are concerned with ``define-then-run'' formulations:
source-code transformations that produce differentiated code at compile time that can then be optimized during compilation with existing compiler 
toolchains (such as the Accelerate \cite{Chakravarty:2011:AHA:1926354.1926358}, Futhark \cite{henriksen2017futhark} and TensorFlow \cite{abadi2016tensorflow} frameworks for generating high-performance GPU code).
While we can compile such define-by-run transformations together with their interpreter to achieve a source-code transformation (hence a sort of define-then-run transformation), the resulting code recursively 
traverses an AST, so it does not seem obviously suitable for generating, via existing toolchains, optimized machine code 
for the usual parallel hardware that we use as targets for AD, such as GPUs and TPUs.

Finally, there is a long history of work on reverse mode AD,
though almost none of it applies the technique to higher-order 
functions.
A notable exception is \cite{pearlmutter2008reverse}, which 
gives an impressive source-code 
transformation implementation of reverse AD in Scheme.
While very efficient, this implementation crucially uses mutation.
Moreover, the transformation is complex and correctness is not considered.
More recently, \cite{wang2018demystifying} describes a much simpler 
implementation of a reverse AD code transformation, also very efficient.
However, the transformation is quite different from the one 
considered in this paper as it relies on a combination of delimited continuations 
and mutable state.
Correctness is not considered, perhaps because of the semantic complexities
introduced by impurity.

Our work adds to the existing literature by presenting a novel, generally applicable 
method for compositional source-code transformation (forward and) reverse AD on expressive functional 
languages without requiring a non-standard runtime, by giving a method for compositional correctness proofs of such AD algorithms, 
and by observing that the CHAD method and its correctness proof are not limited to AD but 
apply generally to dynamic program analyses that accumulate data in a commutative monoid.
\section*{Acknowledgements}
This project has received funding from the European Union’s Horizon 2020 research and innovation
programme under the Marie Skłodowska-Curie grant agreement No. 895827.
We thank Michael Betancourt, Philip de Bruin, Bob Carpenter, Mathieu Huot, Danny de Jong, Ohad Kammar, Gabriele Keller, 
Pieter Knops, Fernando Lucatelli Nunes,
Curtis Chin Jen Sem, Amir Shaikhha,
and Sam Staton for helpful discussions about automatic differentiation.
We are grateful to the anonymous reviewers who gave excellent comments
on earlier versions of this paper that prompted various much-needed rewrites.

\clearpage
\bibliography{bibliography}

\appendix
\clearpage
\section{CHAD Correctness for Higher-order Operations such as Map}\label{appx:map-fold}
We extend the proofs of Lemmas \ref{lem:fundamental-fwd}
and \ref{lem:fundamental-rev} to apply to the $\tMap$-constructs of \S\ref{sec:higher-order-primitives}.

\subsection{The Semantics of $\tMap$ and its Derivatives}
First, we observe that
\begin{align*}
\sem{\tMap(\var.\trm,\trm[2])} : &\sem{\Gamma}\to \RR^n\\
&\gamma\mapsto \left(\sem{\trm}(\gamma, \pi_1(\sem{\trm[2]}(\gamma))),\ldots, \sem{\trm}(\gamma, \pi_n(\sem{\trm[2]}(\gamma)))\right).
\end{align*}
Similarly, 
\begin{align*}
    \sem{D\tMap(\var.\trm,\trm[2],\trm[3])} : &\sem{\Gamma}\to \sem{\cty}\multimap \cRR^n\\
&\gamma\mapsto v\mapsto \left(\sem{\trm}(\gamma,\pi_1(\sem{\trm[2]}(\gamma)))(v,\pi_1(\sem{\trm[3]}(\gamma)(v))),\ldots, \sem{\trm}(\gamma,\pi_n(\sem{\trm[2]}(\gamma)))(v,\pi_n(\sem{\trm[3]}(\gamma)(v)))\right)\\
\sem{\transpose{D\tMap}(\var.\trm,\trm[2],\trm[3])} : &\sem{\Gamma}\to \cRR^n\multimap \sem{\cty}\\
&\gamma\mapsto v\mapsto \pi_1(\sem{\trm}(\gamma,\pi_1(\sem{\trm[2]}(\gamma)))(\pi_1(v)))+\ldots + \pi_1(\sem{\trm}(\gamma,\pi_n(\sem{\trm[2]}(\gamma)))(\pi_n(v)))+\\
&\hspace{40pt}\sem{\trm[3]}(\gamma)(\pi_2(\sem{\trm}(\gamma,\pi_1(\sem{\trm[2]}(\gamma)))(\pi_1(v))),\ldots, \pi_2(\sem{\trm}(\gamma,\pi_n(\sem{\trm[2]}(\gamma)))(\pi_n(v))))
\end{align*}
This implies that 
\begin{align*}
&\pi_1(\sem{\Dsyn[\vGamma]{\tMap(\var.\trm,\trm[2])}}(\gamma))=\left(\pi_1(\sem{\Dsyn[\vGamma,\var]{\trm}}(\gamma,\pi_1(\pi_1(\sem{\Dsyn[\vGamma]{\trm[2]}}(\gamma))))),\ldots, \pi_1(\sem{\Dsyn[\vGamma,\var]{\trm}}(\gamma,\pi_n(\pi_1(\sem{\Dsyn[\vGamma]{\trm[2]}}(\gamma)))))\right)\\
&\pi_2(\sem{\Dsyn[\vGamma]{\tMap(\var.\trm,\trm[2])}}(\gamma))(v)=\Big(
\pi_2(\sem{\Dsyn[\vGamma,\var]{\trm}}(\gamma,\pi_1(\pi_1(\sem{\Dsyn[\vGamma]{\trm[2]}}(\gamma)))))(v,\pi_1(\pi_2(\sem{\Dsyn[\vGamma]{\trm[2]}}(\gamma))(v)))
,\ldots,\\
&\hspace{150pt}\pi_2(\sem{\Dsyn[\vGamma,\var]{\trm}}(\gamma,\pi_n(\pi_1(\sem{\Dsyn[\vGamma]{\trm[2]}}(\gamma)))))(v,\pi_n(\pi_2(\sem{\Dsyn[\vGamma]{\trm[2]}}(\gamma))(v))) \Big)\\
&\pi_1(\sem{\Dsynrev[\vGamma]{\tMap(\var.\trm,\trm[2])}}(\gamma))=\left(\pi_1(\sem{\Dsynrev[\vGamma,\var]{\trm}}(\gamma,\pi_1(\pi_1(\sem{\Dsynrev[\vGamma]{\trm[2]}}(\gamma))))),\ldots, \pi_1(\sem{\Dsynrev[\vGamma,\var]{\trm}}(\gamma,\pi_n(\pi_1(\sem{\Dsynrev[\vGamma]{\trm[2]}}(\gamma)))))\right)\\
&\pi_2(\sem{\Dsynrev[\vGamma]{\tMap(\var.\trm,\trm[2])}}(\gamma))(v)=
\pi_1(\pi_2(\sem{\Dsynrev[\vGamma,\var]{\trm}}(\gamma,\pi_1(\pi_1(\sem{\Dsynrev[\vGamma]{\trm[2]}}(\gamma)))))(
\pi_1(v)
))+\cdots\\
&\hspace{115pt}+ \pi_1(\pi_2(\sem{\Dsynrev[\vGamma,\var]{\trm}}(\gamma,\pi_n(\pi_1(\sem{\Dsynrev[\vGamma]{\trm[2]}}(\gamma)))))(
    \pi_n(v)
    ))\\
    &\hspace{115pt}+\pi_2(\sem{\Dsynrev[\vGamma]{\trm[2]}}(\gamma))(
        \pi_2(\pi_2(\sem{\Dsynrev[\vGamma,\var]{\trm}}(\gamma,\pi_1(\pi_1(\sem{\Dsynrev[\vGamma]{\trm[2]}}(\gamma)))))(
            \pi_1(v)
            )),\ldots,\\
            &\hspace{185pt}\pi_2(\pi_2(\sem{\Dsynrev[\vGamma,\var]{\trm}}(\gamma,\pi_n(\pi_1(\sem{\Dsynrev[\vGamma]{\trm[2]}}(\gamma)))))(
                \pi_n(v)
                ))
    )\\
&\phantom{\pi_2(\sem{\Dsynrev[\vGamma]{\tMap(\var.\trm,\trm[2])}}(\gamma))(v)}=\sum_{i=1}^n \Big(
        \pi_1(\pi_2(\sem{\Dsynrev[\vGamma,\var]{\trm}}(\gamma,\pi_i(\pi_1(\sem{\Dsynrev[\vGamma]{\trm[2]}}(\gamma)))))(
        \pi_i(v)
        ))\\
        &\hspace{115pt}+\pi_2(\sem{\Dsynrev[\vGamma]{\trm[2]}}(\gamma))(0,\ldots,0,\pi_2(\pi_2(\sem{\Dsynrev[\vGamma,\var]{\trm}}(\gamma,\pi_i(\pi_1(\sem{\Dsynrev[\vGamma]{\trm[2]}}(\gamma)))))(
                    \pi_i(v)
                    )),0,\ldots,0
        )\Big),
\end{align*}
where the last equation holds by linearity of $\pi_2(\sem{\Dsynrev[\vGamma]{\trm[2]}}(\gamma))$.

\subsection{Extending the Induction Proof of the Fundamental Lemma for Forward CHAD}
First, we focus on extending the induction proof of the fundamental lemma for forward CHAD to apply to maps.
Assume the induction hypothesis that 
$\trm$ and $\trm[2]$ respect the logical relation.
We show that $\tMap(\var.\trm,\trm[2])$ does as well.
We assume all terms are well-typed.
Suppose that $(f,(g,h))\in P_{\Gamma}$.
We want to show that 
$$(f',(g',h'))\!\defeq\!(f;\sem{\tMap(\var.\trm,\trm[2])},
    (g; \sem{\Dsyn[\vGamma]{\tMap(\var.\trm,\trm[2])}} ; \pi_1, x\mapsto r\mapsto \pi_2(\sem{\Dsyn[\vGamma]{\tMap(\var.\trm,\trm[2])}}(g(x)))(h(x)(r))
    ))\in P_{\reals^n}.$$
Note that $f'= (f'_1,\ldots, f'_n)$, $g'=(g'_1,\ldots,g'_n)$ and $h'(x)=(h'_1(x),\ldots,h'_n(x))$; as derivatives are computed componentwise, it is equivalent to show that $(f'_i,(g'_i, h'_i))\in P_\reals$ for $i=1,\ldots,n$.
That is, we need to show that 
\begin{align*}
&(x\mapsto \sem{\trm}(f(x), \pi_i(\sem{\trm[2]}(f(x)))),
(x\mapsto \pi_1(\sem{\Dsyn[\vGamma,\var]{\trm}}(g(x), \pi_i(\pi_1(\sem{\Dsyn[\vGamma]{\trm[2]}}(g(x)))))),\\
&\qquad x\mapsto r\mapsto\pi_2(\sem{\Dsyn[\vGamma,\var]{\trm}}(g(x),\pi_i(\pi_1(\sem{\Dsyn[\vGamma]{\trm[2]}}(g(x))))))(h(x)(r),\pi_i(\pi_2(\sem{\Dsyn[\vGamma]{\trm[2]}}(g(x)))(h(x)(r))))
))\in P_{\reals}
\end{align*}
As $\trm$ respects the logical relation by our induction hypothesis, it is enough to show that 
\begin{align*}
&(x\mapsto (f(x), \pi_i(\sem{\trm[2]}(f(x)))),
(x\mapsto (g(x), \pi_i(\pi_1(\sem{\Dsyn[\vGamma]{\trm[2]}}(g(x)))))\\
&\qquad x\mapsto r\mapsto (h(x)(r),\pi_i(\pi_2(\sem{\Dsyn[\vGamma]{\trm[2]}}(g(x)))(h(x)(r))))
))\in P_{\Gamma,\var:\reals}.
\end{align*}
Since $(f,(g,h))\in P_\Gamma$ by assumption, it is enough, by definition of 
$P_{\Gamma,\var: \reals}$, to show that
\begin{align*}
    &(x\mapsto \pi_i(\sem{\trm[2]}(f(x))),
    (x\mapsto \pi_i(\pi_1(\sem{\Dsyn[\vGamma]{\trm[2]}}(g(x)))),\\
    &\qquad x\mapsto r\mapsto \pi_i(\pi_2(\sem{\Dsyn[\vGamma]{\trm[2]}}(g(x)))(h(x)(r)))
    ))\in P_{\reals}.
    \end{align*}
By definition of $P_{\reals^n}$, it is enough to show that 
\begin{align*}
    &(x\mapsto \sem{\trm[2]}(f(x)),
    (x\mapsto \pi_1(\sem{\Dsyn[\vGamma]{\trm[2]}}(g(x))),\\
    &\qquad x\mapsto r\mapsto \pi_2(\sem{\Dsyn[\vGamma]{\trm[2]}}(g(x)))(h(x)(r)))
    )\in P_{\reals^n}.
    \end{align*}
Since $\trm[2]$ respects the logical relation by our induction hypothesis, it is enough to show that 
\begin{align*}
    &(x\mapsto f(x),
    (x\mapsto g(x),\\
    &\qquad x\mapsto r\mapsto h(x)(r))
    )\in P_{\Gamma},
    \end{align*}
which is true by assumption.

\subsection{Extending the Induction Proof of the Fundamental Lemma for Reverse CHAD}
Next, we extend the fundamental lemma for reverse CHAD to apply to maps.
Assume the induction hypothesis that 
$\trm$ and $\trm[2]$ respect the logical relation.
We show that $\tMap(\var.\trm,\trm[2])$ does as well.
We assume all terms are well-typed.
Suppose that $(f,(g,h))\in P_{\Gamma}$.
We want to show that 
$$(f',(g',h'))\!\defeq\!
(f;\sem{\tMap(\var.\trm,\trm[2])},
(g; \sem{\Dsynrev[\vGamma]{\tMap(\var.\trm,\trm[2])}} ; \pi_1,{x}\mapsto {v}\mapsto
h(x)( \pi_2(\sem{\Dsynrev[\vGamma]{\tMap(\var.\trm,\trm[2])}}(g(x)))( v))))\in P_{\reals^n}.$$
By basic multivariate calculus, elements $(f',(g',h'))\in P_{\reals^n}$ are all of the form 
$f'=(f'_1,\ldots,f'_n)$, $g'=(g'_1,\ldots,g'_n)$ and $h'(x)(v)=h'_1(x)(\pi_1(v))+\cdots+h'_n(x)(\pi_n(v))$ where $(f'_i,(g'_i,h'_i))\in P_{\reals}$ for $i=1,\ldots,n$.
That is, we need to show (by linearity of $h(x)$) that 
\begin{align*}
  & (x\mapsto \sem{\trm}(f(x),\pi_i(\sem{\trm[2]}(f(x)))),
   (x\mapsto \pi_1(\sem{\Dsynrev[\vGamma,\var]{\trm}}(g(x),\pi_i(\pi_1(\sem{\Dsynrev[\vGamma]{\trm[2]}}(g(x)))))),\\
  &\qquad x\mapsto v\mapsto 
  h(x)( 
    \pi_1(\pi_2(\sem{\Dsynrev[\vGamma,\var]{\trm}}(g(x),\pi_i(\pi_1(\sem{\Dsynrev[\vGamma]{\trm[2]}}(g(x))))))(
        v
        ))\\
        &\qquad\;\;\;+\pi_2(\sem{\Dsynrev[\vGamma]{\trm[2]}}(g(x)))(0,\ldots,0,\pi_2(\pi_2(\sem{\Dsynrev[\vGamma,\var]{\trm}}(g(x),\pi_i(\pi_1(\sem{\Dsynrev[\vGamma]{\trm[2]}}(g(x))))))(
                    v
                    )),0,\ldots,0
        )
  )
     )   )\in P_{\reals}
\end{align*}
As $\trm$ respects the logical relation by our induction hypothesis, it is enough to show that
\begin{align*}
    & (x\mapsto (f(x),\pi_i(\sem{\trm[2]}(f(x)))),
     (x\mapsto (g(x),\pi_i(\pi_1(\sem{\Dsynrev[\vGamma]{\trm[2]}}(g(x))))),\\
    &\qquad x\mapsto v\mapsto 
    h(x)(\pi_1(v)+\pi_2(\sem{\Dsynrev[\vGamma]{\trm[2]}}(g(x)))(0,\ldots,0,\pi_2(v),0,\ldots,0      )        )
    ))\in P_{\Gamma,\var:\reals}
  \end{align*}
Since $(f,(g,h))\in P_\Gamma$ by assumption, we merely need to check the following, by definition of $P_{\Gamma,\var:\reals}$:
\begin{align*}
    & (x\mapsto \pi_i(\sem{\trm[2]}(f(x))),
     (x\mapsto \pi_i(\pi_1(\sem{\Dsynrev[\vGamma]{\trm[2]}}(g(x)))),\\
    &\qquad x\mapsto v\mapsto 
    h(x)(\pi_2(\sem{\Dsynrev[\vGamma]{\trm[2]}}(g(x)))(0,\ldots,0,v,0,\ldots,0      ))        
    ))\in P_{\reals}
  \end{align*}
By definition of $P_{\reals^n}$ and linearity of $h(x)$, it is enough to show that 
\begin{align*}
    & (x\mapsto \sem{\trm[2]}(f(x)),
     (x\mapsto \pi_1(\sem{\Dsynrev[\vGamma]{\trm[2]}}(g(x))),\\
    &\qquad x\mapsto v\mapsto 
    h(x)(\pi_2(\sem{\Dsynrev[\vGamma]{\trm[2]}}(g(x)))(v      ))        
    ))\in P_{\reals^n}
  \end{align*}
Since $\trm[2]$ respects the logical relation by our induction hypothesis,
it is enough to show that 
\begin{align*}
    (f,(g,h))\in P_{\Gamma},
  \end{align*}
  which holds by assumption. \clearpage
\section{Term simplifications in the implementation}\label{appx:impl-simpl}

Our implementation\footnote{As also mentioned in \S\ref{sec:implementation}, the implementation is available at \href{https://github.com/VMatthijs/CHAD}{https://github.com/VMatthijs/CHAD}.} of the AD macros described in \S\ref{sec:implementation} includes a number of simplification rules on the concrete target language whose only purpose is to make the produced code more readable and easier to follow (without changing its asymptotic runtime cost).
The motivation for these rules is to generate legible code when applying the AD macros to example programs.
In this appendix, we list these simplification rules explicitly and show the implementation's output on the four example programs in Figs. \ref{fig:first-order-ad-example} and \ref{fig:higher-order-ad-example} under these simplification rules.
We do this to illustrate that:

\begin{enumerate}
\item
    the simplifications given here are evidently meaning-preserving, given the $\beta\eta+$ rules in Figs. \ref{fig:beta-eta} and \ref{fig:minimal-linear-beta-eta}, and are standard rules that any optimizing compiler would apply;
\item
    the resulting simplified output of the AD macros from \S\ref{sec:ad-transformation} is indeed equivalent to the differentiated programs in Figs. \ref{fig:first-order-ad-example} and \ref{fig:higher-order-ad-example}.
\end{enumerate}

The simplification rules in question are given below in Table~\ref{tab:impl-simpls}.
In the implementation, these are (at the time of writing) implemented in the simplifier for the concrete target language.\footnote{\href{https://github.com/VMatthijs/CHAD/blob/eedd6b12f224ed28ef9ca8650718d901c2b5e6a3/src/Concrete/Simplify.hs}{https://github.com/VMatthijs/CHAD/blob/eedd6b12f224ed28ef9ca8650718d901c2b5e6a3/src/Concrete/Simplify.hs}}

\newcommand\apslet[2]{\mathbf{let}\ #1\ \mathbf{in}\ #2}
\newcommand\apsfst[1]{\mathbf{fst}\ #1}
\newcommand\apssnd[1]{\mathbf{snd}\ #1}
\newcommand\apsplus{\mathbf{plus}}
\newcommand\apszero{\mathbf{zero}}
\newcommand\apsmap{\mathbf{map}}
\newcommand\apssum{\mathbf{sum}}
\newcommand\apszip{\mathbf{zip}}
\newcommand\apstabhline{\hline}

\begin{table}[h!]
\begin{tabular}{@{}l|l@{\ \ $\rightsquigarrow$\ \ }l|l}
\textbf{Name} & \multicolumn{2}{l|}{\textbf{Rule}} & \textbf{Justification} \\\hline
lamAppLet
    & $(\lambda x.\ e)\ a$
    & $\apslet{x = a}e$
    & lambda subst., let subst. \\\apstabhline
letRotate
    & $\apslet{x = (\apslet{y = a}b)}e$
    & $\apslet{y = a}{\apslet{x = b}e}$
    & let substitution \\\apstabhline
letPairSplit
    & $\apslet{x = (a, b)}e$
    & $\apslet{x_1 = a}{\apslet{x_2 = b}{\subst{e}{\sfor{x}{\langle x_1,x_2 \rangle}}}}$
    & let substitution \\\apstabhline
letInline
    & $\apslet{x = a}e$
    & $\subst{e}{\sfor{x}{a}}$
    & let substitution \\
& \multicolumn{2}{l|}{\qquad (if $a$ is cheap or used at most once in e)} & \\\apstabhline
pairProj${}_1$
    & $\apsfst{\langle a, b \rangle}$
    & $a$
    & $\beta$ pair \\\apstabhline
pairProj${}_2$
    & $\apssnd{\langle a, b \rangle}$
    & $b$
    & $\beta$ pair \\\apstabhline
pairEta
    & $\langle \apsfst a, \apssnd a \rangle$
    & $a$
    & $\eta$ pair \\\apstabhline
letProj${}_1$
    & $\apsfst{(\apslet{x = a}e)}$
    & $\apslet{x = a}{\apsfst e}$
    & let substitution \\\apstabhline
letProj${}_2$
    & $\apssnd{(\apslet{x = a}e)}$
    & $\apslet{x = a}{\apssnd e}$
    & let substitution \\\apstabhline
plusZero${}_1$
    & $\apsplus\ \apszero\ a$
    & $a$
    & equational rule \\\apstabhline
plusZero${}_2$
    & $\apsplus\ a\ \apszero$
    & $a$
    & equational rule \\\apstabhline
plusPair
    & $\apsplus\ \langle a, b \rangle\ \langle c, d \rangle$
    & $\langle \apsplus\ a\ c, \apsplus\ b\ d \rangle$
    & equational rule \\\apstabhline
plusLet${}_1$
    & $\apsplus\ (\apslet{x = e}a)\ b$
    & $\apslet{x = e}{\apsplus\ a\ b}$
    & let substitution \\\apstabhline
plusLet${}_2$
    & $\apsplus\ a\ (\apslet{x = e}b)$
    & $\apslet{x = e}{\apsplus\ a\ b}$
    & let substitution \\\apstabhline
algebra
    & $0 * x$, $\zero * x$
    & $0$ \qquad (etc.)
    & basic algebra \\\apstabhline
letLamPairSplit
    & $\apslet{f = \lambda x.\ \langle a, b \rangle}e$
    & \makecell[l]{
        $\mathbf{let}\ f_1 = \lambda x.\ a\ \mathbf{in}\ \mathbf{let}\ f_2 = \lambda x.\ b$ \\
        $\mathbf{in}\ \subst{e}{\sfor{f}{\lambda x.\langle f_1\,x,f_2\,x \rangle}}$
    }
    & $\eta$ lambda, let subst. \\\apstabhline
mapPairSplit
    & $\apsmap\ (\lambda x.\ (b, c))\ a$
    & \makecell[l]{
        $\mathbf{let}\ a' = a$ \\[-0.2em]
        $\mathbf{in}\ \langle \apsmap\ (\lambda x.\ b)\ a', \apsmap\ (\lambda x.\ c)\ a' \rangle$
    }
    & equational rule \\\apstabhline
mapZero
    & $\apsmap\ (\lambda x.\ \apszero)\ a$
    & $\apszero$
    & equational rule \\\apstabhline
sumZip
    & $\apssum\ (\apszip\ a\ b)$
    & $\langle \apssum\ a, \apssum\ b \rangle$
    & equational rule \\\apstabhline
sumZero
    & $\apssum\ \apszero$
    & $\apszero$
    & equational rule \\\apstabhline
sumSingleton
    & $\apssum\ (\apsmap\ (\lambda x.\ [x])\ e)$
    & $e$
    & equational rule
\end{tabular}
\caption{\label{tab:impl-simpls}
    The simplification rules that aid legibility and are available in the CHAD implementation in Haskell on the concrete target language.
}
\end{table}

The last column in the table shows the justification for the simplification rule: ``let substitution'', ``$\beta$ pair'', ``$\eta$ pair'', ``lambda substitution'' and ``$\eta$ lambda'' refer to the corresponding rules in Fig.~\ref{fig:beta-eta}.
The equational rules are either from Fig.~\ref{fig:minimal-linear-beta-eta} in the case of $\trm + \zero = \trm$ and its symmetric variant, from the type-based translation rules in \S\S\ref{ssec:api} in the case of plus on pairs, or otherwise general laws that hold for $\mathbf{zero}$ (i.e.\ $\zero$) and/or the array combinators in question.

Note that all the rules preserve the time complexity of the program through careful sharing of values with let-bindings.
These let-bindings could increase work only if the value is used only once in the body of the $\mathbf{let}$ -- but in that case, the `letInline' rule will eliminate the let-binding anyway.

\subsection{First-order example programs}

The output of our implementation for the forward derivative of Fig.~\ref{fig:first-order-ad-example} (a) and the reverse derivative of Fig.~\ref{fig:first-order-ad-example} (b) is shown below in Fig.~\ref{fig:impl-fo-ad-output}.

\begin{figure}[h!]
    \begin{subfigure}[b]{0.45\linewidth}
\begin{lstlisting}[mathescape=true]
let two = 2.0
    y   = two * x
    z   = x * y in
$\langle\langle\langle$y, z$\rangle$, cos z$\rangle$,
 $\lambda$x'.
   let y' = two * snd x'
       z' = plus (x * y')
                 (y * snd x') in
   $\langle\langle$y', z'$\rangle$
  $\,\,$,(0.0 - sin z) * z'$\rangle\rangle$
\end{lstlisting}
    \caption{
        The implementation's forward AD output on Fig.~\ref{fig:first-order-ad-example} (a).
        Compare this with Fig.~\ref{fig:first-order-ad-example} (c).
    }
    \end{subfigure}\hspace{0.08\linewidth}
    \begin{subfigure}[b]{0.45\linewidth}
\begin{lstlisting}[mathescape=true]
let two = 2.0
    y   = x1 * x4 + two * x2
    w   = y * x3 + x4 in
$\langle$sin w
,$\lambda$v'.
   let w' = cos w * v'
       y' = x3 * w' in
   $\langle\langle\langle\langle$zero
   $\hspace{1.1em}$,x4 * y'$\rangle$
   $\hspace{0.73em}$,two * y'$\rangle$
   $\hspace{0.34em}$,y * w'$\rangle$
   ,plus w' (x1 * y')$\rangle\rangle$
\end{lstlisting}
    \caption{
        The implementation's reverse AD output on Fig.~\ref{fig:first-order-ad-example} (b).
        Compare this with Fig.~\ref{fig:first-order-ad-example} (d).
    }
    \end{subfigure}

    \caption{\label{fig:impl-fo-ad-output}
        Output of our Haskell implementation when executed on the first-order example programs in Fig.~\ref{fig:first-order-ad-example} (a) and (b).
    }
\end{figure}

The simplification rules listed above have already been applied (otherwise the output would indeed be much less readable).
The only change we made to the literal text output of the implementation is formatting and variable renaming.

For both programs, we note that in the implementation, environments are encoded using snoc-lists: that is, the environment $\Gamma = \var_1:\reals, \var_2:\reals, \var_3:\reals$ is represented as $((\epsilon, \var_1:\reals), \var_2:\reals), \var_3:\reals$.
Hence, for this $\Gamma$, $\Dsyn{\Gamma}_2$ would be represented as $((\lUnit \t* \creals) \t* \creals) \t* \creals$.
This affects the code in Fig.~\ref{fig:impl-fo-ad-output}, where in subfigure (a), the variable \lstinline{x'} has type $\Unit \t* \reals$ rather than the $\reals$ type it had in Fig.~\ref{fig:first-order-ad-example} (c).
Furthermore, the output in Fig.~\ref{fig:impl-fo-ad-output} (b), which is the cotangent (i.e.\ adjoint) of the environment, has type $(((\Unit \t* \reals) \t* \reals) \t* \reals) \t* \reals$, meaning that the `zero' term in the result has type $\Unit$ and is thus equal to $\tUnit$.

It should be evident to the reader that these outputs are equivalent to the programs given in Fig.~\ref{fig:first-order-ad-example} (c) and (d).

\subsection{Second-order example program Fig.~\ref{fig:higher-order-ad-example} (a)}

The implementation's forward derivative of Fig.~\ref{fig:higher-order-ad-example} (a) is shown below in Fig.~\ref{fig:impl-ho-ad-output-a}.
This version contains a let-bound function `g' that does not occur in the code of Fig.~\ref{fig:higher-order-ad-example} (c).
However, inlining this function in the two places where it is used does not increase work, because pair projection and the equational rules concerning `zero' and `plus' leave only one half of the `plus' expression in `g' at each invocation site of `g'.
A version with `g' manually inlined and simplified using the stated rules is shown in Fig.~\ref{fig:impl-ho-ad-output-a2}.
(Our automatic simplifier cannot yet prove that inlining `g' does not increase work, and hence keeps it let-bound.)

\begin{figure}
\begin{subfigure}[b]{0.85\textwidth}
\begin{lstlisting}[mathescape=true,basicstyle=\small]
let f' = $\lambda$z. $\lambda$d. plus (x * snd d) (z * snd (fst d))
    zs = vreplicate x
    g = $\lambda$z. let dfun = f' z in
                (x * z + 1.0
                ,$\lambda$denv. plus (snd (fst (fst denv)) z)
                             (dfun (zero, snd denv)))
in (vmap ($\lambda$z. fst (g z)) zs
   ,$\lambda$x'.
      let zs' = vreplicate (snd x')
       in plus
            (vzipWith ($\lambda$z. $\lambda$z'. snd (g z) (zero, z'))
                      zs zs')
            (vmap ($\lambda$z. snd (g z) (((x', $\lambda$z. f' z (x', zero)), zs'), zero))
                  zs))
\end{lstlisting}
\end{subfigure}

    \caption{\label{fig:impl-ho-ad-output-a}
        Output of our Haskell implementation of the forward AD macro when executed on Fig.~\ref{fig:higher-order-ad-example} (a).
        The implementation writes operations on scalar arrays ($\reals^n$) with a `v' prefix.
    }
\end{figure}

\begin{figure}
\begin{subfigure}[b]{0.6\textwidth}
\begin{lstlisting}[mathescape=true,basicstyle=\small]
let f' = $\lambda$z. $\lambda$v. plus (x * snd v) (z * snd (fst v))
    zs = vreplicate x in
(vmap ($\lambda$z. x * z + 1.0) zs
,$\lambda$x'.
   let zs' = vreplicate (snd x') in
       plus
         (vzipWith ($\lambda$z. $\lambda$z'. f' z (zero, z'))
                   zs zs')
         (vmap ($\lambda$z. f' z (x', zero)) zs))
\end{lstlisting}
\end{subfigure}

    \caption{\label{fig:impl-ho-ad-output-a2}
        Manually simplified code from Fig.~\ref{fig:impl-ho-ad-output-a}, as described in the text.
    }
\end{figure}

First, note that the type of the variable \lstinline{x'} here is $\Unit \t* \creals$ instead of $\creals$ because of the snoc-list representation of environments, as was discussed in the previous subsection.
Because of this, the expression \lstinline{snd x'} in Fig.~\ref{fig:impl-ho-ad-output-a2} is equivalent to the expression \lstinline{x'} in Fig.~\ref{fig:higher-order-ad-example} (c).
Knowing this, let us work out how the implementation produced this output code, and how it is equivalent to the code in Fig.~\ref{fig:higher-order-ad-example} (c).

For the purposes of this explanation, the most important component of the source of Fig.~\ref{fig:higher-order-ad-example} (a) is its first line: `\lstinline[mathescape=true]{let f = $\lambda$z. x * z + 1 in }...'.
Recall from the forward AD macro from \S\S\ref{ssec:forward-chad-defs}:
\begin{flalign*}
& \Dsyn[\vGamma]{\letin{x}{\trm}{\trm[2]}}
    \qquad \defeq \qquad
    \pletin{x}{x'}{\Dsyn[\vGamma]{\trm}}{
        \pletin{y}{y'}{\Dsyn[\vGamma,x]{\trm[2]}}{
            \tPair{y}
                  {\lfun\lvar\lapp{y'}{\tPair{\lvar}{\lapp{x'}{\lvar}}}}}} &
\end{flalign*}
Hence, the binding of $\trm$ is transformed to a binding of $\Dsyn[\vGamma]{\trm}$, which is then used in the body.
Since $\trm$ is the term `\lstinline[mathescape=true]{$\lambda$z. x * z + 1}' here, and since the macro rule for lambda abstraction is as follows:
\begin{flalign*}
& \Dsyn[\vGamma]{\fun x \trm} \qquad\qquad\quad \defeq \qquad
    \letin{y}{\fun x\Dsyn[\vGamma,x]{\trm}}{
        \tPair{\fun x
                  \pletin{z}{z'}{y\,x}
                     {\tPair{z}{\lfun\lvar\lapp{z'}{\tPair{\zero}{\lvar}}}}}
              {\lfun\lvar\fun x\lapp{(\tSnd(y\,x))}{\tPair{\lvar}{\zero}}}
    } &
\end{flalign*}
we get the following result for $\Dsyn[\vGamma]{\fun z x * z + 1}$ with $\Gamma = \epsilon, x : \reals$:
\begin{align*}
\Dsyn[\epsilon,x]{\fun z x * z + 1} = \;
&\mathbf{let}\ y = \fun z \langle x * z + 1, \lfun{\langle\langle\tUnit,x'\rangle,z'\rangle} x * z' + z * x'\rangle \\[-0.4em]
&\mathbf{in}\ \langle\fun z \mathbf{let}\ \langle u, u'\rangle = y\ z\ \mathbf{in}\ \langle u, \lfun \lvar \lapp{u'}{\langle\zero, \lvar\rangle}\rangle,
    \lfun \lvar \fun z \lapp{(\tSnd (y\ z))}{\langle \lvar, \zero\rangle}\rangle
\end{align*}
This is the term that appears on the right-hand side of a let-binding in the forward AD transformed version of the code from Fig.~\ref{fig:higher-order-ad-example} (a).
Inlining of $y$ and some further simplification yields:
\begin{align*}
\Dsyn[\epsilon,x]{\fun z x * z + 1} = \;
&\langle\fun z \langle x * z + 1, \lfun \lvar x * \lvar\rangle,
    \lfun \lvar \fun z z * \tSnd \lvar\rangle
\end{align*}
where we recognize \lstinline{f} and \lstinline{f'} from Fig.~\ref{fig:higher-order-ad-example} (c).

The implementation instead simplifies $\Dsyn[\epsilon,x]{\fun z x * z + 1}$ by splitting the lambda bound to $y$ using `letLamPairSplit'.
The second of the resulting two lambda functions is $\fun z \lfun{\langle\langle\tUnit, x'\rangle, z'\rangle} x * z' + z * x'$, or by desugaring pattern matching, $\fun z \lfun \lvar x * \tSnd \lvar + z * \tSnd (\tFst \lvar)$.
We recognize this expression as the right-hand side of the \lstinline{f'} binding in Fig.~\ref{fig:impl-ho-ad-output-a2}.

We leave it to the reader to show that the code in Fig.~\ref{fig:impl-ho-ad-output-a2} is equivalent to the code in Fig.~\ref{fig:higher-order-ad-example} (c) under forward and reverse let-substitution.

\subsection{Second-order example program Fig.~\ref{fig:higher-order-ad-example} (b)}

When the implementation performs reverse AD on the code in Fig.~\ref{fig:higher-order-ad-example} (b) and simplifies the result using the simplification rules in Table~\ref{tab:impl-simpls}, the result is the code shown below in Fig.~\ref{fig:impl-ho-ad-output-b}.

\begin{figure}[h!]
\begin{subfigure}[b]{0.66\textwidth}
\begin{lstlisting}[mathescape=true]
$\langle$vsum (vmap ($\lambda$x2i. x1 * x2i) x2)
,$\lambda$w'.
   let ys' = vreplicate w' in
   $\langle\langle$zero, sum (map ($\lambda$p. evalOp EScalProd p)
                  $\hphantom{\langle\langle}$(zip (toList x2) (toList ys')))$\rangle$
  $\;\:$,vzipWith ($\lambda$x2i. $\lambda$y'. x1 * y') x2 ys'$\rangle\rangle$
\end{lstlisting}
\end{subfigure}

    \caption{\label{fig:impl-ho-ad-output-b}
        Output of our Haskell implementation of the reverse AD macro when executed on Fig.~\ref{fig:higher-order-ad-example} (b).
        The implementation writes operations on scalar arrays ($\reals^n$) with a `v' prefix; list operations are written without a prefix.
        Contrast this with Fig.~\ref{fig:higher-order-ad-example}, where all arrays are implemented as lists and hence there are no pure-array operations that would have been written using a `v' prefix.
    }
\end{figure}

First, note the \lstinline{evalOp EScalProd}.
Since scalar multiplication is implemented as an operation ($\op$), it takes a pair of the two scalars to multiply rather than two separate arguments.
Due to the application of the `pairEta' simplification rule from Table~\ref{tab:impl-simpls}, the argument to the multiplication was reduced from \lstinline[mathescape=true]{$\langle$fst p, snd p$\rangle$} to just `p', preventing the pretty-printer from showing \lstinline{fst p * snd p}; instead, this becomes a bare operation application to the argument `p'.
In the code in Fig.~\ref{fig:higher-order-ad-example} (d), this lambda is the argument to the `map' in the cotangents block, meaning that `p' stands for \lstinline[mathescape=true]{$\langle$x2i, y'$\rangle$}.

In this example, a copower structure is created because the code to be differentiated using reverse AD uses a function abstraction.
Here, this copower is interpreted using lists as described in \S\S\ref{ssec:api}.
The `toList' function converts an \emph{array} of scalars (i.e.\ a value of type $\reals^n$) to a \emph{list} of scalars.
The code in Fig.~\ref{fig:higher-order-ad-example} (d) goes further and interprets all arrays as lists, effectively removing the distinction between arrays originating from arrays in the source program and lists originating from copower values.
Once one recognizes the snoc-list representation of environments, inlining some let-bindings in Fig.~\ref{fig:higher-order-ad-example} (d) suffices to arrive at code equivalent to the code shown here.

\end{document}